\theoremstyle{plain}
\newtheorem{thm}{}[section]
\newtheorem{lemma}[thm]{Lemma}
\newtheorem{proposition}[thm]{Proposition} 
\newtheorem{theorem}[thm]{Theorem}
\newtheorem{corollary}[thm]{Corollary}
\theoremstyle{remark}
\newtheorem{claim}[thm]{Claim}
 \newtheorem{remark}[thm]{Remark} 
 \newtheorem{example}[thm]{Example}
\newenvironment{claimproof}[1][\proofname]
{\proof[#1]}
{\endproof}
\newcommand{\coloneqq}{\mathrel{\mathop:}\mathrel{\mkern-1.2mu}=} 
\newcommand{\struct}[1]{\mathfrak{#1}}    
\newcommand{\CSP}{\ensuremath{\mathrm{CSP}}\xspace}    
\newcommand{\age}{\ensuremath{\mathit{age}}\xspace}   
\newcommand{\fm}{\ensuremath{\mathit{fm}}\xspace}  
\newcommand{\efm}{\ensuremath{\mathit{expfm}}\xspace} 
\newcommand{\colours}{\ensuremath{n\text{-}\mathit{colours}}}   
\newcommand{\acts}{\curvearrowright} 
\newcommand{\hh}{\ensuremath{\mathit{ht}}\xspace}
\newcommand{\lh}{\ensuremath{\mathit{lh}}\xspace}
\newcommand{\wh}{\ensuremath{\mathit{wd}}\xspace}
\newcommand{\ar}{\ensuremath{\mathit{ar}}\xspace}   
\newcommand{\MMSNP}{\ensuremath{\mathrm{MMSNP}}\xspace}  
\newcommand{\GMSNP}{\ensuremath{\mathrm{GMSNP}}\xspace}  
\newcommand{\cplmt}[1]{\smash{\overline{#1}}} 
\newcommand{\pre}[1]{#1\smash{^{-1}}}   
\newcommand{\prexists}[1]{#1\smash{^{-1\exists}}}   
\DeclareMathOperator{\mo}{mod} 
\newcommand{\Aut}{\ensuremath{\mathrm{Aut}}\xspace}   
\newcommand{\NP}{{\textup{\textsf{NP}}}\xspace}
\newcommand{\TWONEXPTIME}{{\textup{\textsf{2NEXPTIME}}}\xspace} 
\newcommand{\TWOEXPTIME}{{\textup{\textsf{2EXPTIME}}}\xspace}  
\newcommand{\Forb}{\ensuremath{\mathrm{Forb}}\xspace}   
\begin{document}

\title[]{The Golden Path to Guarded Monotone Strict NP}

\author{Alexey Barsukov} 
\author{Michael Pinsker}
\author{Jakub Rydval}

\address{Faculty of Mathematics and Physics, Charles University, Prague, Czechia}
\email{alexey.barsukov@matfyz.cuni.cz}  

\address{Institut f\"{u}r Diskrete Mathematik und Geometrie, FG Algebra, TU Wien, Austria}
\email{$\{$michael.pinsker,jakub.rydval$\}$@tuwien.ac.at}

 \begin{abstract}   

 Guarded Monotone Strict NP (GMSNP) extends Monotone Monadic Strict NP (MMSNP) by guarded existentially quantified predicates of arbitrary arities.
 We prove that the containment and the FO-rewritability problems for GMSNP are decidable, thereby 
 settling an open question of Bienvenu, ten Cate, Lutz, and Wolter, later restated by Bourhis and Lutz.
 Our proof also comes with a \TWONEXPTIME upper bound on the complexity of the two problems, which matches the lower bounds for MMSNP due to Bourhis and Lutz. 
 
To obtain these results, we significantly improve the state of knowledge of the model-theoretic properties of GMSNP.
Bodirsky, Kn\"{a}uer, and Starke previously showed that every GMSNP sentence defines a finite union of CSPs of $\omega$-categorical structures. 
We show that these structures can be used to obtain a reduction from the containment problem for GMSNP to the much simpler problem of testing the existence of a recolouring; a careful analysis of this yields said upper bound for containment.
The upper bound for FO-rewritability is subsequently obtained by an application of several standard techniques from the theory of infinite-domain CSPs.

As our secondary contribution, we refine the construction of Bodirsky, Kn\"{a}uer, and Starke by adding a restricted form of homogeneity to the properties of these structures, making the logic amenable to future complexity classifications for query evaluation using techniques developed for infinite-domain CSPs.
\end{abstract}  

 \thanks{\emph{Michael Pinsker and Jakub Rydval}: This research was funded in whole or in part by the Austrian Science Fund (FWF) [I 5948, ESP 1571225]. For the purpose of Open Access, the authors have applied a CC BY public copyright licence to any Author Accepted Manuscript (AAM) version arising from this submission. 
 \\ \emph{Alexey Barsukov and Michael Pinsker}: This research is  funded by the European Union (ERC, POCOCOP, 101071674). Views and opinions expressed are however those of the author(s) only and do not necessarily reflect those of the European Union or the European Research Council Executive Agency. Neither the European Union nor the granting authority can be held responsible for them.}
%

\maketitle


\section{Introduction}\label{section:introduction}
\emph{Guarded Monotone Strict NP} ($\GMSNP$) is a syntactic fragment of \emph{Existential Second-Order} (ESO) logic 
describing problems of the form
\begin{center} \vspace{0.75em}
    {\it \parbox{0.8\textwidth}{Is there a colouring of the relational tuples in a given finite relational structure avoiding a fixed set of finitely many forbidden colour-patterns?}} \vspace{0.75em}
\end{center} 
It was introduced in~\cite{bienvenu2014} as a generalisation of  \emph{Monotone Monadic Strict NP without inequality} ($\MMSNP$) of Feder and Vardi~\cite{federvardi1998}, which captures similar problems concerning colourings of vertices instead of relational tuples.
Its true origins, however, go back to an earlier work of Madelaine~\cite{madelaine2009universal}, who was studying GMSNP under the names MMSNP$_2$ and FPP (standing for \emph{Forbidden Pattern Problems}).
\subsection{A brief history of GMSNP} \label{section:brief_history}

The logic class \emph{Strict NP} (SNP) consists of all problems expressible by an ESO sentence with a universal first-order part.
It was originally presented by Kolaitis and Vardi in 1987~\cite{kolaitis1987decision} as an expressive syntactic fragment of ESO which has a 0-1 law and where the associated decision problem is decidable (as opposed to  full ESO).
The study of SNP was later picked up by Feder and Vardi~\cite{federvardi1998} who showed that it has the same computational power as all of NP (it is \emph{NP-rich}).
They then presented MMSNP as a suitable candidate for a smaller yet still very expressive class which is not NP-rich (unless P=NP).
Feder and Vardi~\cite{federvardi1998} showed that every problem described by an MMSNP sentence is equivalent under polynomial-time randomised reductions to the \emph{Constraint Satisfaction Problem} (CSP) of a structure with a finite domain, i.e., the problem of testing whether a given conjunctive query is satisfiable in that structure.
They moreover conjectured that every finite-domain CSP is in \textsf{P} or \textsf{NP}-complete. 

 A standard example of a finite-domain CSP is \textsc{3-colouring}, which is the CSP of the complete graph on 3 vertices. 
Outside of the CSP framework, one would rather introduce \textsc{3-colouring} as a forbidden pattern problem, asking whether a given graph can be vertex-3-coloured while avoiding monochromatic edges (see Figure~\ref{fig:3colouring}).
To see that the two definitions coincide, one must take the (equivalent) homomorphism perspective on CSPs, where an instance is viewed as a finite structure.
Then the CSP of a structure is the problem of testing whether a given instance \emph{homomorphically} maps to that structure.
Using the homomorphism perspective, it is not hard to show that, in fact, every finite-domain CSP is definable in MMSNP. 
\begin{figure}[ht]
     \centering
 \includegraphics[width=0.5\textwidth]{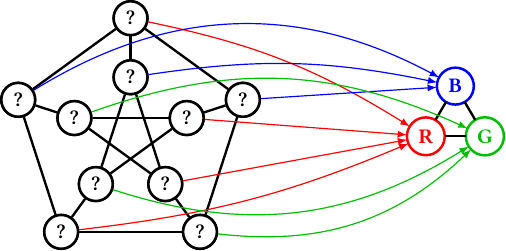} 
     \caption{A vertex-3-colouring of the Petersen graph avoiding monochromatic edges. This colouring is simultaneously a homomorphism from the Petersen graph to the complete graph on 3 vertices.}
     \label{fig:3colouring}
 \end{figure}

In 2014, Bienvenu, ten~Cate, Lutz, and Wolter~\cite{bienvenu2014} discovered a way to translate complexity classification results for certain well-behaved fragments of the logic SNP~\cite{kolaitis1987decision,papadimitriou1988optimization} to analogous statements about the complexity of \emph{Ontology-Mediated Queries} (OMQs). 
In the special case of MMSNP, they showed that there is a dichotomy between \textsf{P} and 
\textsf{NP}-completeness if and only if there is a dichotomy between \textsf{P} and 
\textsf{coNP}-completeness for the evaluation of \emph{Unions of Conjunctive Queries} (UCQs) mediated by ontologies specified in the description logic $\mathcal{ALC}$ (see~\cite{baader2003description_corrected} for more details).  
Similarly, a \textsf{P}/\textsf{coNP}-complete dichotomy for UCQs mediated by ontologies specified in \emph{Guarded First-Order} logic (GFO), another well-behaved fragment of FO, would translate to a \textsf{P}/\textsf{NP}-complete dichotomy for a strictly more expressive fragment:

\begin{center} \smallskip
    \emph{Guarded Monotone Strict NP} (GMSNP)
\end{center}  \smallskip

The Feder-Vardi conjecture was confirmed in 2017 by Bulatov and Zhuk~\cite{bulatov2017,zhuk2020} in what can be viewed as the culmination of decades of work on connections between algorithms for CSPs and universal algebra.
This implied dichotomies for MMSNP and, in turn, for the above-mentioned queries up to randomised reductions. 
Although the originally randomised reduction of MMSNP to CSPs by Feder and Vardi was derandomised by Kun in 2013~\cite{kun2013}, it still relied on a probabilistic proof of the existence of expanders of large girth.  
The first truly constructive proof of the dichotomy was obtained in 2018 by Bodirsky, Madelaine, and Mottet~\cite{bodirsky2018_article} using methods from infinite-domain constraint satisfaction, i.e., CSPs of infinite structures.  
Roughly speaking, Bodirsky, Madelaine, and Mottet~\cite{bodirsky2018_article} start with the observation from~\cite{BodDalJournal} that every MMSNP sentence defines a finite union of CSPs of (infinite) \emph{$\omega$-categorical} structures. 
 They then gradually upgrade these structures until a state is reached where they can show that their CSPs admit a uniform polynomial-time many-one reduction to a finite-domain CSP which is solvable in polynomial-time unless one of the $\omega$-categorical structures can simulate \textsc{3-colouring} via  a \emph{pp-construction}~\cite{wonderland} (in which case the sentence defines a NP-complete problem).

The question whether GMSNP also exhibits a \textsf{P}/\textsf{NP}-complete dichotomy was left open in~\cite{bienvenu2014}.
It was observed by Bodirsky, Kn\"{a}uer, and Starke~\cite{bodirsky_asnp} that similarly to MMSNP, the logic GMSNP can be studied using methods from infinite-domain constraint satisfaction.
Using this approach, the existence of a dichotomy has already been  confirmed in some specific cases, e.g., for graph orientations with forbidden tournaments~\cite{bodirsky2023forbidden,bitter2024completion, feller2024algebraic}. 
However, at the moment there is no clear road map for how exactly the question of the existence of a \textsf{P}/\textsf{NP}-complete dichotomy for GMSNP should be approached: it is apparent that the increase in the arity of the existentially quantified predicates introduces new obstacles which were not present with MMSNP. 

Hope is sparked by a recent result of Guzm\'{a}n-Pro~\cite{guzman2024gmsnp} who used the sparse incomparability lemma~\cite{kun2013}  to show that the homomorphism-sandwiching method~\cite{brakensiek2019algorithmic} for obtaining hardness reductions from finite-domain \emph{promise} CSPs fails for CSPs definable in GMSNP. 
This is in fact a good plausibility argument for the applicability of  methods from infinite-domain constraint satisfaction to GMSNP, because several NP-hardness results for finite-domain promise CSPs~(\cite[Prop.~10.1]{pcsp_bible} and~\cite[Thm.~2.2]{wrochna_zivny2020}) are inconsistent with the algebraic dichotomy conjecture for infinite-domain CSPs~\cite{barto_pinsker_journal}.

\subsection{The logic GMSNP} \label{sec:intro_one}
A GMSNP sentence  over a finite relational signature $\tau$ is an ESO sentence $\Phi$  of the form $\exists X_1,\dots, X_n \forall \bar{x}\ldotp \phi(\bar{x}),$
for a CNF-formula $\phi$ over the relational signature $\tau\cup \{X_1,\dots, X_n\}$, if it satisfies the following two conditions called the \emph{monotonicity} and the \emph{guarding} axioms, respectively: 
\begin{itemize} 
    \item    $\tau$-atoms and equalities may only appear negatively in clauses of $\phi$, i.e., the only positive atoms in clauses of $\phi$ are of the form $X_i(\bar{x})$ for some $i\in [n]$;     
    \item  for each clause $\psi$ of $\phi$ and every positive atom $X_i(\bar{x})$ in $\psi$ there exists a negative atom $\neg R(\bar{y})$ in $\psi$ with $R\in \tau \cup \{X_1,\dots, X_n\}$ such that $\bar{x}\subseteq \bar{y}$.   
\end{itemize}  
Unless stated otherwise, we will denote the set $\{X_1,\dots, X_n\}$ of the second-order variables in $\Phi$ by $\sigma$; we refer to $\sigma$ as the \emph{existential} symbols and $\tau$ as the \emph{input} symbols in $\Phi$.
We also refer to the first-order $(\tau\cup \sigma)$-sentence $\forall \bar{x}\ldotp \phi(\bar{x})$ as the \emph{first-order part} of $\Phi$.
 By a simple application of De Morgan's laws, we can rewrite $\phi$ as a formula of the form $\bigwedge_i \neg \phi_i$, where each $\phi_i$ is a conjunction of $\tau$-atoms and possibly negated $\sigma$-atoms. 
We refer to the conjuncts $\phi_i$ as the \emph{forbidden patterns} of $\Phi$.

Every $\tau$-sentence $\Phi$ defines the corresponding \emph{model-checking problem}, where the task is to decide whether a given finite $\tau$-structure $\struct{A}$ satisfies $\Phi$.
A typical example of a problem definable by a GMSNP sentence is \textsc{Edge-No-Mono-Tri}, where the task is to decide whether a given undirected simple graph, say with an edge relation denoted by $E$, can be edge-2-coloured while avoiding monochromatic triangles~\cite{garey1979computers}:   
\begin{align}\label{ex:gmsnp_introduction} 
    \exists R, B\, \forall x,y,z &\big(  \neg E(x,y) \vee   \neg  E(y,z) \vee \neg E(z,x) \vee \neg B(x,y) \vee   \neg  B(y,z) \vee \neg B(z,x)   
    \big) \nonumber \\
     {} \wedge \,  &\big(   \neg  E(x,y) \vee  \neg E(y,z) \vee \neg E(z,x) \vee  \neg R(x,y) \vee  \neg R(y,z) \vee \neg  R(z,x)   
    \big)   \\ 
   {} \wedge \,  & \big( \neg E(x,y) \vee \neg B(x,y) \vee \neg R(x,y) \big) \wedge  \big( \neg E(x,y) \vee B(x,y) \vee R(x,y) \big).\nonumber 
\end{align} 
For simplification purposes, we omit clauses requiring the symmetricity of colours on edges.
See Figure~\ref{fig:nomonotri} below for an illustration.
The forbidden patterns of this problem are the two monochromatic triangles, overlapping colours, and uncoloured edges.
\begin{figure}[ht]
     \centering
 \includegraphics[width=0.25\textwidth]{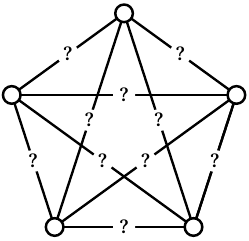}
      \qquad 
      \includegraphics[width=0.25\textwidth]{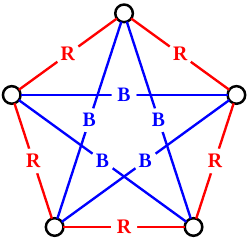}
     \caption{An edge-2-colouring of the complete graph on $5$ vertices avoiding monochromatic triangles (outer edges red, inner blue). Note that there is no vertex-2-colouring with this property.}
     \label{fig:nomonotri}
 \end{figure}

In MMSNP, we can instead formulate \textsc{Vertex-No-Mono-Tri}, where the task is to find a vertex-2-colouring instead of an edge-2-colouring (see Figure~\ref{fig:nomonotri_vertex}):
\begin{align} \label{ex:gmsnp_introduction2}
    \exists R, B\, \forall x,y,z 
   & \big(  \neg E(x,y) \vee   \neg  E(y,z) \vee \neg E(z,x) \vee \neg B(x) \vee   \neg  B(y) \vee \neg B(z)   
    \big) \nonumber  \\
   {} \wedge \, & \big(  \neg E(x,y) \vee   \neg  E(y,z) \vee \neg E(z,x) \vee \neg R(x) \vee   \neg  R(y) \vee \neg R(z)   
    \big)   \\ 
  {} \wedge \,  & \big( \neg B(x) \vee \neg R(x) \big) \wedge   \big(    B(x) \vee   R(x) \big). \nonumber
\end{align} 
\begin{figure}[ht]
     \centering
      \includegraphics[width=0.45\textwidth]{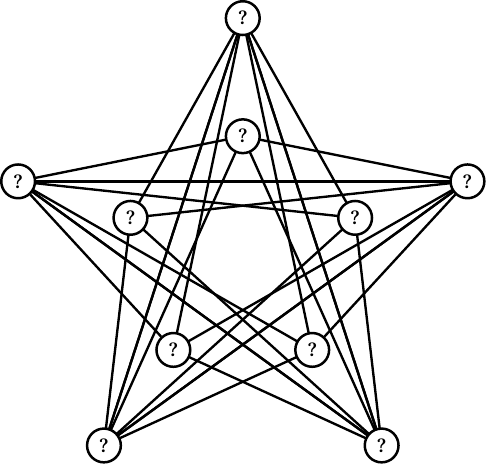}
      \qquad 
      \includegraphics[width=0.45\textwidth]{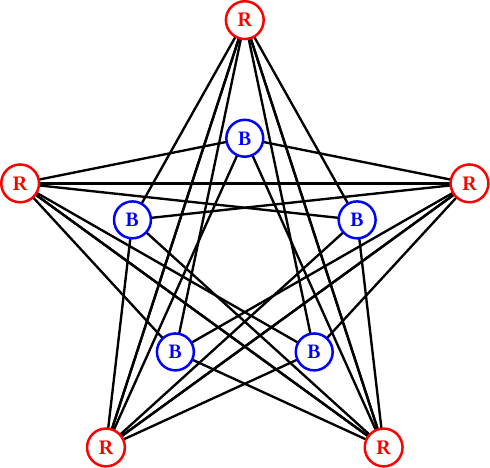}
     \caption{A vertex-2-colouring of the complement of the Petersen graph---the line graph of the complete graph on $5$ vertices---avoiding monochromatic triangles (outer vertices red, inner blue).}
     \label{fig:nomonotri_vertex}
 \end{figure}
Strictly speaking, $\GMSNP$ does \emph{not} fully contain $\MMSNP$ because the latter does not require the \emph{guarding} axiom (it instead requires the \emph{monadicity} axiom).
However,  containment does hold up to a small syntactic transformation, 
modulo which GMSNP is in fact strictly more expressive than MMSNP~\cite{bienvenu2014}.   
Namely, every MMSNP sentence $\Phi$ can be converted into a GMSNP sentence $\Gamma(\Phi)$ by expanding the base signature $\tau$ by a fresh unary ``domain'' symbol $D$ and adding negative atoms $\neg D(v)$ to each clause, for every universally quantified first-order variable $v$ appearing in that clause.
Given a $\tau$-structure $\struct{A}$, we have that $\Phi$ holds in $\struct{A}$ if and only if $\Gamma(\Phi)$ holds in $\struct{A}^+$, where $\struct{A}^+$ is the $(\tau\cup \{D\})$-structure  obtained from $\struct{A}$ by including every domain element in the relation interpreting $D$. 
On the other hand, given a $(\tau\cup \{D\})$-structure $\struct{A}$, we have that $\Gamma(\Phi)$ holds in $\struct{A}$ if and only if $\Phi$ holds in $\struct{A}^{-}$, where $\struct{A}^{-}$ is the $\tau$-structure obtained from $\struct{A}$ by forgetting $D$ and removing all domain elements which were not contained in the relation interpreting $D$. 
For example, the last two clauses in~\eqref{ex:gmsnp_introduction2} do not satisfy the guarding axiom, but we can obtain a polynomial-time equivalent  GMSNP sentence by padding~\eqref{ex:gmsnp_introduction2} with the predicate $D\in \tau$:
  \begin{align*}
     \exists R, B\,   \forall x,y,z     \big(     \neg D(x) \vee \neg D(y) \vee \neg D(z) \vee \neg E(x,y) \vee   \neg  E(y,z) \vee \neg E(z,x) \\ {} \vee \neg B(x) \vee   \neg  B(y) \vee \neg B(z)   
    \big)  \\
    \!\!\!\! {} \wedge   \big(    \neg D(x) \vee \neg D(y) \vee \neg D(z) \vee  \neg E(x,y) \vee   \neg  E(y,z) \vee \neg E(z,x)  \\ {} \vee \neg R(x) \vee   \neg  R(y) \vee \neg R(z)   
    \big)\\
  \!\!\!\! {} \wedge   \big( \neg D(x) \vee B(x) \vee R(x) \big) \wedge   \big( \neg D(x) \vee  \neg B(x) \vee \neg  R(x) \big). 
\end{align*}

\subsection{Containment \& FO-rewritability} 
Besides evaluation/model-checking,  two other computational decision problems for the aforementioned Ontology-Mediated Queries and the associated fragments of SNP were studied in~\cite{bienvenu2014}: containment and FO/Datalog-rewritability.
In the case of MMSNP, the complexity of containment and FO/Datalog-rewritability is known: all three problems are \TWONEXPTIME-complete~\cite{Collapses, mottet2021symmetries,bouhris_lutz2016}.
While rewritability questions for MMSNP seem to be intimately linked to the reduction from MMSNP to finite-domain CSPs~\cite{Collapses, mottet2021symmetries} (in particular in the case of Datalog), for the containment problem, the link seems to be much weaker~\cite{bouhris_lutz2016,bodirsky2018_article}.
This makes the study of the containment problem for GMSNP the perfect starting point for the development of tools that will be useful for the topics of rewritability and model-checking.

For reasons that will be explained later, it is important that we formulate both the containment and the FO-rewritability problems in the more general setting of \emph{SNP sentences}, which are obtained by leaving out the monotonicity and the guarding axioms. 
For an SNP sentence $\Phi$, we denote the class of all finite models of $\Phi$ by $\fm(\Phi)$.
Formally, the \emph{containment problem} can be stated as follows:

 \medskip 
 
\noindent  \underline{\textbf{Containment for SNP}}   \\
 \setlength{\tabcolsep}{0pt}    \begin{tabular}{ll}  
 \noindent INSTANCE: A pair $(\Phi_1, \Phi_2)$ of SNP sentences in a common signature $\tau$.\\
      \noindent QUESTION: Is it true that $\fm(\Phi_1) \subseteq \fm(\Phi_2)$?  
\end{tabular}  

\medskip  

The containment problem for SNP is undecidable in general; in fact, it is already undecidable for the \emph{Datalog} fragment~\cite{shmueli1993equivalence}, which limits the maximum number of non-negated atoms in clauses to one.  
The question whether containment is decidable for GMSNP was left open in~\cite{bienvenu2014,bouhris_lutz2016}; the only known result regarding the complexity of this problem is the \TWONEXPTIME lower bound of Bourhis and Lutz for the containment within MMSNP.
To see that this lower bound also applies for GMSNP, note that the map $\Gamma$ introduced in Section~\ref{sec:intro_one} is not only a polynomial-time reduction from MMSNP to GMSNP, but it also acts as a fully faithful covariant functor with respect to containment.
More specifically, for a pair $(\Phi_1, \Phi_2)$ of MMSNP $\tau$-sentences, we have 
\[ 
\fm(\Phi_1) \subseteq \fm(\Phi_2) \quad  \text{if and only if} \quad  \fm(\Gamma(\Phi_1)) \subseteq \fm(\Gamma(\Phi_2)).  
\] 
We proceed with a general formulation of the \emph{FO-rewritability} problem:
 
 \medskip 
 
\noindent  \underline{\textbf{FO-rewritability for SNP}}   \\
 \setlength{\tabcolsep}{0pt}    \begin{tabular}{ll}  
 \noindent INSTANCE: An SNP sentence $\Phi$.\\
      \noindent QUESTION: Is there a first-order sentence $\Psi$ such that $\fm(\Phi ) = \fm(\Psi)$?  
\end{tabular}  

\medskip   

 As with containment, FO-rewritability is undecidable already for the Datalog fragment of SNP~\cite{gaifman1993undecidable,hillebrand1995undecidable}.
For MMSNP, decidability of FO-rewritability was only proved quite recently~\cite{feier2019rewritability};  in contrast, the  decidability of containment for MMSNP was long known since the seminal paper of Feder and Vardi~\cite{federvardi1998}.
The decidability of FO-rewritability for GMSNP was left open in~\cite{bienvenu2014}; as with containment, a \TWONEXPTIME-lower bound is inherited from MMSNP. 

\begin{theorem}[Thms.~3 and~5 in \cite{bouhris_lutz2016}]\label{thm:lowerbound} The containment and the FO-rewritability problems for GMSNP are \TWONEXPTIME-hard.
\end{theorem}

\subsection{Primary contribution} \label{sec:contributions}

We confirm that the containment and the FO-rewritability problems for GMSNP are decidable.
Our proofs of decidability come with an upper bound on the complexity matching the lower bounds in Theorems~\ref{thm:lowerbound}.

\begin{theorem}\label{thm:2NEXPTIME_for_GMSNP} \vphantom{}
\begin{enumerate}
\item \label{item:containment} The containment problem for GMSNP is in \TWONEXPTIME.
    \item \label{item:FO_rewritability} The FO-rewritability problem for GMSNP is in \TWONEXPTIME.
\end{enumerate}
 \end{theorem}
\begin{corollary} The containment and the FO-rewritability problems for GMSNP are \TWONEXPTIME-complete.
\end{corollary}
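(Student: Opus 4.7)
The plan is to derive the corollary as an immediate consequence of the two matching bounds already in place in the excerpt, so the proof reduces to assembling them rather than performing any fresh argument.

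For membership in \TWONEXPTIME, I would simply invoke Theorem~\ref{thm:2NEXPTIME_for_GMSNP}, which is announced as the main technical contribution of the paper. Nothing further is needed on the upper-bound side at this point, since the whole subsequent development of the article is devoted to proving that statement in full.

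For \TWONEXPTIME-hardness, I would invoke Theorem~\ref{th:lowerbound}, which is the lower bound of Bourhis and Lutz originally stated for \MMSNP-containment and transferred to \GMSNP\ via the translation $\Gamma$ introduced in Section~\ref{sec:intro_one}. The key point, already argued in the excerpt, is that $\Gamma$ is a polynomial-time computable map on sentences that acts as a fully faithful covariant functor with respect to containment: for every pair $(\Phi_1,\Phi_2)$ of \MMSNP\ $\tau$-sentences one has $\fm(\Phi_1) \subseteq \fm(\Phi_2)$ if and only if $\fm(\Gamma(\Phi_1)) \subseteq \fm(\Gamma(\Phi_2))$. Hence the $\MMSNP$-hardness of containment, due to Bourhis and Lutz, reduces in polynomial time to \GMSNP-containment, yielding \TWONEXPTIME-hardness of the latter.

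Combining the two bounds gives \TWONEXPTIME-completeness, which is the content of the corollary. There is no genuine obstacle here: the entire difficulty lies in establishing Theorem~\ref{thm:2NEXPTIME_for_GMSNP} (whose proof occupies the body of the paper) and in the prior work underlying Theorem~\ref{th:lowerbound}; once both are in hand, the corollary is a one-line deduction.
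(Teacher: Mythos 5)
Your proposal is correct and matches the paper exactly: the corollary is the immediate combination of the upper bound in Theorem~\ref{thm:2NEXPTIME_for_GMSNP} with the lower bound in Theorem~\ref{th:lowerbound}, the latter transferred from MMSNP to GMSNP via the polynomial-time, containment-preserving map $\Gamma$, precisely as the paper argues just before stating Theorem~\ref{th:lowerbound}. No further argument is needed.
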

The proof of Theorem~\ref{thm:2NEXPTIME_for_GMSNP} is based on the recolouring method, first used implicitly by Feder and Vardi~\cite[Theorem~7]{federvardi1998} in their proof of decidability of containment for MMSNP.\footnote{Recolourings were first mentioned in the work of Madelaine and Stewart~\cite{madelaine2007constraint} (see also~\cite{madelaine2010containment}).}
Roughly said, the strategy is to modify the input sentences by adding new clauses and existentially quantified symbols until a state is reached where the containment problem becomes equivalent to the much simpler problem of testing the existence of a mapping (a recolouring) between the existential symbols of the two sentences satisfying a certain condition.
In contrast to MMSNP, in order to achieve the promised \TWONEXPTIME upper bound on the complexity of containment for GMSNP using the recolouring method, we have to consider a more general framework (SNP) and also significantly generalise the notion of a recolouring. 

The first preprocessing step, common across most approaches to the containment problem for MMSNP, is a reduction to the connected case~\cite{bodirsky2018_article,bouhris_lutz2016,federvardi1998,madelaine2010containment}.
Intuitively, an MMSNP sentence is \emph{connected} if its forbidden colour-patterns are connected in the graph-theoretic sense.
This reduction can be generalised to GMSNP, as shown in~\cite{bodirsky_asnp}.  
Bodirsky, Kn\"{a}uer, and Starke~\cite[Prop.~1]{bodirsky_asnp} showed that every GMSNP sentence is logically equivalent to a finite disjunction of connected GMSNP sentences. 
 An inspection of their proof reveals that if the complexity of checking 
 containment is in $\TWONEXPTIME$ for connected GMSNP, then so it is for the entire class GMSNP.

The second preprocessing step (in the approaches to the containment for MMSNP) is typically a reduction to the biconnected case~\cite{bouhris_lutz2016,bodirsky2018_article}.
Intuitively, an MMSNP sentence is \emph{biconnected} if no forbidden colour-pattern can be 
disconnected by deleting a single vertex.
The reduction to biconnected MMSNP is the point where a significant amount of work can be outsourced to a general combinatorial result, e.g., a lemma of Erd\H{o}s in the case of~\cite{bouhris_lutz2016} or a 
theorem of Hubi\v{c}ka and Ne\v{s}et\v{r}il in the case of~\cite{bodirsky2018_article}. 
 The procedure for attaining biconnectedness from~\cite[Lem.~4.4]{bodirsky2018_article} transforms a given connected MMSNP sentence to a logically equivalent biconnected MMSNP sentence by iteratively selecting a forbidden colour-pattern which can be disconnected by deleting a single vertex and splitting it into two forbidden colour-patterns (corresponding to the two parts); the split is then marked using a fresh unary existential predicate. 
The GMSNP-analogue of this procedure does not terminate if the arities of the existentially quantified predicates are greater than $2$, and we do not see any way how this issue could be fixed.
Therefore, to progress further, we must understand how exactly the said general combinatorial results are used in the complexity analysis of the containment problem; in the present paper, we focus specifically on the approach from~\cite{bodirsky2018_article} using the mentioned theorem of Hubi\v{c}ka and Ne\v{s}et\v{r}il.

Roughly said, that  theorem implies that every connected MMSNP $\tau$-sentence $\Phi$ can be assigned a highly symmetric infinite structure $\struct{C}_{\Phi}$ in a signature extending $\tau\cup \sigma$ such that the $(\tau\cup \sigma)$-reducts of the finite substructures of $\struct{C}_{\Phi}$ are up to isomorphism precisely the finite models of the first-order part of $\Phi$; this was first observed by Bodirsky and Dalmau~\cite{BodDalJournal}.
Bodirsky, Madelaine, and Mottet~\cite{bodirsky2018_article} proved that if $\Phi$ is biconnected, then the $(\tau\cup \sigma)$-reduct of $\struct{C}_{\Phi}$ retains a certain amount of the original symmetry, and this fact is crucial in their take on the recolouring method.

Later on, Bodirsky, Kn\"{a}uer, and Starke~\cite{bodirsky_asnp} showed that also connected GMSNP sentences can be assigned said highly symmetric infinite structures; they did not comment on whether this fact is helpful in analysing the complexity of the containment problem for GMSNP. 
We show that the highly symmetric infinite structures $\struct{C}_{\Phi}$ associated with GMSNP sentences $\Phi$ can be described by SNP sentences $\Delta(\Phi)$ of size 2-exponential in the size of $\Phi$.
Subsequently, we show that the containment $\fm(\Phi_1)\subseteq \fm(\Phi_2)$ is equivalent to the existence of a particular mapping from $\struct{C}_{\Phi_1} $ to $ \struct{C}_{\Phi_2}$ which is uniquely determined by the images of substructures whose size is bounded by the maximal arity of a symbol in $\Phi_1$ or $\Phi_2$.
From this fact, we extract our more general version of recolourings for SNP.

By carefully analysing the total complexity of computing the SNP sentences $\Delta(\Phi_1)$ and $\Delta(\Phi_2)$ stemming from the theorem of Hubi\v{c}ka and Ne\v{s}et\v{r}il and of the subsequent check for the existence of a recolouring between such sentences, we conclude that the containment problem for GMSNP is in $\TWONEXPTIME$.
We remark that our methods are, in theory,  applicable in a more general setting than GMSNP; a fragment of SNP which is closely related to such applicability is  \emph{Amalgamation SNP} introduced in~\cite{bodirsky_asnp}.
We also remark that the SNP sentences $\Delta(\Phi_1)$ and $\Delta(\Phi_2)$ define a linear order over the domain of a given structure, which is at the core of our argument why the recolouring method works in our case.
It is provably not possible to define a linear order in GMSNP~\cite[Ex.~6]{bodirsky_asnp}, which indicates that this method requires us to leave the GMSNP framework. 
Finally, by applying several standard infinite-domain CSP techniques from~\cite{BodirskyM18,bodirsky2018_article}, we link the FO-rewritability of $\Phi$ to the FO-definability of a certain finite-domain CSP based on $\Delta(\Phi)$.
Since the said CSP can be parametrized by a structure of size at most 2-exponential in the size of $\Phi$, we can conclude using known results from the finite-domain CSP literature that FO-rewritability for GMSNP is in \TWONEXPTIME.

\subsection{Secondary contribution}
We analyse to what extent it is possible to prove the decidability of the containment problem for GMSNP while staying as close as possible to the original recolouring method for MMSNP.

A (vertex)-recolouring $\xi$ between two MMSNP sentences $\Phi_1$ and $\Phi_2$ is simply a mapping 
 from  
 the (unary) 
 existential  symbols of $\Phi_1$ 
  to those of  
  $\Phi_2$ whose application to the 
  relations denoted by these symbols of any model of the first-order part of $\Phi_1$ yields a model of the first-order part of $\Phi_2$. That is, the mapping $\xi$   
   does not introduce any vertex-colour patterns forbidden by $\Phi_2$ into structures which originally did not contain any patterns forbidden by $\Phi_1$  
   (see~\cite[Def.~4.25]{bodirsky2018_article}).

If $\Phi_1$ and $\Phi_2$ assert that the existential relations form a partition of all vertices (an important ingredient of the \emph{normal form} for MMSNP~\cite{bodirsky2018_article}),
then this is what one would intuitively understand under a recolouring of vertices -- hence the name. 
However, the true intention of this concept, which does not require any additional assumptions on $\Phi_1$ and $\Phi_2$ (such as the vertices being partitioned by the existential predicates) is to capture a very specific case of the containment $\fm(\Phi_1) \subseteq \fm(\Phi_2)$, which is witnessed uniformly across all structures in $\fm(\Phi_1)$.
The natural extension of vertex-recolourings to GMSNP would be \emph{edge-recolourings}, i.e., mappings 
\begin{align}
 \xi \colon R(x_1,\dots, x_n) \wedge \alpha_1(x_1,\dots, x_n)  \mapsto R(x_1,\dots, x_n) \wedge \alpha_2(x_1,\dots, x_n), \label{eq:edge_recolouring}
\end{align} 
where $R\in \tau$ and $\alpha_i$ is an atomic $\sigma_i$-formula ($i\in [2]$), 
 such that $\xi$ induces a mapping between models of the first-order parts of the two sentences.  
 The notion of a recolouring for SNP we use in the proof of Theorem~\ref{thm:2NEXPTIME_for_GMSNP} is considerably more general and  abstract.
Hence, it makes sense to ask whether one can also 
prove decidability of containment for GMSNP by simply refining
the input sentences $\Phi_1$ and $\Phi_2$ long enough, while staying in GMSNP, until the containment $\fm(\Phi_1)\subseteq \fm(\Phi_2)$ becomes equivalent to the existence of an edge-recolouring from $\Phi_1$ to $\Phi_2$.  

The central notion in analysing this possibility is \emph{recolouring-readiness}, which is comparable to the notion of a \emph{simple program} of Bourhis and Lutz~\cite{bouhris_lutz2016} or the normal form of Bodirsky, Madelaine, and Mottet~\cite{bodirsky2018_article}.
Instead of relying purely on syntactic preprocessing, in the definition of recolouring-readiness, we draw our inspiration from the proof of Theorem~\ref{thm:2NEXPTIME_for_GMSNP}.
Intuitively, a GMSNP $\tau$-sentence $\Phi$ is recolouring-ready if it can be assigned 
a highly symmetric infinite structure $\struct{C}_{\Phi}$ as in our proof of Theorem~\ref{thm:2NEXPTIME_for_GMSNP} such that the $(\tau\cup \sigma)$-reduct of $\struct{C}_{\Phi}$ retains a sufficient amount of the original symmetry for a reduction from containment to edge-recolouring to work. 
We show that every connected GMSNP sentence $\Phi$ is logically equivalent to a connected recolouring-ready GMSNP sentence $\Omega(\Phi)$.
However, the size of the smallest such $\Omega(\Phi)$ that we are able to obtain surpasses the upper bound provided in Theorem~\ref{thm:2NEXPTIME_for_GMSNP}. 
We leave it as an open question whether the precise complexity of containment for GMSNP can be determined purely from within the GMSNP framework.  

 Formally, the notion of recolouring-readiness additionally depends on a certain parameter $k\in \mathbb{N}$.
Intuitively, this parameter $k$ measures how well a given GMSNP sentence $\Phi$ can approximate 
\textsc{Digraph Acyclicity}, which is the CSP of $(\mathbb{Q};<)$, in the sense that $\Phi$ can detect  directed  cycles of size at most $k$.
It is known that no GMSNP sentence can fully express $\CSP(\mathbb{Q};<)$; see Example~6 in~\cite{bodirsky_asnp}.
\begin{restatable}{theorem}{recolouringreadiness}    \label{thm:recolouring_readiness}   
For every connected GMSNP $\tau$-sentence $\Phi$ and every $k\in \mathbb{N}$, there exists a logically equivalent connected $k$-recolouring-ready GMSNP $\tau$-sentence $\Omega(\Phi)$. 
Moreover, the function $\Omega$ is computable.
Suppose that $\Phi_1$ and $\Phi_2$ are two $k$-recolouring-ready GMSNP $\tau$-sentences with at most $k$-many variables per clause.
Then there exists an edge-recolouring from $\Phi_1$ to $\Phi_2$ if and only if $\fm(\Phi_1)\subseteq \fm(\Phi_2)$.
\end{restatable}

\subsection{Outline}
In Section~\ref{section:preliminaries}, we provide some background knowledge necessary for the presentation of our results.
Section~\ref{section:containment} contains the proof of Theorem~\ref{thm:2NEXPTIME_for_GMSNP}, where the individual steps are presented in the order given in the introduction.
Section~\ref{section:recolouring_ready} contains a proof of Theorem~\ref{thm:recolouring_readiness} and a deeper discussion of the recolouring method.

\subsection{Related work} 
The complexity of the containment problem for \emph{guarded Datalog}\textemdash the Datalog fragment of GMSNP\textemdash is known: this problem is \TWOEXPTIME-complete~\cite[Thms.~7 and~9]{bourhis:hal-01211282}. 
There is also an independent body of work on the containment and the FO-rewritability problems for (U)CQs mediated by ontologies defined by \emph{(Frontier-)Guarded Tuple Generating Dependencies}; both problems are \TWOEXPTIME-complete~\cite{containment_frontier_guarded,fo_frontier_guarded}.
It is known that such OMQs can be rewritten as guarded Datalog queries~\cite{containment_frontier_guarded}.
However, as the rewritings are very large~\cite{gottlob2014expressiveness}, the reduction of containment for guarded OMQs to containment for guarded Datalog does not yield optimal upper bounds.
Similarly, as the correspondence between GMSNP and the query language $(\mathrm{UCQ},\mathrm{GFO})$ is not efficient~\cite{bouhris_lutz2016}, our $\TWONEXPTIME$-upper bound for the complexity of the containment problem for GMSNP does not yield an optimal upper bound for the containment of UCQs mediated by GFO ontologies.

\section{Preliminaries}\label{section:preliminaries}
  
The set $\{1,\dots,n\}$ is denoted by $[n]$, and we use the bar notation $\bar{t}$ for tuples.
The \emph{component-wise action} of a function $f\colon A^n \rightarrow B$ on $k$-tuples is given by $$f\big((x_{1,1},\dots, x_{1,k}),\dots, (x_{n,1},\dots, x_{n,k}) \big)\coloneqq \big(f(x_{1,1},\dots,x_{n,1}),\dots, f(x_{1,k},\dots,x_{n,k}) \big).$$ 
We extend the containment relation on sets to tuples by ignoring the ordering on the entries. For example, we might write $X\subseteq \bar{t}$ for a set $X$ and a tuple $\bar{t}$. 

\subsection{Structures}  \label{section:prelims_structures}
A (\emph{relational}) \emph{signature} $\tau$ is a set of \emph{relation symbols}, where each $R\in\tau$ is associated with a natural number called \emph{arity}.
A (\emph{relational}) \emph{$\tau$-structure} $\struct{A}$ consists of a set $A$ (the \emph{domain}) together with the relations $R^{\struct{A}}\subseteq A^{k}$ for each $R\in \tau$ with arity $k$.
An \emph{expansion} of $\struct{A}$ is a $\sigma$-structure $ \struct{B}$ with $A=B$ such that $ \tau\subseteq \sigma$ and $R^{\struct{B}}=R^{\struct{A}}$ for each relation symbol $R\in \tau$. Conversely, we then  call $\struct{A}$ a \emph{reduct} of $\struct{B}$.
We denote the reduct of $\struct{B}$ to a subset $\tau$ of its signature by $\struct{B}^{\tau}$.
A \emph{linear-order expansion} of a $\tau$-structure $\struct{A}$ is an expansion by a single linear order denoted by $<$ (assuming  ${<}\notin \tau$); we denote such expansion by $(\struct{A},<)$.
We often do not distinguish between the symbol $<$ and the associated linear order. 
Given a class of $\tau$-structures $\mathcal{K}$, we denote by $\mathcal{K}^{<}$ the class of all $\tau\cup \{<\}$-structures which are linear-order expansions of structures from $\mathcal{K}$.
The \emph{union} of two $\tau$-structures $\struct{A}$ and $\struct{B}$ is the $\tau$-structure $\struct{A}\cup \struct{B}$ with domain $A\cup B$ and relations 
$R^{\struct{A}\cup \struct{B}}\coloneqq R^{\struct{A}}\cup R^{\struct{B}}$ for every $R\in \tau$.
If $A\cap B = \emptyset$, we call $\struct{A}\cup \struct{B}$ a \emph{disjoint union} and write $\struct{A}+\struct{B}$.
A structure is \emph{connected} if it is not the disjoint union of two structures with non-empty domains.
For a positive integer $d$ and a $\tau$-structure $\struct{A}$, the \emph{$d$-th power} of $\struct{A}$ is the $\tau$-structure $\struct{A}^d$ with the domain $A^d$ and relations as follows. For each $R\in \tau$ of arity $k$:
\begin{align*}
    R^{\struct{A}^d}=\Bigl\{\big((a_{1,1},\dots, a_{1,d}),\dots, & (a_{k,1},\dots, a_{k,d})\big)\in (A^d)^k \\ & \Big| \  (a_{1,1},\dots, a_{k,1}),\dots, (a_{1,d},\dots, a_{k,d}) \in R^{\struct{A}}\Bigr\}
\end{align*}

A \emph{homomorphism} $h\colon \struct{A} \rightarrow \struct{B}$ for $\tau$-structures $\struct{A},\struct{B}$ is a mapping $h\colon  A\rightarrow B$ that \emph{preserves} each relation of $\tau$, i.e., whenever $ \bar{t} \in R^{\struct{A}}$ for some 
 relation symbol $R\in \tau$, then $h(\bar{t})$ (computed component-wise) 
 is an element of $R^{\struct{B}}$.
We write $\struct{A} \rightarrow \struct{B}$ if $\struct{A}$ maps homomorphically into $\struct{B}$. 
The \emph{Constraint Satisfaction Problem} (CSP) of $\struct{B}$, denoted by $\CSP(\struct{B})$, is defined as the class of all finite structures which homomorphically map into $\struct{B}$.
An \emph{embedding} 
 is an injective homomorphism $h\colon \struct{A} \rightarrow \struct{B}$ that additionally satisfies the following condition: for every $k$-ary relation symbol $R\in \tau$ and $\bar{t}\in A^{k}$ we have $h(\bar{t})\in R^{\struct{B}}$ only if $\bar{t}\in R^{\struct{A}}.$
We write $\struct{A}\hookrightarrow \struct{B}$ if $\struct{A}$ embeds into $\struct{B}$. 
The \emph{age} of $\struct{B}$, denoted by $\age(\struct{B})$, is the class of all finite structures which embed into $\struct{B}$.
A \emph{substructure} of $\struct{B}$ is a structure $\struct{A}$ over $A\subseteq B$ such that the inclusion map $i\colon A\rightarrow B$ is an embedding. 
Conversely, we then call $\struct{B}$ an \emph{extension} of $\struct{A}$.
An \emph{isomorphism} is a surjective embedding.   
Two structures $\struct{A}$ and $\struct{B}$ are \emph{isomorphic} if there exists an isomorphism from $\struct{A} $ to $\struct{B}$.  
A \emph{partial isomorphism} between two structures $\struct{A}$ and $\struct{B}$ is an isomorphism from a substructure of $\struct{A}$ to a 
substructure of $\struct{B}$; if $\struct{A}=\struct{B}$, then we speak of a partial isomorphism on $\struct{A}$.

An \emph{automorphism} of $\struct{A}$ is an isomorphism from $\struct{A}$ to itself.  
The set (group) of all automorphisms of $\struct{A}$ is denoted by $\Aut(\struct{A})$.
The \emph{orbit} of a tuple $\bar{t}\in A^{k}$ in $\struct{A}$ is the set $\{g(\bar{t}) \mid g \in \Aut(\struct{A})\}.$ 
For two structures $\struct{A}$ and $\struct{B}$, a function $f\colon A \rightarrow B$ is called \emph{canonical from $\struct{A}$ to $\struct{B}$} if, for every $k\geq 1$, the component-wise action of $f$ induces a well-defined function from the orbits of $k$-tuples in $\struct{A}$ to the orbits of $k$-tuples in $\struct{B}$; that is, if any two $k$-tuples belonging to the  same orbit, say $O_1$, are mapped into the same orbit, say $O_2$; we can then write $f(O_1) = O_2$.
In other words, for every $k\geq 1$, $\bar{t}\in A^k$,  and $\alpha\in\Aut(\struct{A})$, there exists $\beta\in\Aut(\struct{B})$ such that $f(\alpha(\bar{t}))=\beta(f(\bar{t}))$.

A countable structure $\struct{A}$ is \emph{$\omega$-categorical} if, for every $k\geq 1$, there are only finitely many orbits of $k$-tuples in $\struct{A}$. 
The following lemma can be shown by a  standard compactness argument, e.g., using K\H{o}nig's tree lemma.

\begin{lemma}[Lem.~4.1.7 in~\cite{Bodirsky_book}] \label{lemma:compactness} 
Let $\struct{A}$, $\struct{B}$ be countable relational structures such that $\struct{B}$ is $\omega$-categorical.
There is a  homomorphism (embedding) from $\struct{A}$ to $\struct{B}$ if and only if for all finite substructures of $\struct{A}$ there is a  homomorphism (embedding) to $\struct{B}$.
\end{lemma}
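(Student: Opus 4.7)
The forward direction is immediate: if $h\colon\struct{A}\hookrightarrow\struct{B}$ is an embedding, then restricting $h$ to any finite subset of $A$ gives an embedding of the corresponding substructure. The plan is to prove the converse by a standard König's lemma argument in which $\omega$-categoricity of $\struct{B}$ supplies the finite-branching property.

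Fix an enumeration $A=\{a_1,a_2,\dots\}$ and, for each $n\geq 1$, let $\struct{A}_n$ denote the substructure of $\struct{A}$ induced on $\{a_1,\dots,a_n\}$. Consider the set $E_n$ of all embeddings $f\colon\struct{A}_n\hookrightarrow\struct{B}$, and quotient it by the equivalence relation declaring $f\sim g$ iff the tuples $(f(a_1),\dots,f(a_n))$ and $(g(a_1),\dots,g(a_n))$ lie in the same orbit of $\Aut(\struct{B})$. By $\omega$-categoricity of $\struct{B}$, the quotient $E_n/{\sim}$ is finite. I build a rooted tree $T$ whose level-$n$ vertices are the nonempty classes in $E_n/{\sim}$, with an edge from $[f_n]$ to $[f_{n+1}]$ whenever some representative $f'_{n+1}\in[f_{n+1}]$ restricts to some representative $f'_n\in[f_n]$. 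The hypothesis guarantees $E_n\neq\emptyset$ for every $n$, so every level of $T$ is nonempty, and any class at level $n+1$ restricts (in particular) to a class at level $n$, so $T$ is a genuine tree. Since each level is finite, König's lemma yields an infinite branch $[f_1],[f_2],\dots$.

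To upgrade this branch to an actual embedding, I will construct, by induction on $n$, representatives $g_n\in[f_n]$ such that $g_{n+1}$ extends $g_n$. Base case: let $g_1$ be any element of $[f_1]$. Inductive step: given $g_n\in[f_n]$, the edge from $[f_n]$ to $[f_{n+1}]$ supplies witnesses $f'_n\in[f_n]$ and $f'_{n+1}\in[f_{n+1}]$ with $f'_{n+1}|_{\struct{A}_n}=f'_n$. Because $g_n\sim f'_n$, there exists $\alpha\in\Aut(\struct{B})$ with $\alpha\circ f'_n=g_n$; set $g_{n+1}\coloneqq\alpha\circ f'_{n+1}$, which is an embedding of $\struct{A}_{n+1}$ into $\struct{B}$ extending $g_n$ and still lying in $[f_{n+1}]$. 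The union $h\coloneqq\bigcup_n g_n$ is then a well-defined embedding $\struct{A}\hookrightarrow\struct{B}$.

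I do not expect any real obstacle: the only substantive input beyond König's lemma is the equivalence between two tuples lying in a common orbit and the existence of an automorphism of $\struct{B}$ taking one to the other, which is the definition of orbit. The crucial role of $\omega$-categoricity is precisely to make each level of $T$ finite; without it the tree could branch infinitely and König's lemma would not apply. The construction is otherwise a routine back-and-forth along a single fixed enumeration of $A$.
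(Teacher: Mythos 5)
Your proof is correct and is exactly the standard König's-lemma argument the paper has in mind (the paper does not prove this lemma itself but cites it from Bodirsky's book, noting it follows "by a standard compactness argument, e.g., using K\H{o}nig's tree lemma"). The one point you assert without comment — that $T$ is a tree, i.e.\ each level-$(n+1)$ class has a unique parent — does hold, since an automorphism witnessing that two $(n+1)$-tuples lie in the same orbit also witnesses this for their length-$n$ prefixes, so the restriction map is well defined on classes.
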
 

A $d$-ary \emph{polymorphism} $f$ of a relational $\tau$-structure $\struct{A}$ is a mapping $f\colon A^d\to A$ such that for every $R\in\tau$ and every $\bar a_1,\ldots,\bar a_d\in R^\struct{A}$, we have that $f(\bar a_1,\ldots,\bar a_d)\in R^\struct{A}$, where $f$ is applied to the tuples component-wise.
In other words, a polymorphism of $\struct{A}$ is a homomorphism from the $d$-th power $\struct{A}^d$ of $\struct{A}$ into $\struct{A}$ for some $d\in \mathbb{N}$. 
Unless $|A|\leq 1$, for every $d\in \mathbb{N}$, there always exist at least $d$ distinct polymorphisms of $\struct{A}$: the \emph{projections} $\pi_i^{d}(a_1,\dots, a_d) \coloneqq a_i$.

Let $\struct{B}$ be an expansion of $\struct{A}$.
A $d$-ary polymorphism $f$ of $\struct{A}$ is called \emph{canonical} with respect to  $\struct{B}$  
if, for every $n\in\mathbb{N}$ and $\bar a_1,\ldots,\bar a_m\in A^n$ and $\alpha_1,\ldots,\alpha_m\in \Aut(\struct{B})$, there exists $\beta\in \Aut(\struct{B})$ such that $f(\alpha_1\bar a_1,\ldots,\alpha_m\bar a_m)=\beta\circ f(\bar a_1,\ldots,\bar a_m)$.
Every $d$-ary polymorphism $f$ of $\struct{A}$ which is  canonical with respect to  $\struct{B}$ induces a well-defined $d$-ary operation $f^{\acts}$ on the set of all orbits of tuples in $\struct{B}$ (respecting their arities).

\subsection{Logic} \label{section:prelims_logic}
We assume that the reader is familiar with classical \emph{first-order} logic as well as with basic preservation properties of first-order formulas, e.g., that every first-order formula $\phi$ is preserved by isomorphisms; by embeddings if $\phi$ is existential, and by homomorphisms if $\phi$ is existential positive.
We assume that equality $=$ is always available when building first-order formulas.
We say that a first-order formula $\phi$ is \emph{$k$-ary} if it has $k$ free variables; we use the notation $\phi(\bar{x})$ to indicate that the free variables of $\phi$ are among $\bar{x}$.
Note that this does not mean that the truth value of $\phi$ depends on each entry in $\bar{x}$. 

In the present article, \emph{atomic $\tau$-formulas}, or \emph{$\tau$-atoms} for short, over a relational signature $\tau$ are of the form $R(\bar{x})$ for some $R\in \tau$ and a tuple $\bar{x}$ of first-order variables matching the arity of $R$.
For technical reasons, formulas of the form $x=y$ built using the default equality predicate are not considered atomic.
For a finite conjunction $\phi$ of $\tau$-atoms, the \emph{canonical database} of $\phi$ is the structure $\struct{A}$ whose domain consists of the variables of $\phi$ and such that $\struct{A}\models R(\bar{t})$ holds if and only if $\phi$ contains the $\tau$-atom $R(\bar{t})$ as a conjunct.
The canonical database of $\phi$ admits a homomorphism to a $\tau$-structure $\struct{B}$
if and only if $\phi$ is satisfiable in $\struct{B}$~\cite{10.1145/800105.803397}.
For a finite $\tau$-structure $\struct{A}$, the \emph{canonical query} of $\struct{A}$ 
is defined as  $ \bigwedge\nolimits_{R\in \tau}\bigwedge\nolimits_{\bar{t}\in R^{\struct{A}}} R(\bar{t})$. 
The canonical query of $\struct{A}$ is satisfiable in a $\tau$-structure $\struct{B}$ if and only if there exists a homomorphism from $\struct{A}$ to $\struct{B}$~\cite{10.1145/800105.803397}. 

Let $\struct{A}$ and $\struct{B}$ be relational structures and $d$ a natural number.
A \emph{$d$-dimensional First-Order} (FO) \emph{interpretation} of $\struct{B}$ in $\struct{A}$ is a partial surjection $\mathcal{I}\colon A^d \rightarrow B$ with the following property: the domain and the kernel of $\mathcal{I}$ have a first-order definition in  $\struct{A}$, and so does, for every relation $R$ defined by a $k$-ary atomic formula in $\struct{B}$, the $dk$-ary (preimage) relation 
\[
\mathcal{I}^{-1}(R) \coloneqq \bigl\{(a_1^1,\dots, a_d^1, \dots, a_1^k,\dots, a_d^k)\ \big|\ \big(\mathcal{I}(a_1^1,\dots, a_d^1),\dots, \mathcal{I}(a_1^k,\dots, a_d^k)\big) \in R \bigr\}\; .
\]

We denote the FO-formulas witnessing that $\mathcal{I}$ is a FO interpretation by $\phi_{\mathrm{dom}}$, $\phi_{\mathrm{ker}}$, and $\phi_R$, respectively. 

Let $\tau_1$ and $\tau_2$ be two finite relational signatures and $\mathcal{R}$ a tuple consisting of a constant $d\in\mathbb N$ and FO $\tau_1$-formulas $\phi_{\mathrm{dom}}$, $\phi_{\mathrm{ker}}$, and $\phi_R$ for every $R\in \tau_2$.
Moreover, let  $\mathcal{K}_1$ and $\mathcal{K}_2$ be two isomorphism-closed classes of finite $\tau_1$- and $\tau_2$-structures, respectively. 
Note that then we can assign to any $\tau_1$-structure $\struct{A}$ a unique $\tau_2$-structure $\mathcal{R}(\struct{A})$ with a $d$-dimensional FO interpretation in $\struct{A}$ as given by the formulas in $\mathcal R$.
We say that $\mathcal{R}$ is a \emph{FO-reduction} from $\mathcal{K}_1$ to $\mathcal{K}_2$ if, for every finite $\tau_1$-structure $\struct{A}$, we have 
$$
   \struct{A}\in \mathcal{K}_1 \quad \text{if and only if} \quad  \mathcal{R}(\struct{A}) \in \mathcal{K}_2.
$$ 
The following lemma is folklore; we include a proof for the convenience of the reader. 

\begin{lemma} \label{lemma:FO_reduction}
   If $\mathcal{K}_1$ FO-reduces to $\mathcal{K}_2$ and $\mathcal{K}_2$ is FO-definable, then so is $\mathcal{K}_1$.
\end{lemma}
\begin{proof} 
    Let $\mathcal{R}$ be a FO-reduction 
    from $\mathcal{K}_1$ to $\mathcal{K}_2$, and let $\Phi$ be a FO-sentence for $\mathcal{K}_1$. We may assume that $\Phi$ is of the form $\mathsf{Q}_1x_1 \cdots \mathsf{Q}_nx_n \ldotp \phi(x_1,\dots,x_n)$, where $\mathsf{Q}_1, \dots, \mathsf{Q}_n \in \{\exists,\forall\}$ and $\phi$ is quantifier-free. 
    Let $I_{\exists}$ and $I_{\forall}$ be the sets of $i\in [n]$ such that $\mathsf{Q}_i= \exists$ and $\mathsf{Q}_i= \forall$, respectively.
    We define $\mathcal{R}^{-1}(\Phi)$ as the FO-sentence 
    \[
     \mathsf{Q}_1 \bar{x}_1 \cdots \mathsf{Q}_n  \bar{x}_n \ldotp  \bigwedge\nolimits_{i\in I_{\exists}}  \phi_{\mathrm{dom}}(\bar{x}_i) \wedge \Big( \big(\bigwedge\nolimits_{j\in I_{\forall}} \phi_{\mathrm{dom}}(\bar{x}_j) \big) \Rightarrow \phi'(\bar{x}_1,\dots,\bar{x}_n)
    \Big),
    \]
    where $\phi'$ is obtained from $\phi$ by replacing every equality $(x_k=x_{\ell})$ with $\phi_{\mathrm{ker}}(\bar{x}_k,\bar{x}_{\ell})$ and every atomic $\tau_1$-formula $R(x_{i_1},\dots, x_{i_m})$ with $\phi_R(\bar{x}_{i_1},\dots, \bar{x}_{i_m})$.  
    Suppose that $\struct{A}$ is a finite $\tau_1$-structure.
    Then  $\struct{A} \in \mathcal{K}_1$ if and only if 
    $\mathcal{R}(\struct{A}) \in \mathcal{K}_2$ because $\mathcal{R}$ is a FO-reduction.
    Since $\mathcal{R}(\struct{A}) \in \mathcal{K}_2$ if and only if $\mathcal{R}(\struct{A}) \models \Phi$, the fact that $\struct{A} \in \mathcal{K}_1$ if and only if $\struct{A}\models  \mathcal{R}^{-1}(\Phi)$ follows directly from the definition of $\mathcal{R}^{-1}(\Phi)$. 
\end{proof}

 \subsection{Structural Ramsey theory} 
A relational structure $\struct{B}$ is \emph{homogeneous} if, for every $k\geq 1$, two tuples $\bar{t}_1,\bar{t}_2\in B^k$ lie in the same orbit if and only if the function that maps the $i$-th coordinate in $\bar{t}_1$ to the $i$-th coordinate in $\bar{t}_2$ for all $1\leq i\leq k$ is an isomorphism between two substructures of $\struct{B}$. 
In other words, a structure is homogeneous if every partial isomorphism between two finite substructures extends to an automorphism of the entire structure.

Every homogeneous structure over a finite relational signature is $\omega$-categorical, because orbits of tuples are determined by the atomic formulas satisfied by them; there are only finitely many such formulas of each arity as the signature is finite.

Homogeneous structures arise as limit objects of certain classes of finite structures. 
Let $\mathcal{K}$ be a class of finite structures in a finite relational signature $\tau$ closed under isomorphisms and substructures.  
We say that $\mathcal{K}$ has the \emph{amalgamation property} (AP) if, for all $\struct{A},\struct{B} \in \mathcal{K}$ whose substructures induced on $A\cap B$ are identical, there exist  $\struct{W}\in \mathcal{K}$ and embeddings $e\colon \struct{A}\hookrightarrow \struct{W}$, $f\colon \struct{B} \hookrightarrow \struct{W}$
such that $e|_{A\cap B} = f|_{A\cap B}.$
Note that the AP for $\mathcal{K}$ is implied by the property of being closed under unions $\struct{B}_1\cup \struct{B}_2$, also called \emph{free amalgams}; in this case, we say that $\mathcal{K}$ has the \emph{free amalgamation property}. 
\begin{theorem}[Fra\"{i}ss\'{e}, Theorem~6.1.2 in~\cite{hodges_book}] \label{theorem:fraisse_2} For a class $\mathcal{K}$ of finite structures in a finite relational signature $\tau$, the following are equivalent:
\begin{itemize}
    \item $\mathcal{K}$  is the age of a  countable homogeneous $\tau$-structure; this structure is necessarily unique up to isomorphism and called the \emph{Fra\"{i}ss\'{e}-limit} of $\mathcal{K}$;

    \item $\mathcal{K}$ is closed under isomorphisms, substructures, and has the AP.
\end{itemize}
\end{theorem}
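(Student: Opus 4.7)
I would split the argument into three parts: the forward implication, the backward implication, and the uniqueness claim.

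For the forward implication, suppose $\mathcal{K} = \age(\struct{M})$ for a countable homogeneous $\tau$-structure $\struct{M}$. Closure under isomorphism is built into the definition of $\age(\struct{M})$, and closure under substructures follows because embeddings compose. For amalgamation, given $\struct{A},\struct{B} \in \mathcal{K}$ whose substructures on $A \cap B$ agree as some $\struct{C}$, I would pick embeddings $e_A\colon \struct{A}\hookrightarrow\struct{M}$ and $e_B\colon \struct{B}\hookrightarrow\struct{M}$. The map sending $e_A(c)$ to $e_B(c)$ for $c \in C$ is a partial isomorphism between two finite substructures of $\struct{M}$, so by homogeneity it extends to an automorphism $\alpha \in \Aut(\struct{M})$. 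Then $\alpha \circ e_A$ and $e_B$ agree on $\struct{C}$, and the substructure of $\struct{M}$ induced on $\alpha(e_A(A)) \cup e_B(B)$ serves as the amalgam $\struct{W}\in\mathcal{K}$ with the required embeddings.

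For the backward implication, I would carry out the classical chain construction. Because $\tau$ is finite and relational, $\mathcal{K}$ has only countably many isomorphism types, and for each finite $\struct{A}_n$ only countably many candidate extensions $(\struct{A}, e, \struct{B})$ with $\struct{A}$ a substructure of $\struct{A}_n$ and $e\colon \struct{A}\hookrightarrow \struct{B}\in\mathcal{K}$. Using a dovetailing enumeration of all such tasks, I would build an ascending chain $\struct{A}_0 \subseteq \struct{A}_1 \subseteq \cdots$ in $\mathcal{K}$ by alternating two types of step: (i) at stage $n$, amalgamating $\struct{A}_n$ with some enumerated $\struct{B}\in \mathcal{K}$ along a common substructure, using the AP to stay inside $\mathcal{K}$, so that every element of $\mathcal{K}$ eventually embeds; (ii) handling the next enumerated extension task $(\struct{A}, e, \struct{B})$ by using the AP along $\struct{A}$ to produce $\struct{A}_{n+1} \in \mathcal{K}$ in which $e$ extends to an embedding of $\struct{B}$. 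I would then set $\struct{M} \coloneqq \bigcup_{n} \struct{A}_n$. By construction $\age(\struct{M}) = \mathcal{K}$: every finite substructure of $\struct{M}$ lies in some $\struct{A}_n\in \mathcal{K}$, and every member of $\mathcal{K}$ is embedded at some stage. Homogeneity follows because every partial isomorphism between finite substructures of $\struct{M}$ takes place inside some $\struct{A}_n$, and the amalgamation tasks in step (ii) guarantee that it can be extended one element at a time, yielding by a back-and-forth a full automorphism.

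For uniqueness, suppose $\struct{M}$ and $\struct{M}'$ are two countable homogeneous $\tau$-structures with $\age(\struct{M}) = \age(\struct{M}') = \mathcal{K}$. I would run a back-and-forth: enumerate $M = \{m_0,m_1,\dots\}$ and $M' = \{m_0',m_1',\dots\}$ and inductively build a chain of finite partial isomorphisms $p_0 \subseteq p_1 \subseteq \cdots$ whose union is the desired isomorphism. At odd stages, extend $p_n$ to include the next $m_i$ in its domain by using the fact that the finite substructure of $\struct{M}$ induced on $\mathrm{dom}(p_n)\cup\{m_i\}$ lies in $\mathcal{K} = \age(\struct{M}')$, so there is an embedding into $\struct{M}'$; homogeneity of $\struct{M}'$ then allows composing with an automorphism to match the values already fixed by $p_n$. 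Even stages are symmetric.

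\textbf{Main obstacle.} The essentially routine but most delicate part is the bookkeeping in step (i)-(ii) of the construction: I need to enumerate the amalgamation tasks so that every task that arises at any finite stage is eventually handled, and simultaneously ensure that every isomorphism type in $\mathcal{K}$ is realised as a substructure. This is the standard Fra\"{i}ss\'{e} dovetailing, and once that is set up, verifying $\age(\struct{M}) = \mathcal{K}$ and the homogeneity of $\struct{M}$ reduces to unwinding the definitions.
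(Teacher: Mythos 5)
The paper cites this theorem from Hodges's book without reproducing a proof, so there is no in-paper argument to compare against; your sketch is the standard Fra\"{i}ss\'{e} argument (forward direction via homogeneity, backward direction via a dovetailed chain establishing the one-point extension property, uniqueness via back-and-forth) and is correct. The only point worth making explicit is that the joint embedding property, usually listed as a separate hypothesis, is here subsumed by the paper's formulation of the AP applied with $A\cap B=\emptyset$, which is exactly what your step (i) relies on.
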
 
  
For structures $\struct{A}$ and $\struct{W}$, we denote by $\binom{\struct{W}}{\struct{A}}$ the set of all embeddings of $\struct{A}$ into $\struct{W}$.
A class $\mathcal{K}$ of structures over a common signature $\tau$ has the \emph{Ramsey property} (RP) if,  for all $\struct{A},\struct{B}\in \mathcal{K}$ and $k\in \mathbb{N}$, there exists $\struct{W}\in \mathcal{K}$ such that, for every map $f\colon \binom{\struct{W}}{\struct{A}} \rightarrow [k]$, there exists $e\in \binom{\struct{W}}{\struct{B}}$ such that $f$ is constant on the set $\bigl\{e\circ u\ \big|\  u\in \binom{\struct{B}}{\struct{A}}\bigr\} \subseteq \binom{\struct{W}}{\struct{A}}.$ 
 
Following \cite[Def.~5.3]{bodirsky2018_article}, in the present article, we call a (countable) $\omega$-categorical structure $\struct{B}$  \emph{Ramsey} if the age of the  expansion $\struct{B}^{\text{FO}}$ of $\struct{B}$ by all first-order definable relations (which is always a homogeneous structure) has the Ramsey property. 

By the theorem of Engeler, Ryll-Nardzewski and Svenonius~\cite{hodges_book}, every orbit of a tuple in a countable $\omega$-categorical structure is first-order definable.
Since therefore the first-order definable relations are precisely the unions of orbits (of the same arity), the property of being Ramsey for an $\omega$-categorical structure only depends on its automorphism group. 
Note that a homogeneous structure with a finite relational signature is Ramsey if and only if its age has the RP: this follows from the fact that every first-order definable relation is in fact definable without quantifiers.

In the proof of Theorem~\ref{thm:2NEXPTIME_for_GMSNP}, we employ structural Ramsey theory to effectively reduce the problem of containment between two infinite classes of finite structures to the existence of a certain computable function between two infinite structures. %
This reduction is achieved by first representing the two classes by infinite structures via  Theorem~\ref{theorem:fraisse_2} above, and then ensuring the computability of the function comparing the two infinite structures (if it exists) using the following statement, an immediate consequence of Theorem~5 in~\cite{bodirsky_pinsker_ramsey_canonical}. 

For a set $C$ of functions from $X$ to $Y$, denote by $\cplmt{C}$ the set of all $g \colon X \rightarrow Y$ such that, for every finite $S\subseteq X$, there exists $f\in C$ satisfying $f|_S=g|_S$.~\footnote{This notation stands for the \emph{closure} of $C$ with respect to the topology of pointwise convergence.}

\begin{theorem}[cf.~\cite{bodirsky_pinsker_tsankov}, Thm.~5 in~\cite{bodirsky_pinsker_ramsey_canonical}]   \label{th:canonical_ramsey} 
 Let $\struct{A}$, $\struct{B}$ be countable $\omega$-categorical   relational structures such that $\struct{A}$ is Ramsey, and let $\struct{A}',\struct{B}'$ be reducts of these structures.  If there exists an embedding $f'$ from $\struct{A}'$ to $\struct{B}'$, then there also exists an embedding 
 \begin{align}
f\in \overline{\{ \beta  \circ  f' \circ  \alpha  \mid  \alpha  \in  \Aut(\struct{A}), \beta  \in  \Aut(\struct{B})\}} \label{eq:closure}
\end{align}
from $\struct{A}'$ to $\struct{B}'$ that is canonical when viewed as a mapping from $\struct{A}$  to $\struct{B}$.   
\end{theorem}

We briefly explain how to obtain Theorem~\ref{th:canonical_ramsey} from Theorem~5 in~\cite{bodirsky_pinsker_ramsey_canonical}.
\begin{proof}[Derivation of Theorem~\ref{th:canonical_ramsey} from Theorem~5 in~\cite{bodirsky_pinsker_ramsey_canonical}]
Let $f'\colon  \struct{A}' \rightarrow \struct{B}'$ be an embedding. 
In~\cite[Thm.~5]{bodirsky_pinsker_ramsey_canonical}, we select $\mathbf{G}\coloneqq \Aut(\struct{A})$ and $\mathbf{H}\coloneqq \Aut(\struct{B})$.
By definition, the age of the (homogeneous) expansion $\struct{A}^{\text{FO}}$ of $\struct{A}$ by all first-order definable relations has the Ramsey property, and $\Aut(\struct{A}^{\text{FO}})=\Aut(\struct{A})$. By the results of~\cite{kechris2005fraisse},  $\mathbf{G}$ is \emph{extremely amenable}; since~\cite{kechris2005fraisse} uses slightly different definitions, we refer to~\cite[Theorem 5.8]{hubickakonecny} for this precise statement. 
It follows that the prerequisites of Theorem~5 in~\cite{bodirsky_pinsker_ramsey_canonical} are satisfied, and hence there exists a mapping $f$ satisfying~\eqref{eq:closure} that is canonical from $\struct{A}$ to $\struct{B}$.
From~\eqref{eq:closure} it is easy to conclude that $f$ is an embedding from $\struct{A}'$ to $\struct{B}'$.
\end{proof}

In the proof of Theorem~\ref{thm:2NEXPTIME_for_GMSNP}\eqref{item:FO_rewritability}, we will also need the following corollary to Theorem~5 in~\cite{bodirsky_pinsker_ramsey_canonical} for polymorphisms of reducts of countable homogeneous Ramsey structures with finite relational signatures. 
The definition of $\cplmt{C}$ given above Theorem~\ref{th:canonical_ramsey} naturally extends to sets $C$ of functions from $X^m$ to $Y$ for a fixed $m\in \mathbb{N}$.
\begin{theorem}[cf.~Cor.~6 in~\cite{bodirsky_pinsker_ramsey_canonical}]    \label{th:canonical_ramsey_higher_ary} 
 Let $\struct{B}$ be a countable homogeneous Ramsey structure with a finite relational signature and let $\struct{A}$ be a reduct of $\struct{B}$.  
 Then, for every $m\in \mathbb{N}$ and every $m$-ary polymorphism $f'$ of $\struct{A}$, there exists an $m$-ary polymorphism 
 \begin{align*}
f\in \overline{\{ \beta  \circ  f' \circ  (\alpha_1,\dots, \alpha_m)  \mid  \alpha_1,\dots, \alpha_m,\beta  \in  \Aut(\struct{B})\}}  
\end{align*}
of $\struct{A}$ that is canonical with respect to $\struct{B}$.
\end{theorem}

\section{Containment \& FO-rewritability} 
\label{section:containment} 

To accurately measure the size of SNP sentences $\Phi$, we introduce the following four parameters:
\begin{itemize}
    \item the \emph{height} $\hh(\Phi)$ denotes the number of relation symbols in $\Phi$;
    \item the \emph{length} $\lh(\Phi)$ denotes the number of clauses in $\Phi$; 
    \item the \emph{width} $\wh(\Phi)$ denotes the maximum number of variables per clause in $\Phi$;
    \item the \emph{arity} $\ar(\Phi)$ denotes the maximum arity among all relation symbols in $\Phi$.  
\end{itemize}

\subsection{Connectedness}  \label{section:connectedness}

 An SNP $\tau$-sentence $\exists X_1,\dots, X_n \forall \bar{x}\ldotp \phi(\bar{x})$ is \emph{connected} in the sense of~\cite{bodirsky_asnp,Bodirsky_book} if, for each clause $\psi$ in $\phi$, the following $(\tau\cup \sigma)$-structure $\struct{N}_{\psi}$ is connected: the domain consists of all variables in $\psi$, and $\bar{t}\in R^{\struct{N}_{\psi}}$ if and only if $\neg R(\bar{t})$ is a disjunct in $\psi$ ($R\in \tau\cup \sigma$). 
For example, the sentences in equations~\eqref{ex:gmsnp_introduction} and~\eqref{ex:gmsnp_introduction2} from the introduction are connected, while the sentence
\begin{align}
\exists B,R\, \forall x,y,u,v\;  \big( \neg E(x,y) \vee \neg  E(u,v) \vee \neg  R(x,y) \vee B(u,v)\big) \label{eq:clause}
\end{align}
is not because 
the variable sets $\{x,y\}$ and $\{u,v\}$ partition the domain of the structure $\struct{N}_{\psi}$ associated to the unique clause in eq.~\eqref{eq:clause} into two disjoint substructures.
If $\Phi$ is a connected SNP sentence, then $\fm(\Phi)$ is preserved by taking disjoint unions, i.e., if $\struct{A},\struct{B} \in \fm(\Phi)$ have disjoint domains, then $\struct{A}+\struct{B} \in \fm(\Phi)$ ~\cite[Prop.~1.4.11]{Bodirsky_book}.

The next proposition was essentially proved in~\cite[Prop.~1]{bodirsky_asnp}, except that the authors did not comment on the complexity of the procedure arising from their proof.
\begin{proposition}[Proposition~1 in~\cite{bodirsky_asnp}] \label{prop:connected}
Every GMSNP sentence $\Phi$ is logically equivalent to a disjunction of connected $\GMSNP$ sentences $\Phi_1\vee\dots\vee\Phi_\ell$ over the same input  and existential signatures  with the following properties: 
\begin{itemize}
    \item  for $i\in [\ell]$, $\lh(\Phi_i)$ and $\wh(\Phi_i)$ are polynomial in $\lh(\Phi)$ and $\wh(\Phi)$, respectively;
    \item $\ell$ is 1-exponential in $\lh(\Phi)\cdot\wh(\Phi)$;
    \item the disjunction can be computed from $\Phi$ in deterministic 1-exponential time. 
\end{itemize} 
\end{proposition}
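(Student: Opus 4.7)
The plan is to transform $\Phi$ into its forbidden-pattern presentation, decompose each forbidden pattern into its connected components, and then convert the resulting conjunction of disjunctions into a disjunction of connected GMSNP sentences via distributivity. The key observation is that when a forbidden pattern $\phi_i$ factors as a conjunction of subpatterns on pairwise disjoint variable sets, the statement $\neg \exists \bar{x}_i\ldotp \phi_i$ becomes logically equivalent to a disjunction of the corresponding smaller non-existence statements; guarding is what lets us perform this factoring cleanly.

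Concretely, I would first rewrite $\Phi$ in the form $\exists \bar{X}\, \bigwedge_{i=1}^{\lh(\Phi)} \neg \exists \bar{x}_i\ldotp \phi_i(\bar{x}_i)$, where each $\phi_i$ is the conjunction of positive atoms and negated existential atoms obtained from the $i$-th clause $\psi_i$ by De Morgan. For each $i$, I would decompose the structure $\struct{C}_{\psi_i}$ -- whose domain consists of the variables of $\phi_i$ and whose relations record exactly the positive atoms of $\phi_i$ -- into its connected components. By guarding, every negated existential atom $\neg X(\bar{z})$ appearing in $\phi_i$ has $\bar{z}$ contained in the variable set of some positive atom, hence it belongs entirely to one component. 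Thus $\phi_i$ splits into subpatterns $\phi_i^1, \dots, \phi_i^{k_i}$ on pairwise disjoint variable sets, yielding $\neg \exists \bar{x}_i\ldotp \phi_i \equiv \bigvee_{j=1}^{k_i} \neg \exists \bar{x}_i^{j}\ldotp \phi_i^{j}$. Applying distributivity of conjunction over disjunction gives
\[
  \Phi \;\equiv\; \bigvee_{(j_1, \dots, j_{\lh(\Phi)}) \in \prod_i [k_i]}\; \exists \bar{X}\, \bigwedge_{i} \neg \exists \bar{x}_i^{j_i}\ldotp \phi_i^{j_i}.
\]
Each disjunct is a GMSNP sentence over the same input and existential signatures -- monotonicity is inherited verbatim, and guarding survives because the guard of each remaining negated existential atom lies in the same connected component -- and all its clauses are connected by construction, so the disjunct itself is connected.

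For the complexity bounds, each $k_i$ is at most the number of variables in the $i$-th clause, so $k_i \leq \wh(\Phi)$; this gives $\ell \leq \wh(\Phi)^{\lh(\Phi)} = 2^{O(\lh(\Phi)\cdot\wh(\Phi))}$, single-exponential in $\lh(\Phi)\cdot\wh(\Phi)$ as required. Each $\Phi_i$ retains $\lh(\Phi)$ clauses of width at most $\wh(\Phi)$, and enumerating all disjuncts takes deterministic single-exponential time. The main technical subtlety is verifying that guarding really does survive the decomposition (this is exactly where guarded -- as opposed to arbitrary -- existential symbols become essential); a minor preliminary step is to absorb any positive equality atoms in forbidden patterns, which monotonicity permits, via variable identification, so that the connected-components analysis deals only with genuine relational atoms.
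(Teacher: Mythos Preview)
The paper does not provide its own proof of this proposition; it is stated with a citation to \cite{bodirsky_asnp} (Proposition~1), noting only that the complexity bounds were not spelled out there. Your argument---decompose each forbidden pattern into the connected components of $\struct{C}_{\psi_i}$, use guarding to assign every negated $\sigma$-atom to a unique component, apply the equivalence $\neg\exists\bar{x}_i\,\phi_i\equiv\bigvee_j\neg\exists\bar{x}_i^j\,\phi_i^j$ for patterns on disjoint variable sets, then distribute and pull the second-order quantifiers through the disjunction---is correct and is precisely the standard construction behind the cited result; your complexity accounting also matches the claimed bounds.
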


Using Proposition~\ref{prop:connected} we can show that when studying the complexity of the the containment and the FO-rewritability problems for GMSNP, we may without loss of generality restrict ourselves to connected GMSNP.   

\begin{proposition} \label{cor:connected_matters_not}
    If the containment problem for connected GMSNP sentences is in $\TWONEXPTIME$, then so is the containment problem for general GMSNP sentences.
   If additionally the FO-rewritability problem for connected GMSNP sentences is in $\TWONEXPTIME$, then so is the FO-rewritability problem for general GMSNP sentences. 
\end{proposition}
\begin{proof} To prove the first statement, we start with an auxiliary claim.
\begin{claim}\label{claim:connected}
 Suppose that $\Phi\coloneqq\Phi_1\vee\dots\vee\Phi_k$ and $\Psi\coloneqq\Psi_1\vee\dots\vee\Psi_\ell$ are two finite disjunctions of connected $\GMSNP$ sentences. 
 Then the following are equivalent:
 \begin{enumerate}
     \item \label{item:disj1} $\fm(\Phi)\subseteq \fm(\Psi)$;
     \item \label{item:disj2}  for every $i\in[k]$, there exists $j\in[\ell]$ such that $\fm(\Phi_i)\subseteq \fm(\Psi_j)$. 
 \end{enumerate} 
\end{claim}   
\begin{claimproof}
``(\ref{item:disj1})$\Rightarrow$(\ref{item:disj2})'' Suppose, on the contrary, that for some $i\in[k]$, there is no $j \in [\ell]$ as claimed, i.e.,~for every $j\in[\ell]$ there exists a structure $\struct{A}_j\in\fm(\Phi_i)$ such that $\struct{A}_j\not\in\fm(\Psi_j)$.
Let $\struct{A}$ be the structure obtained by taking the disjoint union of isomorphic copies of $\struct{A}_1, \dots,\struct{A}_\ell$.
Then, the connectedness of $\Phi_i$ implies that $\struct{A}\in\fm(\Phi_i)$ and consequently $\struct{A}\in\fm(\Phi)$.
As $\fm(\Phi)\subseteq\fm(\Psi)$, we have $\struct{A}\in\fm(\Psi)$.
So, for some $j\in[\ell]$, we have $\struct{A}\in\fm(\Psi_j)$.
This means that there is a $\sigma$-expansion $\struct{A}^\sigma$ of $\struct{A}$ such that $\struct{A}^\sigma\models\forall\bar{x}\ldotp \psi_j(\bar{x})$, where $\sigma$ is the existential signature of $\Psi_j$ and $\psi_j(\bar{x})$ is the quantifier-free part of $\Psi_j$.
As $\forall\bar{x}\ldotp \psi_j(\bar{x})$ is a universal first-order sentence, we also have $\struct{A}_j^\sigma\models\forall\bar{x}\ldotp \psi_j(\bar{x})$, where $\struct{A}_j^\sigma$ is the substructure of $\struct{A}^\sigma$ induced on the elements of $\struct{A}_j$.
This contradicts the assumption that $\struct{A}_j\not\in\fm(\Psi_j)$. Hence, the right-hand side of the equivalence follows.

``(\ref{item:disj2})$\Rightarrow$(\ref{item:disj1})'' This direction is trivial.
\end{claimproof}
 
 By Proposition~\ref{prop:connected}, we have $\Phi = \Phi_1\vee\dots\vee\Phi_{k}$ and $\Psi = \Psi_1\vee\dots\vee\Psi_{\ell}$ such that all of the  $\Phi_i$ are connected and $\lh(\Phi_i)=p(\lh(\Phi_i))$, $\wh(\Phi_i)=p(\wh(\Phi))$, and $k=2^{p(|\Phi|)}$, where $p(x)$ is some polynomial independent of $\Phi$; and similarly for all of the $\Psi_j$.
    By Claim~\ref{claim:connected}, to check that $\Phi\subseteq \Psi$, one goes through all the mappings $f\in [\ell]^{[k]}$ and, for every such $f$ and every $i\in[k]$, checks whether $\Phi_i\subseteq\Psi_{f(i)}$.
    Let $T(\Phi_i,\Psi_j)$ be an upper bound on the time needed to test $\Phi_i\subseteq \Psi_j$.
    Then, the total time needed to test $\Phi\subseteq \Psi$ is at most $\ell^{k}\cdot k\cdot \max_{i,j} T\bigl(\Phi_i,\Psi_j\bigr)$, from which the statement follows.

    For the second statement in the proposition, we will also need an auxiliary claim.
 
\begin{claim}[cf.~Proposition~3.3 in~\cite{bodirsky2018_article}] \label{claim:connected2}
 Suppose that $\Phi\coloneqq\Phi_1\vee\dots\vee\Phi_k$ is a disjunction of connected $\GMSNP$ sentences for which there is no subset $I\subsetneq [k]$ such that $\Phi$ is  logically equivalent to $\bigvee_{i\in I} \Phi_i$.
 Then the following are equivalent:
 \begin{enumerate}
     \item \label{item:disj1a} There exists a FO-sentence $\Psi$ such that $\fm(\Phi)=\fm(\Psi)$.
     \item \label{item:disj2a}  For every $i\in[k]$, there exists a FO-sentence $\Psi_i$ such that $\fm(\Phi_i)=\fm(\Psi_i)$.
     \end{enumerate}  
\end{claim}
\begin{claimproof}
``(\ref{item:disj2a})$\Rightarrow$(\ref{item:disj1a})'' This direction is trivial.
 
 ``(\ref{item:disj1a})$\Rightarrow$(\ref{item:disj2a})'' Since $\Phi$ is monotone, the class $\overline{\fm(\Phi)}$ of all finite $\tau$-structures which do not satisfy $\Phi$ is closed under taking homomorphic images.
 Since $\overline{\fm(\Phi)}$ is definable in FO by $\neg \Psi$, it follows from the homomorphism preservation theorem for finite structures~\cite[Theorem~1.7]{rossman2008homomorphism} that 
it is defined by an existential positive sentence.
We may assume that this sentence is of the form $(\exists \bar{x}_1  \ldotp   \psi_1)  \vee \cdots \vee (\exists \bar{x}_n \ldotp   \psi_n)$, where each $\psi_i$ is a conjunction of $\tau$-atoms. Hence, $\fm(\Phi)$ is defined by $(\forall \bar{x}_1  \ldotp \neg \psi_1)  \wedge \cdots \wedge (\forall \bar{x}_n \ldotp \neg \psi_n)$, and we may assume that this sentence is the sentence $\Psi$.

Fix an arbitrary  $i\in [k]$. Since  there is no subset $I\subsetneq [k]$ such that $\Phi$ is  logically equivalent to $\bigvee_{i\in I} \Phi_i$, there exists $\struct{A}_i \in\fm(\Phi_i)$ such that $\struct{A}_i\not\in \fm(\Phi_j)$ for every $j\in [k]\setminus \{i\}$.
Let us fix such a structure $\struct{A}_i$. 
For a given (non-empty) finite $\tau$-structure $\struct{A}$, we assume without loss of generality that $\struct{A}$ and $\struct{A}_i$ have disjoint domains; otherwise we replace $\struct{A}_i$ with an isomorphic copy of itself. 
We claim that
\begin{align}
\struct{A}\models \Phi_i \quad \text{if and only if} \quad \struct{A}_i+ \struct{A} \models \Phi. \label{eq:weird_corespondence}
\end{align} 
Indeed, if $\struct{A}\models \Phi_i$, then, by the connectedness of $\Phi_i$, we have that $\struct{A}_i+\struct{A} \models \Phi_i$ and hence $\struct{A}_i+\struct{A} \models \Phi$.
Conversely, if $\struct{A}_i+\struct{A} \models \Phi$, then there exists $j\in [k]$ such that 
$\struct{A}_i+\struct{A} \models \Phi_j$.
Since $\Phi_j$ is monotone, $\fm(\Phi_j)$ is preserved under taking substructures.
Hence, it must be the case that $\struct{A} \models \Phi_j$ and $\struct{A}_i \models \Phi_j$.
Since $\struct{A}_i \models \Phi_j$, it follows from our assumption on $\struct{A}_i$ that $j=i$. Hence, $\struct{A} \models \Phi_i$.

Recall that \eqref{eq:weird_corespondence} holds if and only if $\struct{A}_i+ \struct{A} \models \Psi.$ 
We now construct, independently of $\struct{A}$, a FO-sentence $\Psi_i$ with the property that $\struct{A}_i+ \struct{A} \models \Psi$ if and only if $\struct{A} \models \Psi_i$; since then $\Phi_i$ is equivalent to $\Psi_i$, this finishes our proof.

To do so, let for every $j\in [n]$ the set $S_j$ consist of all pairs $(\theta,\theta')$ of subconjunctions of the conjunction $\psi_j$ which partition $\psi_j$ into two parts with disjoint sets of variables, and such that $\struct{A}_i\models\exists \bar{x}_j. \theta'$. 
 We claim that 
$$
\struct{A}\models  \bigwedge\nolimits_{(\theta,\theta')\in S_{j}} \forall \bar{x}_{j}\ldotp \neg  \theta
\quad 
\text{if and only if} \quad 
\struct{A}_i + \struct{A}\models  \forall \bar{x}_j \ldotp \neg \psi_j.
$$
Indeed, assume first that the left-hand side does not hold, i.e.,  there exists $(\theta,\theta')\in S_{j}$ such that $\struct{A} \models \exists \bar{x}_{j}\ldotp \theta$.
Then this together with  $\struct{A}_i \models \exists \bar{x}_{j} \ldotp  \theta'$ implies 
$\struct{A}_i+\struct{A} \models \exists \bar{x}_j \ldotp \psi_j$ because the witnesses for the two existential statements can be combined since $\theta,\theta'$ have disjoint sets of variables. For the other direction, suppose that the right-hand side does not hold, i.e.~$\struct{A}_i + \struct{A} \models \exists \bar{x}_j \ldotp \psi_j$, as witnessed by a tuple $\bar a$.
Recall that each $\psi_j$ is a conjunction of $\tau$-atoms.
Since
every tuple in a $\tau$-relation of $\struct{A}_i + \struct{A}$ is contained either in $A_i$ or in $A$, we can partition $\psi_j$ into two subconjunctions $\theta,\theta'$, where $\theta$ contains precisely those conjuncts which are witnessed by $\bar a$ within $\struct{A}$, and $\theta'$ contains those witnessed within $\struct{A}_i$. But then $\struct{A}\models\exists \bar x_j\theta$.

To finish the proof, we set
$$\Psi_i\coloneqq \bigwedge\nolimits_{j\in [n]}\bigwedge\nolimits_{(\theta,\theta')\in S_{j}} \forall \bar{x}_{j}\ldotp \neg  \theta.
$$  
Then we have $\struct{A}\models  \Psi_i$ if and only if  $\struct{A}_i + \struct{A}\models  \Psi $, as desired.
\end{claimproof}

 \medskip By Proposition~\ref{prop:connected}, we have that $\Phi = \Phi_1\vee\dots\vee\Phi_{k}$ for some connected GMSNP sentences $\Phi_i$ such that $\lh(\Phi_i)=p(\lh(\Phi_i))$, $\wh(\Phi_i)=p(\wh(\Phi))$, and $k=2^{p(|\Phi|)}$, where $p(x)$ is some polynomial independent of $\Phi$.
Since the containment problem for connected GMSNP is in $\TWONEXPTIME$, we can ensure within this time complexity that there is no subset $I\subsetneq [k]$ such that $\Phi$ is  logically equivalent to $\bigvee_{i\in I} \Phi_i$. 
Then, by Claim~\ref{claim:connected2}, to check FO-rewritability for $\Phi$, we can simply check this for each individual $\Phi_i$; let $T'(\Phi_i)$ be an upper bound on the runtime.
Then, the total time needed to test FO-rewritability for $\Phi$ is at most $$k^{k}\cdot k\cdot \max\nolimits_{i,j} T\bigl(\Phi_i,\Phi_j\bigr) + k\cdot \max\nolimits_{i} T'\bigl(\Phi_i\bigr),$$ from which the statement follows.
\end{proof}

\subsection{Recolourings} \label{section:recolourings}

We call a relational structure \emph{standard} if its domain is $[n]$ for some $n\in \mathbb{N}$.
 For an SNP $\tau$-sentence $\Phi$, we denote the class of all finite models of the first-order part of $\Phi$  (over the signature $\tau\cup \sigma$) by $\efm(\Phi)$.    
We define the $n$-\emph{colours} of $\Phi$, denoted $\colours(\Phi)$, as the set of all standard structures from $\efm(\Phi)$ of size $\leq n$.
 A \emph{recolouring} between two SNP $\tau$-sentences $\Phi_1$ and $\Phi_2$ is a mapping $\xi$ from $\colours(\Phi_1)$ to $\colours(\Phi_2)$, where $n\coloneqq \max(\ar(\Phi_1),\ar(\Phi_2))$, with the two following properties.
 First, the $\tau$-reducts of every $n$-colour of $\Phi_1$ and of its $\xi$-image are identical.
 Secondly, for every $\struct A\in \efm(\Phi_1)$, there is a structure $\xi'(\struct A)\in \efm(\Phi_2)$ on the same domain such that for all $\struct{T} \in \colours(\Phi_1)$ and every embedding $e\in \binom{\struct{A}}{\struct{T}}$ we have $e\in \smash{\binom{\xi'(\struct{A})}{\xi(\struct{T})}}$. 
 In other words, the following  extension $\xi'$ of $\xi$ is a well-defined mapping from $\efm(\Phi_1)$ to $\efm(\Phi_2)$: 
\begin{center} \vspace{0.75em}
\framebox{\parbox{0.73\textwidth}{ 
For every $\struct{A}\in \efm(\Phi_1)$, the structure $\xi'(\struct{A})$ on the same domain    as $\struct{A}$ is obtained by replacing for every  $\struct{T} \in \colours(\Phi_1)$ and   every embedding $e\in \binom{\struct{A}}{\struct{T}}$ the substructure $e(\struct{T})$ of $\struct A$ by $e(\xi(\struct{T}))$.
}} \vspace{0.75em}
\end{center}

Note that the second formulation makes it clear that $\xi'(\struct A)$ is unique, as $n$ is equal to $\max(\ar(\Phi_1),\ar(\Phi_2))$.
For the same reason, we have that $\struct{A}^\tau = \xi'(\struct A)^\tau$.  
Also observe that in the second  formulation the well-definedness of $\xi'$ is a non-trivial property since the required replacements might be on overlapping substructures. 
We remark that this definition of a recolouring is compatible with the definition of a recolouring for the logic MMSNP given in~\cite{madelaine2010containment,bodirsky2018_article}.
There, the unary existential symbols were interpreted as colours, and a recolouring was simply a mapping between the existential symbols which does not introduce any forbidden patterns.
It is noteworthy, however, that in that case the compatibility on overlapping structures is trivially satisfied.

 \begin{figure}[ht]
     \centering
      \includegraphics[width=0.98\textwidth]{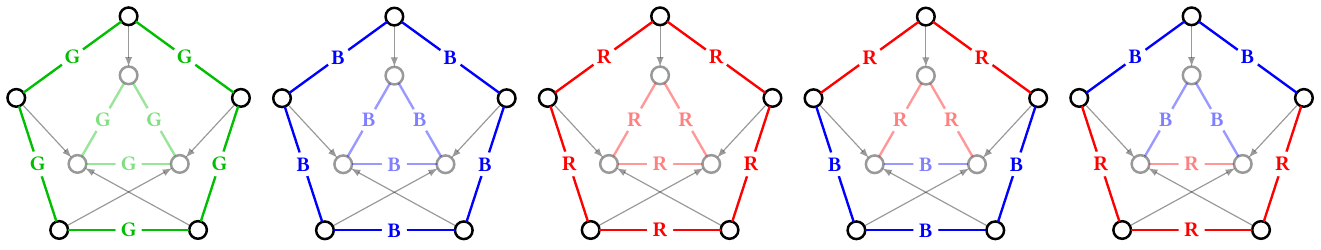}
     \caption{Pentagon-shaped forbidden colour-patterns of $\Phi_1$ from Example~\ref{ex:recolouring} and their triangle-shaped homomorphic images (which are \emph{implicitly} forbidden).}
     \label{fig:colour_patterns}
 \end{figure} 
 
\begin{example} \label{ex:recolouring}
Consider the GMSNP $\{E\}$-sentence ($E$ is binary) 
$$
\Phi_1\coloneqq \exists  R,G,B\,  \forall x_1,\dots,x_5\ldotp \phi_1(x_1,\dots,x_5)
$$
whose forbidden colour-patterns are the edge-colourings of pentagons as in  Figure~\ref{fig:colour_patterns};  
we additionally forbid uncoloured edges, asymmetric colours, and overlapping colours in order to ensure that the colours partition the $E$-edges:
\begin{align}
 & \big(\neg E(x_1,x_2) \vee \neg E(x_2,x_1) \vee R(x_1,x_2) \vee G(x_1,x_2) \vee B(x_1,x_2) \big) \label{eq:undirected}\\   
 {} \wedge {} &  \big(  \neg R(x_1,x_2) \vee \neg G(x_1,x_2)  \big) \wedge \big(  \neg R(x_1,x_2) \vee \neg G(x_2,x_1)  \big) \nonumber
 \\
 {}    \wedge {} & \big( \neg G(x_1,x_2) \vee \neg B(x_1,x_2)  \big) \wedge \big( \neg G(x_1,x_2) \vee \neg B(x_2,x_1)  \big)    \nonumber\\
 {} \wedge {} & \big(  \neg B(x_1,x_2) \vee \neg R(x_1,x_2)  \big)  \wedge  \big(  \neg B(x_1,x_2) \vee \neg R(x_2,x_1)  \big) \nonumber
\end{align}
 Note that, according to eq.~\eqref{eq:undirected}, $\Phi_1$ only requires symmetric $E$-edges of the form $E(x_1,x_2)\wedge E(x_2,x_1)$ to be coloured. 
Similarly, we define the GMSNP $\{E\}$-sentence
$$
\Phi_2\coloneqq \exists  P,G\, \forall x_1,x_2,x_3\ldotp \phi_2(x_1,x_2,x_3)   
$$
so that it forbids purple and green triangles.  
Now consider a map $\xi$ that sends undirected green coloured $E$-edges  edges to themselves, 
while merging the red and the blue colours to purple (on undirected $E$-edges).
The remaining structures in $\colours(\Phi_1)$, where $n=2$, consist either of a single vertex or two vertices that are not-$E$-related in at least one direction, possibly related with $\{R,G,B\}$-relations. 
They are mapped to $\colours(\Phi_2)$ in an arbitrary way consistent with the first recolouring condition, i.e., that the $\{E\}$-reducts are identical.
To show that $\xi$ is a recolouring, we must show that it does not produce green or purple triangles in edge-coloured graphs that originally did not admit a homomorphism from any of the five forbidden colourings of pentagons.
Clearly, $\xi$ cannot produce a green triangle:  the only way to produce a green triangle would be  to start with one, and a green triangle admits a homomorphism from a green pentagon, which is forbidden.
We can argue similarly in the case of a purple triangle, which can only be obtained with $\xi$ from one of the four red/blue-coloured triangles in Figure~\ref{fig:colour_patterns}.
All four of them admit a homomorphism from one of the forbidden coloured pentagons, as depicted in the figure. Hence, $\xi$ is a recolouring. 
\end{example}
\begin{lemma}\label{lemma:recolouring_nexptime} 
The existence of a recolouring between two SNP sentences $\Phi_1$ and $\Phi_2$ can be tested non-deterministically in time $$\mathcal{O}\Bigl(\lh(\Phi_1)\cdot \lh(\Phi_2) \cdot 2^{\wh(\Phi_1)\cdot \wh(\Phi_2)\cdot \hh(\Phi_1)\cdot\hh(\Phi_2) \cdot  2^{\ar(\Phi_1)\cdot \ar(\Phi_2)}}\Bigr).$$   
\end{lemma}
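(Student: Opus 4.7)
The plan is to non-deterministically guess the mapping $\xi\colon\colours(\Phi_1)\to\colours(\Phi_2)$ and then deterministically verify that it satisfies the two defining conditions of a recolouring. Setting $n\coloneqq\max(\ar(\Phi_1),\ar(\Phi_2))$, first I would enumerate the sets $\colours(\Phi_i)$ by iterating over all standard $(\tau\cup\sigma_i)$-structures on domains of size at most $n$ and filtering out those not in $\efm(\Phi_i)$; the latter is a local test controlled by the $\lh(\Phi_i)$ clauses of the first-order part of $\Phi_i$. The total number of candidate structures is bounded by $n\cdot 2^{\hh(\Phi_i)\cdot n^{\ar(\Phi_i)}}$, and a short calculation using $n\leq\ar(\Phi_1)+\ar(\Phi_2)$ shows that $n^{\ar(\Phi_i)}$ fits inside the factor $\wh(\Phi_1)\wh(\Phi_2)\cdot 2^{\ar(\Phi_1)\ar(\Phi_2)}$ appearing in the claimed bound. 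The subsequent non-deterministic guess of $\xi$ is a string of the same asymptotic length.

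After the $\tau$-reduct condition, which is trivial to verify, I would reduce the second recolouring condition to two finitary checks. Well-definedness of the extension $\xi'$ is equivalent to a \emph{consistency property}: for every pair $\struct T_0,\struct T\in\colours(\Phi_1)$ and every embedding $e\colon\struct T_0\hookrightarrow\struct T$, the $\sigma_2$-structure that $\xi(\struct T)$ induces on $e(T_0)$ equals the image $e(\xi(\struct T_0))$. Indeed, since $n\geq\ar(\Phi_2)$, every $\sigma_2$-atom of $\xi'(\struct A)$ for $\struct A\in\efm(\Phi_1)$ is forced by the $\xi$-image of the substructure of $\struct A$ on its arguments, and any conflict between two witnessing substructures localises to such a pair of $n$-colours. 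The requirement $\xi'(\struct A)\in\efm(\Phi_2)$ for all $\struct A\in\efm(\Phi_1)$ reduces analogously to a \emph{correctness test}: because the first-order part of $\Phi_2$ is universal with clauses of width at most $\wh(\Phi_2)$, it suffices to verify that for every $(\tau\cup\sigma_1)$-structure $\struct B$ on at most $\wh(\Phi_2)$ elements with $\struct B\in\efm(\Phi_1)$, the structure $\xi'(\struct B)$ satisfies every clause of $\Phi_2$'s first-order part.

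The consistency test loops over at most $|\colours(\Phi_1)|^2$ pairs and is polynomial per pair; the correctness test enumerates at most $2^{\hh(\Phi_1)\cdot\wh(\Phi_2)^{\ar(\Phi_1)}}$ structures, each processed in time polynomial in $\lh(\Phi_1)\cdot\lh(\Phi_2)$ and the size of $\xi$. A careful accounting using the estimates $n^{\ar(\Phi_i)},\wh(\Phi_2)^{\ar(\Phi_1)}\leq 2^{\mathcal{O}(\wh(\Phi_1)\wh(\Phi_2))}$ absorbs all polynomial slack into the inner exponent $\wh(\Phi_1)\wh(\Phi_2)\hh(\Phi_1)\hh(\Phi_2)\cdot 2^{\ar(\Phi_1)\ar(\Phi_2)}$ of the claimed bound. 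The main obstacle I anticipate is proving the equivalence of the two local tests to the global recolouring condition on arbitrary members of $\efm(\Phi_1)$: both reductions hinge on the uniqueness of $\xi'(\struct A)$, which uses that $n$ is at least the arity of every relation symbol in both signatures, together with the standard fact that universal first-order sentences are preserved under substructures.
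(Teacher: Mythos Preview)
Your proposal is correct and follows essentially the same approach as the paper: guess $\xi$ non-deterministically, then verify (i) the $\tau$-reduct condition, (ii) a local consistency condition ensuring $\xi'$ is well-defined, and (iii) a correctness condition by testing $\xi'$ only on structures of size at most $\wh(\Phi_2)$. Your embedding-based consistency test is a minor reformulation of the paper's condition that every partial isomorphism between $n$-colours of $\Phi_1$ remains a partial isomorphism between their $\xi$-images; the two are easily seen to be equivalent, and the complexity estimates match up to the same asymptotic bound.
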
%

\begin{proof} Let $n\coloneqq \max(\ar(\Phi_1),\ar(\Phi_2))$.
    Note that $|\colours(\Phi_1)|$ is bounded by the number of $(\tau\cup\sigma_1)$-structures on at most $n$ elements.
    Therefore, it is at most $N\coloneqq2^{\hh(\Phi_1)\cdot n^{\ar(\Phi_1)}}$.
    Let $\xi$ be any mapping from $\colours(\Phi_1)$ to $\colours(\Phi_2)$.
    In order to verify that $\xi$ is a recolouring, it suffices to check the following three conditions.
    \begin{enumerate}[label={\roman*.}] 
    \item \label{item:rec1} For every $\struct{T}\in\colours(\Phi_1)$, the $\tau$-reducts of $\struct{T}$ and $\xi(\struct{T})$ are identical.
    \item \label{item:rec2} For every pair $\struct{T},\struct{T}'\in\colours(\Phi_1)$, we have that every partial isomorphism $i$ from $\struct{T}$ to $\struct{T}'$ is also a partial isomorphism from $\xi(\struct{T})$ to $\xi(\struct{T}')$.
    This condition is necessary for the existence of a well-defined extension $\xi'$.  
    \item \label{item:rec3} The extension $\xi'$ maps $\efm(\Phi_1)$ to $\efm(\Phi_2)$.
\end{enumerate}

One can check Condition~\ref{item:rec1} by considering the $\xi'$-image of every element of $\colours(\Phi_1)$.
Since $|\colours(\Phi_1)|\leq N$, it can be checked in time $\mathcal{O}\bigl(\hh(\Phi_1)n^nN\bigr)$.
Suppose now  that condition~\ref{item:rec1} is satisfied.

To check condition~\ref{item:rec2}, one goes through all pairs of elements of $\colours(\Phi_1)$ and through all their substructures and checks whether the condition holds. 
This can be done in time $\mathcal{O}\bigl(N^2 (2n)^{2n}\hh(\Phi_2)\bigr)$. 
Suppose now that conditions~\ref{item:rec1},~\ref{item:rec2} are satisfied.

If condition~\ref{item:rec3} is not satisfied, then $\xi'$ is well-defined but introduces a forbidden pattern of $\Phi_2$ into a structure that originally omitted all forbidden patterns of $\Phi_1$.
More specifically, there exists a $(\tau\cup \sigma_1)$-structure $\struct{D}$ of size at most $\wh(\Phi_2)$ such that $\xi'(\struct{D})\notin \efm(\Phi_2)$ while $\struct{D} \in \efm(\Phi_1)$.
So, to check this condition, one goes through all standard $(\tau\cup \sigma_1)$-structures of size at most $\wh(\Phi_2)$; there are at most $2^{\wh(\Phi_2)^n\hh(\Phi_1)}$ of them.
Then, for each such structure $\struct{D}$, one checks in time 
$$\mathcal{O}\bigl(\lh(\Phi_1)\wh(\Phi_2)^{\wh(\Phi_1)}\hh(\Phi_1)n^n+\lh(\Phi_2)\wh(\Phi_2)^{\wh(\Phi_2)}\hh(\Phi_2)n^n\bigr)$$
the satisfiability of the first-order parts of $\Phi_1$ and $\Phi_2$ in $\struct{D}$ and in $\xi'(\struct{D})$, respectively.
In sum, the total time needed to check condition~\ref{item:rec3} is in $$\mathcal{O}\bigl(2^{\wh(\Phi_2)^n\hh(\Phi_1)}\bigl(\lh(\Phi_1)+\lh(\Phi_2)\bigr)\wh(\Phi_2)^{\wh(\Phi_1)+\wh(\Phi_2)}n^n(\hh(\Phi_1)+\hh(\Phi_2))\bigr).$$
This finishes the proof.
\end{proof}

\subsection{The theorem of Hubi\v{c}ka and Ne\v{s}et\v{r}il}

A general construction method of homogeneous Ramsey structures in a finite signature was provided by Hubi\v{c}ka and Ne\v{s}et\v{r}il~\cite{hubickanesetril2019}.
We use a variant of their result, which can be found in the appendix of~\cite{bodirsky2018_article} (Theorem~A.4 in the print version, or Theorem~A.3 in the arXiv version).

The following notion, originating from~\cite{hubicka2015}, is of essential importance in the present article.
A \emph{piece} of a structure $\struct{B}$ is a pair $(\struct{P},\bar{t})$, where $\struct{P}$ is a proper substructure of $\struct{B}$ (with domain $P  \subsetneq B$)
and $\bar t$ (called the \emph{root} of $(\struct{P},\bar{t})$) is a tuple with pairwise distinct entries enumerating a non-empty subset $T$ of $P$  such that every tuple $\bar s$ contained in a relation of $\struct B$ is completely contained either in $P$ or in $T\cup (B\setminus P)$.  

 To every piece $(\struct{P},\bar{t})$, we assign a relational symbol of arity equal to the length of $\bar t$, which for the sake of brevity will be denoted by the piece itself.   
For an atomic formula $(\struct{P},\bar{t})(\bar{x})$ using such a symbol, let $\pre{(\struct{P},\bar{t})(\bar{x})}$ be the $\tau$-formula obtained from the canonical query of $\struct{P}$ by substituting $\bar{x}$ for $\bar{t}$ in the order in which both tuples are listed; that is, the formula asserts about $\bar x$  all atomic  $\tau$-formulas (with parameters) that hold for $\bar t$ in $\struct{P}$.  
Finally, the primitive positive formula $\prexists{(\struct{P},\bar{t})(\bar{x})}$ is obtained from $\pre{(\struct{P},\bar{t})(\bar{x})}$ by existentially quantifying over all variables which are not contained in $\bar{x}$. That is, the formula asserts the existence of a homomorphic image of the piece $(\struct{P},\bar{t})$ around $\bar x$, with $\bar x$ taking the place of $\bar t$.  

\begin{remark} The definition of a piece in the present paper is more general than the one in~\cite{hubicka2015} because we do not require $\bar t$ to be a \emph{minimal separating cut} of $\struct{P}$. 
Consequently, the arity of the relation symbols $(\struct{P},\bar{t})$ might be larger than in~\cite{hubicka2015}.  
\end{remark}
 
For a set  $\mathcal{F}$ of structures with a signature $\tau$, we denote by $\Forb_{h}(\mathcal{F})$ the class of all finite $\tau$-structures which do not admit a homomorphism from any member of $\mathcal{F}$.
The following theorem of Hubi\v{c}ka and Ne\v{s}et\v{r}il states that every class of the form $\Forb_{h}(\mathcal{F})$, where each member of $\mathcal{F}$ is connected, almost has the amalgamation property, up to taking an expansion where pieces of the structures in $\mathcal{F}$ are marked using fresh predicates $\rho$; moreover, the class of all linear-order expansions of structures from such a class  has the Ramsey property.

\begin{theorem}[Hubi\v{c}ka and Ne\v{s}et\v{r}il, Theorem~A.4 in~\cite{bodirsky2018_article}] \label{thm:hubicka_nesetril} Let $\mathcal{F}$ be a finite set of finite connected structures over a common finite relational signature $\tau$.
Let $\rho$ be the signature whose elements are the pieces $(\struct{P},\bar{t})$ of structures from $\mathcal{F}$.
Consider the class $\mathcal{K}_{\mathrm{HN}}(\mathcal{F})$ consisting of all substructures of those $\rho$-expansions of structures from $\Forb_{h}(\mathcal{F})$ that satisfy 
\begin{equation}
    \forall \bar{x} \big((\struct{P},\bar{t})(\bar{x})  \iff   \prexists{(\struct{P},\bar{t})(\bar{x})} \big). \label{eq:HN} 
\end{equation} 
 Then the class $\mathcal{K}^{<}_{\mathrm{HN}}(\mathcal{F}):=\bigl(\mathcal{K}_{\mathrm{HN}}(\mathcal{F})\bigr)^<$  has the RP and the AP.
\end{theorem}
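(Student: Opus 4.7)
The plan is to verify the amalgamation property and the Ramsey property separately; closure of $\mathcal{K}^{<}_{\mathrm{HN}}(\mathcal{F})$ under isomorphism and substructure is immediate from the definition.

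For the \textbf{amalgamation property}, suppose $\struct{A},\struct{B}\in \mathcal{K}^{<}_{\mathrm{HN}}(\mathcal{F})$ share a common substructure $\struct{C}$ on $A\cap B$. The natural candidate is the free superposition $\struct{W}$ on domain $A\cup B$: on the $\tau$-reduct, set $R^{\struct{W}} := R^{\struct{A}}\cup R^{\struct{B}}$ so that no new $\tau$-tuples crossing the common part are introduced; extend the linear orders compatibly; and then declare each piece-atom $(\struct{P},\bar{t})(\bar{x})$ to hold in $\struct{W}$ exactly when $\prexists{(\struct{P},\bar{t})(\bar{x})}$ does, so that the biconditional~\eqref{eq:HN} holds by fiat. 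The key verifications are that (i) $\struct{W}^{\tau}\in \Forb_h(\mathcal{F})$ and (ii) both $\struct{A}\hookrightarrow \struct{W}$ and $\struct{B}\hookrightarrow \struct{W}$ preserve all predicates (including those in $\rho$). For (i), every $\struct{F}\in \mathcal{F}$ is connected, so a hypothetical homomorphism $h\colon \struct{F}\to \struct{W}^{\tau}$ that hits both $A\setminus B$ and $B\setminus A$ must pass through a separating tuple $\bar{t}\subseteq A\cap B$ inducing a piece $(\struct{P},\bar{t})$ of $\struct{F}$; using~\eqref{eq:HN} on that piece one can locally reroute $h$ into a single side and derive a contradiction with $\struct{A}^{\tau},\struct{B}^{\tau}\in \Forb_h(\mathcal{F})$. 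For (ii), any piece-atom on $\struct{A}$ that held in $\struct{A}$ is witnessed by the existential formula on the $\struct{A}$-side of $\struct{W}$; conversely, a new piece-atom on an $\struct{A}$-tuple in $\struct{W}$ would have a connected existential witness, and the same connectedness argument forces that witness to live entirely in $\struct{A}$, so the atom was already present.

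For the \textbf{Ramsey property}, I would follow the partite construction of Ne\v{s}et\v{r}il and R\"{o}dl in the form refined by Hubi\v{c}ka--Ne\v{s}et\v{r}il. The class of all finite linearly ordered $(\tau\cup \rho)$-structures has the Ramsey property by the classical theorem for ordered relational structures, because the linear order renders every finite structure rigid. Given $\struct{A},\struct{B}\in \mathcal{K}^{<}_{\mathrm{HN}}(\mathcal{F})$ and $k\in \mathbb{N}$, one builds a ``picture'' skeleton with many copies of $\struct{B}$ placed over a Ramsey-theoretic index structure, and then amalgamates layer by layer, at each step invoking the AP established above to remain inside $\mathcal{K}^{<}_{\mathrm{HN}}(\mathcal{F})$. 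The standard partite lemma then converts a colouring of $\binom{\struct{W}}{\struct{A}}$ into a colouring of embeddings of $\struct{A}$ into the rigid skeleton, where the classical Ramsey theorem finds the desired monochromatic copy of $\struct{B}$.

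The \textbf{main obstacle} is controlling the piece-marking as one amalgamates: the biconditional~\eqref{eq:HN} is a closure property, so the amalgam cannot be taken entirely freely but must be ``completed'' by precisely those piece-atoms forced by the existential witnesses appearing after the gluing, and no others. The heart of the Hubi\v{c}ka--Ne\v{s}et\v{r}il technique is precisely to show that this completion operation is compatible with free amalgamation whenever the forbidden homomorphic images in $\mathcal{F}$ are connected, so that each partite amalgamation step both stays in $\Forb_h(\mathcal{F})$ and preserves~\eqref{eq:HN}. I would expect the rest of the proof to be relatively routine bookkeeping once this compatibility, together with the rigidity supplied by the linear order, is in place.
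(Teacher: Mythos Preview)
First, note that the paper does not prove this theorem: it is cited from Hubi\v{c}ka and Ne\v{s}et\v{r}il (via Theorem~A.4 in~\cite{bodirsky2018_article}) and used as a black box throughout, so there is no in-paper proof to compare against.

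Your AP outline has a real gap. By definition, $\struct{A},\struct{B}\in\mathcal{K}^{<}_{\mathrm{HN}}(\mathcal{F})$ are merely \emph{substructures} of $\rho$-expansions $\struct{A}',\struct{B}'$ satisfying~\eqref{eq:HN}; they need not satisfy~\eqref{eq:HN} themselves. Consequently, your rerouting in (i) cannot invoke~\eqref{eq:HN} on $\struct{A}$ or $\struct{B}$, and your forward claim in (ii) that ``any piece-atom on $\struct{A}$ that held in $\struct{A}$ is witnessed by the existential formula on the $\struct{A}$-side of $\struct{W}$'' fails, since the witness may live in $A'\setminus A$. The natural fix is to amalgamate $\struct{A}'$ and $\struct{B}'$ instead (they do agree on $C$, because $\struct{A}'|_C=\struct{A}|_C=\struct{B}|_C=\struct{B}'|_C$), but even then your backward direction of (ii) is incomplete: under the paper's definition a piece $(\struct{P},\bar{t})$ need not be connected (see the remark following the definition), so a witnessing homomorphism $h\colon\struct{P}\to\struct{A}'\cup\struct{B}'$ with $h(\bar{t})\subseteq A'$ can genuinely straddle both sides, and ``the same connectedness argument'' does not apply. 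The actual argument is inductive: the portion of $h$ landing in $\struct{B}'$ determines a smaller piece whose $\rho$-atom holds on $C$ and hence in $\struct{A}'$, and~\eqref{eq:HN} in $\struct{A}'$ then supplies a replacement witness there; one iterates. This inductive rerouting, together with its delicate interaction with the partite construction in the RP step, is precisely the non-trivial content of the Hubi\v{c}ka--Ne\v{s}et\v{r}il theorem, and your sketch does not supply it.
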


To illustrate the construction in Theorem~\ref{thm:hubicka_nesetril}, consider the two sets $\mathcal{F}_1$ and $\mathcal{F}_2$ consisting of the canonical databases of the forbidden patterns of $\Phi_1$ and $\Phi_2$ from Example~\ref{ex:recolouring}. 
It is not hard to see that $\Forb_{h}(\mathcal{F}_2) $ has the free amalgamation property because the forbidden graphs are coloured cliques; moreover, $\Forb_{h}(\mathcal{F}_2)^{<}$ has the AP and the RP by the theorem of Ne\v{s}et\v{r}il and R\"{o}dl~\cite{NESETRIL1983183}.
Therefore, $\mathcal{F}_2$ represents a trivial case where adding $\rho$-predicates is not necessary.
This is not true for $\mathcal{F}_1$, since $\Forb_{h}(\mathcal{F}_1)$ does not have the AP. 
In Figure~\ref{fig:pieces}, we provide a graphical representation of the pieces of structures that need to be stored using $\rho$-predicates in order to obtain a class of expansions of structures from $\Forb_{h}(\mathcal{F}_1)$ with the AP and the RP.

\begin{figure}[ht]
     \centering
      \includegraphics[width=0.98\textwidth]{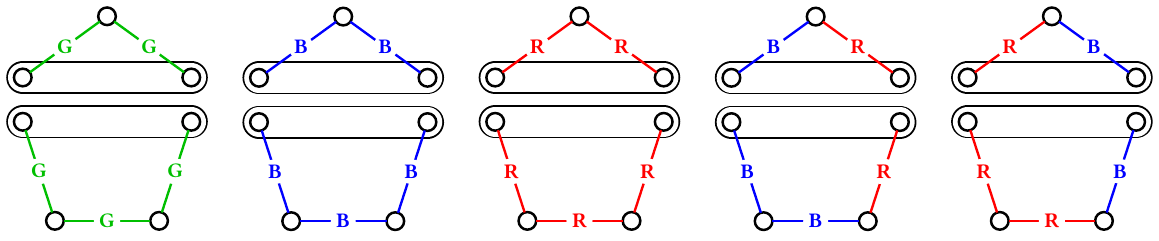}
     \caption{An illustration of some pieces of the canonical databases of the forbidden patterns of the SNP sentence $\Phi_1$ from Example~\ref{ex:recolouring}; the roots of the pieces are marked by the black lines.}
     \label{fig:pieces}
 \end{figure}

\subsection{A proof of Theorem~\ref{thm:2NEXPTIME_for_GMSNP}(\ref{item:containment})} 

Besides the reduction to the connected case, the proof of Theorem~\ref{thm:2NEXPTIME_for_GMSNP}(\ref{item:containment}) is a straightforward consequence of the combination of the following three intermediate results.
First, Lemma~\ref{lemma:recolouring_nexptime} from Section~\ref{section:recolourings} gives a (non-deterministic) 2-exponential upper bound on the time complexity of finding a recolouring between two SNP sentences.
Second, Proposition~\ref{prop:from_GMSNP_to_SNP_with_AP_and_RP} below states that every connected GMSNP sentence is equivalent to an at most 2-exponentially larger SNP sentence whose class of expanded finite models enjoys the AP and the RP. 
Third, Proposition~\ref{prop:recolouring_containment} states that, for SNP sentences with the latter property, containment is equivalent to the existence of a recolouring. 
Here we are in fact very lucky that the parameters on which  Lemma~\ref{lemma:recolouring_nexptime} and  Proposition~\ref{prop:from_GMSNP_to_SNP_with_AP_and_RP} depend 2-exponentially are different -- this will become clear in the proof of Theorem~\ref{thm:2NEXPTIME_for_GMSNP}.

 \begin{restatable}{proposition}{SNPAPRP}    \label{prop:from_GMSNP_to_SNP_with_AP_and_RP}  
 From every connected GMSNP $\tau$-sentence $\Phi$ one can compute in deterministic 2-exponential time 
 an SNP $\tau$-sentence $\Delta(\Phi)$ such that:
\begin{enumerate} 
    \item $\wh(\Delta(\Phi)) \in \mathcal{O}\bigl(\wh(\Phi)\bigr)$;
    \item $\ar(\Delta(\Phi)) \in \mathcal{O}\bigl(\ar(\Phi)+\wh(\Phi)\bigr)$;
    \item $\hh(\Delta(\Phi)) \in \mathcal{O}\bigl(\hh(\Phi)+\lh(\Phi)\cdot 2^{\wh(\Phi)}\bigr)$
    \item $\lh(\Delta(\Phi)) \in \mathcal{O}\bigl(2^{(\hh(\Phi)+\lh(\Phi))\cdot 2^{\wh(\Phi)+\ar(\Phi)}}\bigr)$;
    \item \label{item:AP_andRP} $\efm(\Delta(\Phi))$ has the AP and the RP;
    \item $\fm(\Delta(\Phi))=\fm(\Phi)$.
\end{enumerate}

\end{restatable}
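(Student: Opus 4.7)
The goal is to construct $\Delta(\Phi)$ as an SNP sentence describing the Ramsey expansion provided by Theorem~\ref{thm:hubicka_nesetril} when applied to the forbidden patterns of $\Phi$. First I would bring $\Phi$ into a shape suitable for $\Forb_h$: writing $\Phi$ as $\exists\bar X\,\forall\bar x\,\bigwedge_i\neg\phi_i(\bar x)$, each $\phi_i$ is a conjunction of positive $\tau$- and $\sigma$-atoms together with possibly some negated $\sigma$-atoms. For every $X_j\in\sigma$ I would introduce a fresh relation symbol $\cplmt{X_j}$ of the same arity, substitute $\cplmt{X_j}(\bar t)$ for every occurrence of $\neg X_j(\bar t)$ in the $\phi_i$'s, and add the (non-monotone but SNP-legal) complementarity axioms $\forall\bar t\,(X_j(\bar t)\vee\cplmt{X_j}(\bar t))\wedge\forall\bar t\,\neg(X_j(\bar t)\wedge\cplmt{X_j}(\bar t))$. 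The canonical databases $\struct F_i$ of the rewritten $\phi_i$'s then form a finite set $\mathcal F$ of connected $(\tau\cup\sigma\cup\cplmt{\sigma})$-structures of size $\leq\wh(\Phi)$, connectedness being inherited from $\Phi$ since we only add edges to the canonical databases $\struct C_\psi$.

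I would then apply Theorem~\ref{thm:hubicka_nesetril} to $\mathcal F$ to obtain the piece-signature $\rho$, whose cardinality is $\mathcal O(\lh(\Phi)\cdot 2^{\wh(\Phi)})$ by counting pairs $(\struct P,\bar t)$ inside each $\struct F_i$. Define $\Delta(\Phi)$ as the SNP $\tau$-sentence with existential signature $\sigma\cup\cplmt{\sigma}\cup\rho\cup\{<\}$ whose universal part consists of (a) the linear-order axioms for $<$; (b) the complementarity axioms above; (c) the rewritten clauses $\forall\bar x\,\neg\phi_i(\bar x)$; (d) for each piece $(\struct P,\bar t)\in\rho$ the universal implication $\forall\bar x\,\bar y\,\bigl(\pre{(\struct P,\bar t)(\bar x)}\Rightarrow(\struct P,\bar t)(\bar x)\bigr)$ encoding the ``$\Leftarrow$''-direction of~\eqref{eq:HN}; and (e) one forbidden-substructure clause for each minimal substructure-obstruction to membership in $\mathcal K^{<}_{\mathrm{HN}}(\mathcal F)$ of size $\leq\wh(\Phi)$. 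Each clause has width $\mathcal O(\wh(\Phi))$; piece-predicates have arity $\leq\wh(\Phi)$, yielding $\ar(\Delta(\Phi))\in\mathcal O(\ar(\Phi)+\wh(\Phi))$; and a coarse count of $\wh(\Phi)$-element candidate structures over the expanded signature matches the stated double-exponential bound on $\lh(\Delta(\Phi))$, while the entire construction runs in deterministic double-exponential time by straight enumeration of pieces and candidate minimal obstructions.

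The equality $\fm(\Delta(\Phi))=\fm(\Phi)$ should follow routinely: from a $\sigma$-expansion witnessing $\struct A\in\fm(\Phi)$ I build a model of $\Delta(\Phi)$ by taking $\cplmt{X_j}$ as the set-theoretic complement of $X_j$, asserting each piece-predicate exactly at the tuples where its piece is realised, and picking any linear order on $A$; conversely the $\sigma$-reduct of any model of $\Delta(\Phi)$ witnesses $\struct A\in\fm(\Phi)$ because clause~(c) literally lists the forbidden patterns of $\Phi$. The main obstacle will be item~\eqref{item:AP_andRP}: verifying that $\efm(\Delta(\Phi))$ itself (and not merely the larger class $\mathcal K^{<}_{\mathrm{HN}}(\mathcal F)$) has the AP and RP. Theorem~\ref{thm:hubicka_nesetril} gives both for $\mathcal K^{<}_{\mathrm{HN}}(\mathcal F)$, and my plan for the subclass is to show that any HN-amalgam can be saturated to complementarity by declaring $\cplmt{X_j}(\bar t)$ on every new ``mixed'' tuple, and that no new forbidden $\phi_i$-pattern arises on such a completion because the positive $\tau$-part of $\phi_i$, being connected, must embed entirely into one of the two amalgamation factors (where it was already forbidden). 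The RP for $\efm(\Delta(\Phi))$ would then transfer from the RP of $\mathcal K^{<}_{\mathrm{HN}}(\mathcal F)$ through its reduct relationship, possibly invoking Theorem~\ref{th:canonical_ramsey} to control the comparison of embeddings between the two classes.
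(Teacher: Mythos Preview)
Your plan diverges from the paper at the point that matters most, and the divergence leaves a real gap in item~\eqref{item:AP_andRP}. By imposing the global complementarity axioms $\forall\bar t\,(X_j(\bar t)\vee\cplmt{X_j}(\bar t))$ as clauses of $\Delta(\Phi)$, you make $\efm(\Delta(\Phi))$ a \emph{proper} subclass of $\mathcal K^{<}_{\mathrm{HN}}(\mathcal F)$, and neither the AP nor the RP passes to subclasses in general. Your AP sketch has several problems: it is not the ``positive $\tau$-part'' of $\phi_i$ that is connected (this can be disconnected), but the structure $\struct C_\psi$ built from \emph{all} negative atoms of the clause, i.e.\ the $(\tau\cup\sigma)$-reduct of $\struct F_i$; the argument tacitly assumes the HN amalgam is free, which the cited theorem does not assert; and after declaring $\cplmt{X_j}$ on mixed tuples you have created new piece-witnesses, so the $\rho$-predicates no longer satisfy the closure in~\eqref{eq:HN} and must be repaired, potentially cascading. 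Your RP argument is effectively absent: Theorem~\ref{th:canonical_ramsey} produces canonical embeddings between two fixed structures and says nothing about transferring the Ramsey property from a class to a subclass.

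The paper sidesteps this entirely by a trick you are missing. It introduces signed copies $R_+,R_-$ for \emph{every} $R\in\tau\cup\sigma$ (not only for $\sigma$) and, crucially, does \emph{not} impose complementarity at all. Theorem~\ref{thm:hubicka_nesetril} is applied to the rewritten patterns, yielding a homogeneous Ramsey limit $\struct C$ with $\age(\struct C)=\mathcal K^{<}_{\mathrm{HN}}(\mathcal F)$. The input relations $R\in\tau$ (and the $\sigma$-relations) are then \emph{recovered} by the quantifier-free formula~\eqref{eq:correct}: $R(\bar x)$ holds iff $R_+(\bar x)$ holds and the set $\{x_1,\dots,x_k\}$ is correctly labelled. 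Because this is a quantifier-free definitional expansion, the resulting structure $\struct C_\Phi$ has the same automorphism group as $\struct C$, hence is still homogeneous Ramsey, and $\efm(\Delta(\Phi))=\age(\struct C_\Phi)$ inherits AP and RP for free. The price is that $\tau'\cup\sigma'$ must also be existentially quantified in $\Delta(\Phi)$ (the input $\tau$-relations are now \emph{derived}), which is exactly why $\Delta(\Phi)$ is SNP rather than GMSNP. Your clauses (d) and (e) are also too coarse compared with the paper's items (ii) and (iii), which must allow $\rho$-atoms in the premises so as to axiomatise ``substructure of something satisfying~\eqref{eq:HN}'' rather than~\eqref{eq:HN} itself.
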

 
 The proof of Proposition~\ref{prop:from_GMSNP_to_SNP_with_AP_and_RP} below builds on the proof of Theorem~5 in~\cite{bodirsky_asnp}.
\begin{proof}
Let $\Phi$ be an arbitrary connected GMSNP $\tau$-sentence.
We first obtain the signatures $\tau'$ and $\sigma'$ by introducing, for every $R\in \tau\cup \sigma$, two new symbols $R_+$ and $R_-$ of the same arity. 
Next, we obtain an auxiliary GMSNP $\tau'$-sentence $\Phi'$ from $\Phi$ as follows.
For each forbidden pattern $\phi$ of $\Phi$, sentence $\Phi'$ contains the forbidden pattern $\phi'$ obtained by replacing each positive $(\tau\cup\sigma)$-atom $R(\bar{x})$ with $R_+(\bar{x})$ and each negative $(\tau\cup\sigma)$-atom $\neg R(\bar{x})$ with $R_-(\bar{x})$. 

Let $\mathcal{F}$ be the set of canonical databases for the forbidden patterns of $\Phi'$ (which are just conjunctions of positive $\tau'\cup \sigma'$-atoms) and $\rho$ be the signature whose elements are the pieces of structures from $\mathcal F$. 
Consider the $\tau'$-sentence $\Phi''$ obtained from $\Phi'$ by including the following new clauses. 
\begin{enumerate}[label={\roman*.}] 
\item \label{item:clauses1} Clauses defining the linear order using irreflexivity, totality, and transitivity:
    \[
    \forall x,y,z  \, \neg (x<x) \wedge    \big( (x<y) \vee (x=y) \vee (y<x) \big) \wedge  \big(  (x<y) \wedge (y<z) \Rightarrow  (x<z) \big).
    \] 
    \item \label{item:clauses2} Each clause $\phi$ over 
    $\tau'\cup \sigma' \cup \rho$ of the form $\psi(\bar{x},\bar{y})\Rightarrow (\struct{P}',\bar{t}')(\bar{x})$ (for $(\struct{P}',\bar{t}')\in \rho$) with at most $\wh(\Phi)$-many variables, where $\psi$ is a conjunction of positive atoms with the following property: 
\begin{center} \vspace{0.75em}
\framebox{\parbox{0.68\textwidth}{ 
 Replacing every atomic subformula $(\struct P ,\bar t)(\bar{u})$  of $\psi$ with $\pre{(\struct P ,\bar t)(\bar{u})}$ yields a $(\tau'\cup \sigma')$-formula $\psi^{-1}$ such that there exists a homomorphism $h$ from the canonical database of $\pre{(\struct{P}',\bar{t}')(\bar{x})}$ to the canonical database of $\psi^{-1}$ with $h(\bar{x})=\bar{x}$.   
}} \vspace{0.75em}
\end{center} 
    \item \label{item:clauses3} Each clause $\phi$ over 
    $\tau'\cup \sigma' \cup \rho$ with at most $\wh(\Phi)$-many variables whose forbidden pattern is a conjunction of positive atoms with the following property: 
 \begin{center} \vspace{0.75em}
\framebox{\parbox{0.58\textwidth}{ 
 Replacing every atomic subformula  $(\struct P ,\bar t)(\bar{u})$ with $\pre{(\struct P ,\bar t)(\bar{u})}$ yields a formula whose canonical database admits a homomorphism from a member of $\mathcal{F}$. 
}} \vspace{0.75em}
\end{center} 
    \end{enumerate}

Before we proceed further, let us assess the size of $\Phi''$ in detail.
The number of variables per clause remains in $\mathcal{O}(\wh(\Phi))$.
The arity of the symbols from $\rho$ is in $\mathcal{O}\bigl(\ar(\Phi)+\wh(\Phi)\bigr)$, because $\Phi''$ inherits all symbols from $\Phi'$ and gains additional symbols whose arity is bounded by the domain size of structures in $\mathcal{F}$.
The number of relation symbols increases exponentially in $\wh(\Phi)$ and polynomially in the remaining parameters, more specifically it is in $\mathcal{O}\bigl(\hh(\Phi)+\lh(\Phi)\cdot 2^{\wh(\Phi)}\bigr)$; the reason is that the pieces of structures in $\mathcal{F}$ are generated by all possible substructures.
Regarding the number of clauses: item~\ref{item:clauses1} produces constant number of clauses and items~\ref{item:clauses2} and~\ref{item:clauses3} produce $\mathcal{O}\bigl(2^{(\hh(\Phi)+\lh(\Phi))\cdot 2^{\wh(\Phi)+\ar(\Phi)}} \bigr)$ many clauses.
The reason for the 2-exponential blow-up in item~\ref{item:clauses3} is that there are as many new clauses as there are structures of size at most $\wh(\Phi)$ over the signature of $\Phi''$.
Similarly, also item~\ref{item:clauses2} yields a 2-exponential blow-up; here the number from item~\ref{item:clauses3} is additionally multiplied with the number of symbols in $\rho$ (i.e., pieces of structures in $\mathcal{F}$), which is in $\mathcal{O}\bigl(\lh(\Phi)\cdot 2^{\wh(\Phi)}\bigr)$.

Note that all structures in $\mathcal{F}$ are connected since $\Phi$ is connected.
By Theorem~\ref{thm:hubicka_nesetril}, the class $\mathcal{K}^{<}_{\mathrm{HN}}(\mathcal{F})$ over the signature $\tau'\cup\sigma'\cup \rho \cup \{<\}$ has the AP and the RP.
 Theorem~\ref{theorem:fraisse_2} then implies that there exists a homogeneous Ramsey structure $\struct{C}$ with $\age(\struct{C})=\mathcal{K}^{<}_{\mathrm{HN}}(\mathcal{F})$.
 By the construction of $\Phi''$, we have that $\age(\struct{C})=\efm(\Phi'')$ (Claim~\ref{claim:age_coloured}); before we give a rigorous proof, we provide an intuitive explanation of why this is the case.
First, item~\ref{item:clauses1} ensures that $<$ interprets as a linear order in each $\struct{A}\in \efm(\Phi'')$ which, for lack of additional clauses with $<$, is independent of the remaining signature; the same 
 is true in each $\struct{A}\in \age(\struct{C})$.
Next, consider the remaining part $\tau'\cup \sigma'\cup\rho$ of the signature. If $\struct A\in\age(\struct C)$, then by definition, it does not admit a homomorphism from any member of $\mathcal F$; moreover,  its $\rho$-predicates are obtained through equation~\eqref{eq:HN} by definition, and hence it satisfies items~\ref{item:clauses2} and~\ref{item:clauses3}. Whence, $\struct A\in\efm{(\Phi'')}$. Conversely, if a finite structure  $\struct A$ satisfies the first-order part of $\Phi''$, then extending it by new elements so that the truth of all $\rho$-predicates is witnessed as in equation~\eqref{eq:HN} yields a structure in which the equivalence~\eqref{eq:HN} holds (because of item~\ref{item:clauses2}) and which does not admit a homomorphism from any member from $\mathcal F$ (because of item~\ref{item:clauses3}); hence $\struct A\in\age(\struct C)$.

\begin{claim} \label{claim:age_coloured}
    $\age(\struct{C})=\efm(\Phi'')$
\end{claim}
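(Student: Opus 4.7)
The plan is to prove the two inclusions separately, following the intuitive explanation already given in the text but making the key amalgamation argument explicit. Throughout, I will use that pieces $(\struct{P},\bar t)$ of members of $\mathcal{F}$ have size at most $\wh(\Phi)$, so that the width restriction on clauses in items~\eqref{apitem:clauses2} and~\eqref{apitem:clauses3} covers all the relevant patterns.

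For the direction $\age(\struct{C})\subseteq\efm(\Phi'')$, let $\struct{A}\in\age(\struct{C})=\mathcal{K}^{<}_{\mathrm{HN}}(\mathcal{F})$, so that $\struct{A}$ is a substructure of a $\rho$-expansion $\struct{B}$ of some $\struct{B}_0\in\Forb_h(\mathcal{F})$ equipped with a linear order and satisfying equation~\eqref{eq:HN}. The clauses in item~\eqref{apitem:clauses1} hold in $\struct{A}$ since they hold in $\struct{B}$ and are preserved under taking substructures. For a clause $\psi(\bar x,\bar y)\Rightarrow (\struct{P}',\bar t')(\bar x)$ from item~\eqref{apitem:clauses2}, suppose $\psi$ is witnessed in $\struct{A}$, hence in $\struct{B}$; applying the $(\Rightarrow)$ direction of~\eqref{eq:HN} to each $\rho$-atom of $\psi$ yields a witness for $\psi^{-1}$ in $\struct{B}$, and composing with the homomorphism $h$ produces a witness for $\prexists{(\struct{P}',\bar t')(\bar x)}$; then the $(\Leftarrow)$ direction of~\eqref{eq:HN} forces $(\struct{P}',\bar t')(\bar x)$ in $\struct{B}$, hence in $\struct{A}$. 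For item~\eqref{apitem:clauses3}, if a forbidden pattern were realized in $\struct{A}$, replacing $\rho$-atoms via their $\pre{\cdot}$-expansions in $\struct{B}$ through~\eqref{eq:HN} would yield a homomorphism from a member of $\mathcal{F}$ into $\struct{B}_0$, contradicting $\struct{B}_0\in\Forb_h(\mathcal{F})$.

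For the converse $\efm(\Phi'')\subseteq\age(\struct{C})$, let $\struct{A}\in\efm(\Phi'')$. I construct $\struct{A}^{*}$ by attaching witnesses: for every $\rho$-atom $(\struct{P},\bar t)(\bar x)$ holding in $\struct{A}$, glue a fresh copy of $\struct{P}$ along $\bar t\mapsto \bar x$, producing a $(\tau'\cup\sigma'\cup\rho\cup\{<\})$-structure whose $\rho$-predicates are precisely those of $\struct{A}$ and whose linear order is an arbitrary extension of that of $\struct{A}$. To conclude that $\struct{A}^{*}$ lies in $\mathcal{K}^{<}_{\mathrm{HN}}(\mathcal{F})$, I verify that $\struct{A}^{*}$ satisfies~\eqref{eq:HN} and that its $(\tau'\cup\sigma')$-reduct lies in $\Forb_h(\mathcal{F})$; since $\struct{A}$ embeds in $\struct{A}^{*}$ as a substructure, this gives $\struct{A}\in\age(\struct{C})$. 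For~\eqref{eq:HN}, the $(\Rightarrow)$ direction is by construction. For $(\Leftarrow)$, a witness for $\prexists{(\struct{P},\bar t)(\bar x)}$ in $\struct{A}^{*}$ with $\bar x\subseteq A$ lives in $\struct{A}$ together with finitely many attached witnesses; encoding each attached witness by the $\rho$-atom that produced it yields a $(\tau'\cup\sigma'\cup\rho)$-pattern on $\struct{A}$ whose $\pre{\cdot}$-expansion maps homomorphically onto $\pre{(\struct{P},\bar t)(\bar x)}$, so the corresponding clause from item~\eqref{apitem:clauses2} fires in $\struct{A}$ and forces $(\struct{P},\bar t)(\bar x)$ to already hold there. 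For the forbidden-homomorphism part, a hypothetical $h\colon\struct{F}\to\struct{A}^{*}$ for some $\struct{F}\in\mathcal{F}$ is again encoded, via the same replacement of attached witnesses by $\rho$-atoms, as a pattern on $\struct{A}$ of width at most $\wh(\Phi)$ whose $\pre{\cdot}$-expansion admits a homomorphism from $\struct{F}$; this pattern is forbidden by item~\eqref{apitem:clauses3}, contradicting $\struct{A}\in\efm(\Phi'')$.

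The main obstacle is the bookkeeping in the $\supseteq$ direction, namely the claim that any connected $(\tau'\cup\sigma')$-substructure of $\struct{A}^{*}$ that meets both $\struct{A}$ and some of the attached witnesses admits a faithful encoding on $\struct{A}$ by replacing each witness-segment with the single $\rho$-atom that generated it. Here I will use that the attached copies of $\struct{P}\setminus\bar t$ are pairwise disjoint and meet $\struct{A}$ only along $\bar x$, so that a connected image of $\struct{F}$ decomposes canonically into segments, each of which is a connected sub-pattern of some attached $(\struct{P},\bar t)$; this segment is then represented in $\struct{A}$ by the $\rho$-atom $(\struct{P},\bar t)(\bar x)$, which matches the syntactic form required by items~\eqref{apitem:clauses2} and~\eqref{apitem:clauses3}. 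The width bound $\wh(\Phi)$ on these items is crucial and is available because the structures in $\mathcal{F}$ arise as canonical databases of clauses of $\Phi'$.
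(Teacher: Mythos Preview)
Your argument for $\age(\struct{C})\subseteq\efm(\Phi'')$ is correct and matches the paper's. The gap is in the converse direction, where you claim that your structure $\struct{A}^{*}$ itself satisfies equation~\eqref{eq:HN}. You only verify the $(\Leftarrow)$ implication for tuples $\bar x\subseteq A$, but the equivalence must hold for \emph{all} tuples over $A^{*}$, and it generally fails on tuples touching the freshly attached vertices: a glued copy of a piece $(\struct{P},\bar t)$ will typically contain homomorphic images of other pieces $(\struct{Q},\bar s)$ whose roots land (partly) on fresh vertices, so $\prexists{(\struct{Q},\bar s)(\bar y)}$ holds there while no $\rho$-atom does (you declared the $\rho$-predicates of $\struct{A}^{*}$ to be ``precisely those of $\struct{A}$''). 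Consequently $\struct{A}^{*}$ is not a $\rho$-expansion satisfying~\eqref{eq:HN}, and it is not even a substructure of one, since any such superstructure would have strictly more $\rho$-tuples on the fresh part.

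The fix is exactly what the paper does: forget the $\rho$-relations on the enlarged domain and instead take the $\rho$-expansion $\struct{A}''$ of the $(\tau'\cup\sigma')$-reduct of your $\struct{A}^{*}$ defined \emph{via}~\eqref{eq:HN}; then~\eqref{eq:HN} holds automatically, and the task becomes showing that the substructure of $\struct{A}''$ on $A$ has the same $\rho$-relations as $\struct{A}$. Your verification of ``$(\Rightarrow)$ by construction'' and ``$(\Leftarrow)$ for $\bar x\subseteq A$ via item~\eqref{apitem:clauses2}'' is precisely that agreement argument, so after this reformulation your proof goes through and coincides with the paper's.
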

\begin{claimproof} First, we show that every finite model of the first-order part of $\Phi''$ can be embedded into $\struct{C}$. 
To this end, let $\struct{A}\in \efm(\Phi'')$ be arbitrary.
Let $\phi$ be the canonical query of the $(\tau'\cup \sigma' \cup \rho)$-reduct of $\struct{A}$, and let $\phi^{-1}$ be the formula obtained by replacing each subformula $(\struct P,\bar t)(\bar{u})$ in $\phi$ by $\pre{(\struct P,\bar t)(\bar{u})}$.
Now, let $\struct{A}'$ be the canonical database of $\phi^{-1}$, viewed as a structure over the signature $\tau'\cup \sigma'$.
Since $\struct{A}$ satisfies each clause from item~\ref{item:clauses3} for all possible variable substitutions, we have $\struct{A}'\in \Forb_h(\mathcal{F})$.
Let $\struct{A}''$ be the $(\tau'\cup \sigma' \cup \rho)$-expansion of $\struct{A}'$ where $\rho$-relations are defined using eq.~\eqref{eq:HN}.
Clearly, the substructure of $\struct{A}''$ on $A$ has all relational $\rho$-tuples of $\struct{A}$.
Since $\struct{A}$ satisfies each clause from item~\ref{item:clauses2} for all possible variable substitutions, $\struct{A}$ also has all relational $\rho$-tuples of the substructure of $\struct{A}''$ on $A$.
We conclude that the $(\tau'\cup \sigma' \cup \rho)$-reduct of $\struct{A}$ is in $\mathcal{K}_{\mathrm{HN}}(\mathcal{F})$.
Since $\struct{A}$ satisfies the sentence in item~\ref{item:clauses1}, $<$ interprets as a linear order in $\struct{A}$, and hence  $\struct{A}\in  \age(\struct{C})=\mathcal{K}^{<}_{\mathrm{HN}}(\mathcal{F})$.  

Next, we show that every element of $\age(\struct{C})$ satisfies the first-order part of $\Phi''$.
To this end, let $\struct{A}\in \age(\struct{C})=\mathcal{K}^{<}_{\mathrm{HN}}(\mathcal{F})$ be arbitrary.
Then, by Theorem~\ref{thm:hubicka_nesetril}, there exists a structure $\struct{A}'\in \Forb_h(\mathcal{F})$, a $\rho$-expansion $\struct{A}''$ of $\struct{A}'$ satisfying eq.~\eqref{eq:HN}, and a linear-order expansion $\struct{A}'''$ of $\struct{A}''$ such that $\struct{A}$ is a substructure of $\struct{A}'''$.
The clauses of $\Phi''$ stemming from $\Phi'$ are satisfied in $\struct{A}$ for all possible variable substitutions because $\struct{A}'\in \Forb_h(\mathcal{F})$.
We continue with the clauses coming from items~\ref{item:clauses1},~\ref{item:clauses2}, and~\ref{item:clauses3}.
The sentence in item~\ref{item:clauses1} clearly holds in $\struct{A}$ because $<$ interprets in $\struct{A}'''$ as a linear order on $A$. 
The structure $\struct{A}$ satisfies each clause from item~\ref{item:clauses2} for all possible variable substitutions because $\struct{A}''$ satisfies eq.~\eqref{eq:HN}.
For item~\ref{item:clauses3}, suppose, on the contrary, that there exists a clause $\phi$ whose forbidden pattern is a conjunction of positive atoms with the property as in item~\ref{item:clauses3} and a tuple $\bar{a}$ over $A$ such that $\struct{A}\not\models \phi(\bar{a})$.
Since $\struct{A}''$ is defined from $\struct{A}'$ using eq.~\eqref{eq:HN}, the structure $\struct{A}'$ admits a homomorphism from a member of $\mathcal{F}$ (witnessed by the tuple $\bar{a}$ as well as the existential witnesses for eq.~\eqref{eq:HN}). 
This is a contradiction to $\struct{A}'\in \Forb_h(\mathcal{F})$, hence the clauses in item~\ref{item:clauses3} hold for $\struct{A}$ as well (for all possible variable substitutions). 
\end{claimproof} 

We say that a set $S\subseteq C$ is \emph{correctly labelled} if every tuple $\bar{t}$ over $S$ matching the arity of some $X\in \sigma$ is contained either in $X_+^{\struct{C}}$ or in $X_-^{\struct{C}}$, but not both.
Consider the $(\tau\cup\sigma\cup \tau'\cup \sigma' \cup \rho \cup \{<\})$-expansion $\struct{C}_{\Phi}$ of $\struct{C}$ where each $R\in \tau\cup \sigma$ of arity $k$ interprets as the $k$-ary relation defined by the quantifier-free first-order formula
\begin{align}
R_+(x_1,\dots,x_k) \wedge \bigl(\{x_1,\dots, x_k\} \textit{ is correctly labelled }\bigr).  \label{eq:correctly_labeled}
\end{align} 
 As first-order definable relations are preserved by automorphisms, we have that $\struct{C}_{\Phi}$ is homogeneous Ramsey  because $\struct{C}$ is homogeneous Ramsey.

Finally, we obtain the sentence $\Delta(\Phi)$: by additionally defining $(\tau\cup \sigma)$-relations using eq.~\eqref{eq:correctly_labeled} and existentially quantifying over all symbols in $\tau'\cup \sigma$.
The length of $\Delta(\Phi)$ only increases by an additional multiplicative single exponential because we need to consider all subsets of $\bigl[\ar(\Phi)+\wh(\Phi)\bigr]$ for every relation symbol in $\sigma$; its width, height, and arity remain unchanged.
It follows from Theorem~\ref{thm:hubicka_nesetril} that $\efm(\Delta(\Phi))$ has the AP and the RP. 
Moreover, we have $\age(\struct{C}_{\Phi})=\efm(\Delta(\Phi))$.
It remains to show that $\fm(\Phi)=\fm(\Delta(\Phi))$.  
This is a direct consequence of the following claim.

\begin{claim}\label{claim:appendix}
    $\CSP(\struct{C}_{\Phi}
^{\tau})=\fm(\Phi)=\age(\struct{C}_{\Phi}
^{\tau})$.
\end{claim}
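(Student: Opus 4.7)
The plan is to prove the three inclusions $\age(\struct{C}_{\Phi}^{\tau})\subseteq \CSP(\struct{C}_{\Phi}^{\tau})\subseteq \fm(\Phi) \subseteq \age(\struct{C}_{\Phi}^{\tau})$. The first is immediate since every embedding is a homomorphism, so I focus on the other two.

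First I would prove the auxiliary fact that $\struct{C}_{\Phi}$ itself satisfies the first-order part of $\Phi$. Suppose some tuple $\bar{c}$ hypothetically realizes a forbidden pattern $\phi_i$ of $\Phi$ in $\struct{C}_{\Phi}$. Every positive $\tau$-atom or positive $\sigma$-atom of $\phi_i(\bar{c})$ translates to an $R_+$- or $X_+$-atom in $\struct{C}$ via the definition of the $\tau\cup\sigma$-relations of $\struct{C}_{\Phi}$, and the very truth of such an atom in $\struct{C}_{\Phi}$ forces its arguments to be correctly labelled. For a negated $\sigma$-atom $\neg X(\bar{c}_k)$ in $\phi_i$, the guarding axiom supplies a positive atom of $\phi_i$ whose argument $\bar{c}_l$ contains $\bar{c}_k$; the previous sentence forces $\bar{c}_l$, and hence $\bar{c}_k$, to be correctly labelled, which combined with $X(\bar{c}_k)\notin \struct{C}_{\Phi}$ yields $X_-(\bar{c}_k)\in \struct{C}$. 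Assembling these facts, $\struct{C}$ realizes the corresponding forbidden pattern $\phi_i'$ of $\Phi'$ on $\bar{c}$, contradicting $\struct{C}\models \Phi''$.

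With this in hand, for $\CSP(\struct{C}_{\Phi}^{\tau})\subseteq\fm(\Phi)$ I take a homomorphism $h\colon\struct{A}\to\struct{C}_{\Phi}^{\tau}$ and define a $\sigma$-expansion $\struct{A}^{\sigma}$ by setting $X^{\struct{A}^\sigma}(\bar{a})$ to hold if and only if $X^{\struct{C}_{\Phi}}(h(\bar{a}))$ does; then $h$ extends to a $(\tau\cup\sigma)$-homomorphism into $\struct{C}_{\Phi}$, and any forbidden pattern realized in $\struct{A}^\sigma$ would transport to $\struct{C}_{\Phi}$, contradicting the auxiliary fact. For $\fm(\Phi)\subseteq\age(\struct{C}_{\Phi}^{\tau})$ I take $\struct{A}\in\fm(\Phi)$ with a $\sigma$-expansion $\struct{A}^\sigma$ modelling the first-order part of $\Phi$, and form a $(\tau'\cup\sigma')$-structure $\struct{B}$ on the same domain by setting $R_+^{\struct{B}}:=R^{\struct{A}^\sigma}$ and $R_-^{\struct{B}}:=A^{\ar(R)}\setminus R^{\struct{A}^\sigma}$ for each $R\in\tau\cup\sigma$. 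By design, $A$ is correctly labelled in $\struct{B}$, and $\struct{B}\in\Forb_{h}(\mathcal{F})$ because any homomorphism from $\mathcal{F}$ would back-translate into a realization of a forbidden pattern of $\Phi$ in $\struct{A}^\sigma$. Adjoining the canonical $\rho$-expansion defined by the biconditional~\eqref{eq:HN} together with an arbitrary linear order places $\struct{B}$ inside $\mathcal{K}^{<}_{\mathrm{HN}}(\mathcal{F})=\age(\struct{C})$, so it embeds into $\struct{C}$. This embedding preserves correct labelling on $A$, so the defining formula of the $\tau$-relations of $\struct{C}_{\Phi}$ induces the required embedding $\struct{A}\hookrightarrow \struct{C}_{\Phi}^{\tau}$.

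The main obstacle is the auxiliary step: the definition of the $\tau\cup\sigma$-relations in $\struct{C}_{\Phi}$ only ensures correct labelling on the arguments of atoms that \emph{do} hold, while the troublesome atoms in a realized forbidden pattern are the negative ones. The whole argument hinges on exploiting the guarding axiom to propagate correct labelling from one of the positive atoms of the pattern to the negative ones; this is precisely where the guarding axiom of GMSNP is essentially used.
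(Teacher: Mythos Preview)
Your proof is correct and follows essentially the same route as the paper's. The only organisational difference is that you factor the inclusion $\CSP(\struct{C}_{\Phi}^{\tau})\subseteq\fm(\Phi)$ through the auxiliary fact that $\struct{C}_{\Phi}$ itself satisfies the first-order part of $\Phi$, whereas the paper argues directly on the pulled-back expansion of $\struct{A}$; both arguments invoke the guarding axiom at the same point and for the same reason (to propagate correct labelling from a positive guard atom to the arguments of a negated $\sigma$-atom), and your treatment of $\fm(\Phi)\subseteq\age(\struct{C}_{\Phi}^{\tau})$ via the complemented $(\tau'\cup\sigma')$-structure and its $\rho,<$-expansion is exactly the paper's construction spelled out in slightly more detail.
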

\begin{claimproof} 
By definition, we have $\age(\struct{C}_{\Phi}
^{\tau}) \subseteq \CSP(\struct{C}_{\Phi}
^{\tau})$.

Next, we prove $\CSP(\struct{C}_{\Phi}
^{\tau})\subseteq \fm(\Phi)$.
Suppose that $\struct{A}\in \CSP(\struct{C}_{\Phi}
^{\tau})$, i.e., there exists a homomorphism $h\colon \struct{A}\rightarrow \struct{C}_{\Phi}
^{\tau}$. 
We define $\smash{\overline{\struct{A}}}$ as the expansion of $\struct{A}$ to the signature of $\struct{C}_{\Phi}$ obtained by pulling back the relations from $\struct{C}_{\Phi}$ through $h$.
By definition, $h$ is a homomorphism from $\smash{\overline{\struct{A}}}$ to $\struct{C}_{\Phi}$.
Now, let $\neg \big( \psi_1\wedge \cdots \wedge  \psi_k \wedge \neg \psi_{k+1} \wedge \cdots \wedge \neg \psi_{\ell}\big)$ be a clause in $\Phi$, where each $\psi_i$ is a positive atomic formula.
Suppose that, for some tuple $\bar{a}$  over $A$, we have $\smash{\overline{\struct{A}}}\models \psi_1\wedge \cdots \wedge \psi_k(\bar{a})$.
Since $h$ is a homomorphism, it follows that $$\struct{C}_{\Phi}\models \psi_1\wedge \cdots \wedge \psi_k(h(\bar{a})).$$ 
For every atom $\psi$ of the form $R(\bar{x})$ for $R\in \tau\cup \sigma$, we define $\psi^+$ as $R_+(\bar{x})$ and $\psi^-$ as $R_-(\bar{x})$.
By the definition of $\struct{C}_{\Phi}$, more specifically by~eq.~\eqref{eq:correctly_labeled}, we have 
$$\struct{C}_{\Phi}\models \psi^+_1\wedge \cdots \wedge \psi^+_k(h(\bar{a})).$$
Since $\mathcal{F}$ contains a structure associated with the forbidden pattern $\psi^+_1\wedge \cdots \wedge  \psi^+_k \wedge \psi^-_{k+1} \wedge \cdots \wedge  \psi^-_{\ell}$ and $\age(\struct{C})=\mathcal{K}^{<}_{\mathrm{HN}}(\mathcal{F})$, we must have 
$$\struct{C}_{\Phi}\models \neg \psi^-_{k+1}\vee \cdots \vee \neg \psi^-_{\ell}(h(\bar{a})),$$ i.e., there exists $i\in [\ell]\setminus [k]$
such that $\struct{C}_{\Phi}\models \neg \psi^-_{i}(h(\bar{a}))$.
Since $\Phi$ satisfies the guarding axiom, there exists $j \in [k]$ such that all variables $\bar{x}_i$ of $\psi_{i}$ are among the variables $\bar{x}_j$ of $\psi_{j}$.
By the assumption that the set of all entries in $h(\bar{x}_j)$ is correctly labelled, we have $\struct{C}_{\Phi}\models \psi^+_{i}(h(\bar{a}))$. 
As a subset of a correctly labelled set, the set of all entries in  $h(\bar{x}_i)$ is also correctly labelled.
Hence, by definition we have $\struct{C}_{\Phi}\models \psi_{i}(h(\bar{a}))$. 
Since $\Phi$ is monotone, the atomic formula $\psi_{i}$ is of the form $X(\bar{x})$ for $X\in \sigma$.
Hence, by the definition of $\smash{\overline{\struct{A}}}$, we have $\smash{\overline{\struct{A}}}\models  \psi_{i}(h(\bar{a}))$.
This means that $\smash{\overline{\struct{A}}}
^{\tau\cup \sigma}$ provides the desired expansion of $\struct{A}$ in $\efm(\Phi)$.

Finally, we show $\fm(\Phi)\subseteq \age(\struct{C}_{\Phi}
^{\tau})$.
Suppose that $\struct{A}\models \Phi$, and let $\struct{A}_{\Phi}$ be a $\tau\cup \sigma$-expansion of $\struct{A}$ witnessing this fact. 
We define $\struct{A}'$ as the $(\tau'\cup \sigma')$-structure with domain $A$ and where $R_+$ interprets as $R^{\struct{A}_{\Phi}}$ and $R_-$ as the complement of $R^{\struct{A}_{\Phi}}$ in $\struct{A}_{\Phi}$.
Then clearly $\struct{A}'\in \efm(\Phi')$.
Since the relations of $\struct{A}'$ satisfy the correct labelling condition,  $\struct{A}'$ is a reduct of a substructure $\smash{\overline{\struct{A}}}$ of $\struct{C}_{\Phi}$.
Clearly, the $\tau$-reduct of $\smash{\overline{\struct{A}}}$ is isomorphic to $\struct{A}$.
It follows that $\struct{A}\in \age(\struct{C}_{\Phi}
^{\tau})$.  
\end{claimproof}

This finishes the proof.
\end{proof}

The next proposition shows that the property in item~\ref{item:AP_andRP} of Proposition~\ref{prop:from_GMSNP_to_SNP_with_AP_and_RP} attained through the transformation $\Delta$ is sufficient for a reduction from containment to recolouring.
 
 \begin{proposition}    \label{prop:recolouring_containment}
  For SNP $\tau$-sentences $\Phi_1$ and $\Phi_2$ such that $\efm(\Phi_1)$ and $\efm(\Phi_2)$ both have the AP and the RP  the following are equivalent.
    \begin{enumerate}
        \item \label{item:recolouring1} $\fm(\Phi_1)\subseteq \fm(\Phi_2)$.
        \item \label{item:recolouring2} There exists a recolouring from $\Phi_1$ to $\Phi_2$.
    \end{enumerate}  
\end{proposition}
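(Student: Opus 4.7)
I would start with the easier direction (\ref{item:recolouring2})$\Rightarrow$(\ref{item:recolouring1}): given a recolouring $\xi$ with extension $\xi'$, any $\struct A\in\fm(\Phi_1)$ admits a $\sigma_1$-expansion $\struct A^{\sigma_1}\in\efm(\Phi_1)$; then $\xi'(\struct A^{\sigma_1})\in\efm(\Phi_2)$ has the same $\tau$-reduct as $\struct A^{\sigma_1}$ and is therefore a $\sigma_2$-expansion of $\struct A$ lying in $\efm(\Phi_2)$, witnessing $\struct A\in\fm(\Phi_2)$.

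The substantive direction is (\ref{item:recolouring1})$\Rightarrow$(\ref{item:recolouring2}), and the plan is to extract the recolouring from a canonical embedding between Fraïssé limits. Since $\efm(\Phi_i)$ is closed under isomorphism and substructures and satisfies AP, Theorem~\ref{theorem:fraisse_2} provides countable homogeneous structures $\struct C_i$ with $\age(\struct C_i)=\efm(\Phi_i)$; these are $\omega$-categorical, and by the RP assumption they are Ramsey. From $\fm(\Phi_i)=\age(\struct C_i^\tau)$, the hypothesis $\fm(\Phi_1)\subseteq\fm(\Phi_2)$ reads $\age(\struct C_1^\tau)\subseteq\age(\struct C_2^\tau)$, and Lemma~\ref{lemma:compactness} (using $\omega$-categoricity of $\struct C_2^\tau$) upgrades this to an embedding $\struct C_1^\tau\hookrightarrow\struct C_2^\tau$. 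Invoking Theorem~\ref{th:canonical_ramsey} on $\struct C_1$, $\struct C_2$ and their $\tau$-reducts yields an embedding $e\colon\struct C_1^\tau\hookrightarrow\struct C_2^\tau$ that is canonical as a map from $\struct C_1$ to $\struct C_2$.

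Setting $n\coloneqq\max(\ar(\Phi_1),\ar(\Phi_2))$, I define $\xi(\struct T)$ for $\struct T\in\colours(\Phi_1)$ by picking any embedding $j\colon\struct T\hookrightarrow\struct C_1$ and declaring $\xi(\struct T)$ to be the $(\tau\cup\sigma_2)$-structure on $T$ obtained by pulling back along $j$ the structure that $\struct C_2$ induces on $e(j(T))$; since $e$ is a $\tau$-embedding, $\xi(\struct T)^\tau=\struct T^\tau$. Analogously, for any $\struct A\in\efm(\Phi_1)$, I pick an embedding $\iota\colon\struct A\hookrightarrow\struct C_1$ and define $\xi'(\struct A)$ as the pullback of the $\sigma_2$-relations of $\struct C_2$ along $e\circ\iota$; this $\xi'(\struct A)$ is isomorphic to a substructure of $\struct C_2$ and hence lies in $\age(\struct C_2)=\efm(\Phi_2)$. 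The main obstacle is verifying coherence: that $\xi(\struct T)$ and $\xi'(\struct A)$ do not depend on the choices of $j$ and $\iota$, and that every embedding $e'\in\binom{\struct A}{\struct T}$ also lies in $\binom{\xi'(\struct A)}{\xi(\struct T)}$. Both issues reduce to the same observation: any two embeddings of $\struct T$ into $\struct C_1$ have images in a common $\Aut(\struct C_1)$-orbit by the homogeneity of $\struct C_1$; canonicity of $e$ then places their $e$-images in a common $\Aut(\struct C_2)$-orbit, and homogeneity of $\struct C_2$ identifies the induced substructures. This is precisely where the Ramsey hypothesis pays off, and once coherence is established the remaining properties of a recolouring are immediate.
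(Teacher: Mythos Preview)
Your proposal is correct and follows essentially the same approach as the paper: build the Fra\"{i}ss\'{e} limits $\struct{C}_1,\struct{C}_2$, use compactness to get a $\tau$-embedding, canonize it via Theorem~\ref{th:canonical_ramsey}, and define $\xi$ and $\xi'$ by pulling back relations along $e\circ\iota$, with well-definedness following from homogeneity of $\struct{C}_1$ plus canonicity of $e$. The only cosmetic difference is that the paper defines $\xi'$ first and obtains $\xi$ as its restriction to $\colours(\Phi_1)$, whereas you define both in parallel; also, your appeal to homogeneity of $\struct{C}_2$ in the coherence step is harmless but not strictly needed, since two tuples in the same $\Aut(\struct{C}_2)$-orbit already induce isomorphic substructures.
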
    
\begin{proof} Let $n\coloneqq \max(\ar(\Phi_1),\ar(\Phi_2))$.

``(\ref{item:recolouring1})$\Rightarrow$(\ref{item:recolouring2})''   
By Theorem~\ref{theorem:fraisse_2}, there exist homogeneous Ramsey structures $\struct{C}_1$ and $\struct{C}_2$ such that $\efm(\Phi_1)=\age(\struct{C}_1)$ and $\efm(\Phi_2)=\age(\struct{C}_2)$.
Then, $\age(\struct{C}_1
^{\tau}) \subseteq  \age(\struct{C}_2
^{\tau})$.
By the $\omega$-categoricity of $\struct{C}_2
^{\tau}$,  Lemma~\ref{lemma:compactness} implies that there exists an embedding $e$ from $\struct{C}_1
^{\tau}$ to $\struct{C}_2
^{\tau}$. 
By Theorem~\ref{th:canonical_ramsey}, we may assume that $e$ is canonical as a function from $\struct{C}_1$ to $\struct{C}_2$ since the former is an $\omega$-categorical Ramsey structure.
Let $\xi'$ be the mapping from $\efm(\Phi_1)$ to $\efm(\Phi_2)$ defined as follows.
Given $\struct{A}\in \efm(\Phi_1)$, consider an arbitrary embedding $i$ from $\struct{A}$ to $\struct{C}_1$.
Then, we define $\xi'(\struct{A}) \in \efm(\Phi_2)$ with domain $A$ whose relations are defined through their preimages under $e\circ i$.   
Let $\xi$ be the restriction of $\xi'$ to $\colours(\Phi_1)$.
We claim that $\xi'$ arises from $\xi$ as in the definition of a recolouring. 

First, the $\tau$-reduct of any   $\struct{T}\in\colours(\Phi_1)$  and its $\xi$-image are identical since $\xi(\struct{T})=\xi'(\struct{T})$ is obtained by pulling back relations via the $\tau$-embedding $e\circ i$.  
For the second part of the definition pertaining to $\xi'$, let $\struct A\in\efm{(\Phi_1)}$, $\struct{T}\in\colours(\Phi_1)$,  and $f\in \binom{\struct A}{\struct{T}}$
be arbitrary; we claim that $f\in \smash{\binom{\xi'(\struct A)}{\xi(\struct{T})}}$. By definition, the relations of $\xi'(\struct A)$ and of $\xi(\struct{T})$ are obtained by first embedding $\struct A$ and $\struct{T}$  into $\struct{C}_1$, then applying $e$, and then pulling back the relations from $\struct{C}_2$. This definition does not depend on the choice of the embeddings into $\struct{C}_1$, by the homogeneity of  $\struct{C}_1$ and the canonicity of $e$ -- the claim follows. 

``(\ref{item:recolouring2})$\Rightarrow$(\ref{item:recolouring1})''
Let $\struct{A}\in \fm(\Phi_1)$ be arbitrary; then there exists an expansion $\struct{A}'\in \efm(\Phi_1)$. By the definition of a recolouring we have $\xi'(\struct{A}')\in  \efm(\Phi_2)$. It also follows from this definition that  for all $\struct{T} \in \colours(\Phi_1)$ and every embedding $e\in \smash{\binom{\struct{A}}{\struct{T}}}$ we have $e\in \binom{\xi'(\struct{A})}{\xi(\struct{T})}$. By our choice of $n$, this implies that the $\tau$-reduct of $\xi'(\struct{A}')$ coincides with $\struct A$. Hence $\xi'(\struct{A}')$ witnesses that  $\struct{A}\in \fm(\Phi_2)$.
\end{proof}

\begin{proof}[Proof of Theorem~\ref{thm:2NEXPTIME_for_GMSNP}(\ref{item:containment})] 
 By Proposition~\ref{cor:connected_matters_not}, we can restrict ourselves to connected GMSNP.
 Now the statement follows directly from Proposition~\ref{prop:from_GMSNP_to_SNP_with_AP_and_RP} combined with Proposition~\ref{prop:recolouring_containment} and Lemma~\ref{lemma:recolouring_nexptime}.
 To see this, note that, through the application of Proposition~\ref{prop:from_GMSNP_to_SNP_with_AP_and_RP}, the arity and width increase polynomially, the height increases 1-exponentially, and the length increases 2-exponentially.
 According to Lemma~\ref{lemma:recolouring_nexptime}, the complexity of testing the existence of a recolouring only depends 1-exponentially on the height, so a 1-exponential increase in height is not an issue.
 Moreover, the complexity only depends polynomially on the length, so a 2-exponential increase in length is not an issue either.
 We conclude that the total complexity is indeed $\TWONEXPTIME$.
\end{proof}

\subsection{A proof of Theorem~\ref{thm:2NEXPTIME_for_GMSNP}(\ref{item:FO_rewritability})}  \label{section:fo_rewritability}
 
Let $\Phi$ be a connected GMSNP $\tau$-sentence, and let $\Delta(\Phi)$ be the SNP $\tau$-sentence from Proposition~\ref{prop:from_GMSNP_to_SNP_with_AP_and_RP}.
We consider the reduction from the model-checking problem for $\Delta(\Phi)$ to a finite-domain CSP originating from~\cite{bodirsky2016reducts,BodirskyM18}.
We define $\struct{T}_{\Phi}$ to be the following relational structure.
Let $n\coloneqq \max\bigl(\ar(\Delta(\Phi))+1,\wh(\Delta(\Phi)),3\bigr).$
The domain of $\struct{T}_{\Phi}$ is the set $\colours(\Delta(\Phi))$, and the signature of $\struct{T}_{\Phi}$ contains:
\begin{itemize}
    \item for every $R\in \tau$ of arity $m$ and every $g\colon [m] \rightarrow [n]$, a unary symbol $U_{R,g}$;
    \item for every $m\in [n]$ and all $f,g\colon [m] \rightarrow [n]$, a binary symbol $E_{f,g}$.
\end{itemize}

 Each unary symbol $U_{R,g}$ interprets as the set of all $\struct{C}\in \colours(\Delta(\Phi))$ for which $\big(g(1),\dots,g(m)\big)\in  R^{\struct{C}}$, and each binary symbol $E_{f,g}$ interprets as the set of all pairs $(\struct{C}_1,\struct{C}_2)\in \colours(\Delta(\Phi))^2$ where the map $f(i) \mapsto g(i)$ is a well-defined partial isomorphism from $\struct{C}_1$ to $\struct{C}_2$. 
 
 The following result was essentially obtained in~\cite[Thm.~3.1]{BodirskyM18}, except that the language used there was slightly different; see the remark at the end of~\cite[Sec.~3]{BodirskyM18}.
 \begin{theorem}[Thm.~3.1 in~\cite{BodirskyM18}] \label{theorem:FO_red_to_orbits} There exists a FO-reduction from the model-checking problem for $\Delta(\Phi)$ to $\CSP(\struct{T}_{\Phi})$. 
 \end{theorem} 
 The formulation of Theorem~3.1 in~\cite{BodirskyM18} corresponds to the situation where there exist relational structures $\struct{A}$ and $\struct{B}$ such that $\CSP(\struct{A})=\age(\struct{A})=\fm(\Delta(\Phi))$ and $\age(\struct{B})=\efm(\Delta(\Phi))$.
 We know that such structures exist by the proof of Proposition~\ref{prop:from_GMSNP_to_SNP_with_AP_and_RP}; take $\struct{B}\coloneqq \struct{C}_{\Phi}$ and $\struct{A} \coloneqq \struct{C}_{\Phi}^{\tau}$. 
Note that, since $\struct{C}_\Phi$ is homogeneous and $n \geq \ar(\Delta(\Phi))$, the set $\colours(\Delta(\Phi))$ is in a 1-1 correspondence with the $n$-ary orbits of $\struct{C}_\Phi$.

Recall the definition of the induced map $f^{\acts}$ on orbits for canonical polymorphisms $f$ of $\struct{C}_\Phi^\tau$ from the last paragraph of Section~\ref{section:prelims_structures}.

\begin{theorem}[Lem.~4.7 in~\cite{BodirskyM18}]\label{th:pols_of_orbit_structure}
    For every $d$-ary  polymorphism $f$ of $\struct{C}_\Phi^\tau$ canonical with respect to $\struct{C}_\Phi$, the operation $f^{\acts}\colon T_\Phi^d\to T_\Phi$ is a polymorphism of $\struct{T}_\Phi$. 
\end{theorem} 

The following definition originates from~\cite{larose2007characterisation}.
The \emph{one-tolerant $n$-th power} ${}^1\struct{A}^n$ of a $\tau$-structure $\struct{A}$ is the $\tau$-structure with domain $A^n$ and where every symbol $R\in\tau$ of arity $k$ interprets as the relation 
\[
 \Big\{
\bigl((a_1^1,\ldots,a_1^n),\dots,(a_k^1,\ldots,a_k^n)\bigr)\in (A^n)^k \ \Big| \  \bigl|\{j\in [n]\mid (a_1^j,\ldots,a_k^j)\in R^\struct{A}\}\bigr| \geq n-1 
\Big\}.  
\]  

For every $\omega$-categorical structure $\struct{A}$, the membership of $\CSP(\struct{A})$ in FO is characterised by the existence of a \emph{one-tolerant} polymorphism of $\struct{A}$, i.e., a homomorphism $f\colon {}^1\struct{A}^n\to \struct{A}$, for some $n\in\mathbb{N}$.
This was essentially shown in~\cite{bodirsky_hils_martin}, extending a result from~\cite{larose2007characterisation}.
However, the authors of that article opted for a more general formulation characterising the membership in FO for all structures in a finite signature, even structures which are not necessarily  $\omega$-categorical.
We now extract the result tailored to the $\omega$-categorical setting.

\begin{proposition}[cf.~Theorem~5.7 in~\cite{bodirsky_hils_martin}] \label{prop:bhm12}
    For every $\omega$-categorical structure $\struct{A}$, the following are equivalent:
    \begin{enumerate} 
    \item\label{item:bhm12_1} $\CSP(\struct{A})$ is in $\mathrm{FO}$.
    \item\label{item:bhm12_2} $\struct{A}$ has a one-tolerant polymorphism.
    \end{enumerate} 
\end{proposition}
\begin{proof}
We start with an auxiliary claim.
Let $\tau$ be the signature of $\struct{A}$.
We call a finite $\tau$-structure $\struct{B}$ a \emph{critical obstruction} for $\CSP(\struct{A})$ if $\struct{B}\not\to\struct{A}$ but removing any relational tuple from $\struct{B}$ yields a structure that maps homomorphically to $\struct{A}$. 
\begin{claim}    \label{claim:bhm12}
    For $n\in \mathbb{N}$, the following are equivalent:
    \begin{enumerate} 
    \item\label{item:bhm12_1a} $\struct{A}$ has an $(n+1)$-ary one-tolerant polymorphism.
    \item\label{item:bhm12_2a} Every critical obstruction to $\CSP(\struct{A})$ has at most $n$ relational tuples.
    \end{enumerate}
\end{claim}
\begin{claimproof}
    ``(\ref{item:bhm12_1a})$\Rightarrow$(\ref{item:bhm12_2a})''.
    Let $f$ be an $(n+1)$-ary one-tolerant polymorphism of $\struct{A}$. Suppose, on the contrary, that there exists a critical obstruction $\struct{B}$ for $\CSP(\struct{A})$ with $m>n$ relational tuples $\bar b_1,\ldots,\bar b_m$.
    We define a new $m$-ary polymorphism $f'$ by $$f'(x_1,\dots,x_m)\coloneqq f(x_1,\dots, x_{n+1}).$$ 
    It is easy to see that also $f'$ is one-tolerant.
    For every $i\in [m]$, let $\struct{B}_i$ be obtained from $\struct{B}$ by removing $\bar b_i$ from the corresponding relation and let $h_i\colon \struct{B}_i\to\struct{A}$ be any homomorphism (which exists by assumption).
    Then, the map $h = (h_1,\ldots,h_m)$ is a homomorphism from $\struct{B}$ to ${}^1\struct{A}^{m}$.
    But then $f'\circ h\colon \struct{B}\to\struct{A}$ is a homomorphism, contradicting $\struct{B}$ being an obstruction.
    Hence, the statement of (\ref{item:bhm12_2a}) holds.

    ``(\ref{item:bhm12_2a})$\Rightarrow$(\ref{item:bhm12_1a})''
    Since ${}^1\struct{A}^{n+1}$ is countable, by Lemma~\ref{lemma:compactness}, $\struct{A}$ has an $n$-ary one-tolerant polymorphism if and only if every finite substructure of ${}^1\struct{A}^{n+1}$ maps homomorphically to $\struct{A}$.
    Suppose, on the contrary, that there is no $(n+1)$-ary one-tolerant polymorphism of $\struct{A}$, then some finite substructure $\struct{B}$ of ${}^1\struct{A}^{n+1}$ does not map to $\struct{A}$.
    Let $\struct{B}'$ be any critical obstruction for $\CSP(\struct{A})$ obtained by removing relational tuples from $\struct{B}$.
    For every $i\in [n+1]$, let $\pi_i^{n+1}$ be the $i$-th $(n+1)$-ary projection on this structure. None of these projections can be a homomorphism when viewed as a mapping from $\struct{B}'$ into $\struct{A}$;  thus, for every $i$ there exists $R\in \tau$ and $\bar b_i\in R^{\struct{B}'}$ such that $\pi^{n+1}_i(\bar b_i) \notin R^{\struct{A}}$. 
    By the definition of ${}^1\struct{A}^{n+1}$, it must be the case that $\pi^{n+1}_j(\bar b_i) \in R^{\struct{A}}$ for every $j \in [n+1]\setminus \{i\}$.
    Thus, we have  $\bar b_i\not=\bar b_j$ whenever $i\neq j$.
    But this contradicts $\struct{B}'$ having at most $n$ relational tuples. Hence, the statement of (\ref{item:bhm12_1a}) holds.
\end{claimproof}

    ``(\ref{item:bhm12_1})$\Rightarrow$(\ref{item:bhm12_2})''
    If $\CSP(\struct{A})$ is definable in $\mathrm{FO}$, then, similarly to the proof of Claim~\ref{claim:connected2}, $\CSP(\struct{A})$ is equivalent to a universal negative first-order sentence that can be written in the form $\bigwedge_i \forall \bar x_i\ldotp\neg\phi_i(\bar x_i)$, where each $\phi_i$ is a conjunction of non-negated $\tau$-atoms.
    This implies that the number of relational tuples in every critical obstruction to $\CSP(\struct{A})$ is bounded by the maximum of the number of $\tau$-atoms that appears within one of the formulas $\phi_i$.
    Then, by Claim~\ref{claim:bhm12}, $\struct{A}$ has a one-tolerant polymorphism of every arity exceeding that number.
    
    ``(\ref{item:bhm12_2})$\Rightarrow$(\ref{item:bhm12_1})''.
    By Claim~\ref{claim:bhm12}, if there is a one-tolerant polymorphism of arity $n+1$, then every critical obstruction has at most $n$ tuples. 
    This implies that there are only finitely many critical obstructions, from which it easily follows that $\CSP(\struct{A})$  is definable in  FO.
\end{proof}

 We now have all the ingredients to prove Theorem~\ref{thm:2NEXPTIME_for_GMSNP}(\ref{item:FO_rewritability}). 
\begin{proof}[Proof of Theorem~\ref{thm:2NEXPTIME_for_GMSNP}(\ref{item:FO_rewritability})]  
 By Proposition~\ref{cor:connected_matters_not}, we can restrict ourselves to connected GMSNP.
 Given a connected GMSNP sentence $\Phi$, we first compute $\Delta(\Phi)$ from Proposition~\ref{prop:from_GMSNP_to_SNP_with_AP_and_RP}.
 Then, we compute the finite structure $\struct{T}_{\Phi}$ as defined above Theorem~\ref{theorem:FO_red_to_orbits}.
 This can be done in deterministic 2-exponential time, similarly as $\Delta(\Phi)$, e.g., the estimate on $|T_{\Phi}|$ coincides with the estimate on $\lh(\Delta(\Phi))$.
 \begin{claim}
 $\Phi$ is FO-rewritable if and only if $\CSP(\struct{T}_{\Phi})$ is definable in FO.
 \end{claim}
 \begin{claimproof}  
``$\Rightarrow$'':  As $\fm(\Delta(\Phi))=\CSP(\struct{C}_\Phi^\tau)$ and as $\struct{C}_\Phi^\tau$ is $\omega$-categorical, by Proposition~\ref{prop:bhm12} there exists a one-tolerant $m$-ary polymorphism $f'$ of $\struct{C}_\Phi^\tau$ for some $m\in \mathbb{N}$.
    By Theorem~\ref{th:canonical_ramsey_higher_ary}, there exists an $m$-ary polymorphism 
     \begin{equation}
    f\in \overline{\{ \beta  \circ  f' \circ  (\alpha_1,\dots,\alpha_m)  \mid  \alpha_1,\dots,\alpha_m,\beta \in \Aut(\struct{C}_\Phi)\}}. \label{eq:closure2}
    \end{equation} 
    which is moreover canonical with respect to $\struct{C}_\Phi$.  It follows directly from eq.~\eqref{eq:closure2} that $f$ is also one-tolerant.  
    By Theorem~\ref{th:pols_of_orbit_structure}, we have that $f^{\acts}$ is a polymorphism of $\struct{T}_\Phi$.
    Again, it is easy to see that $f^{\acts}$ is also one-tolerant.
    By Proposition~\ref{prop:bhm12}, we have that $\CSP(\struct{T}_\Phi)$ is definable in FO.
 
     ``$\Leftarrow$'': This direction follows immediately from Theorem~\ref{theorem:FO_red_to_orbits} and Lemma~\ref{lemma:FO_reduction}
 \end{claimproof}
 \medskip 
 
 To conclude the proof, we use the fact that, by~\cite[Theorem~6.1]{larose2007characterisation}, the definability in FO for finite-domain CSPs is in $\NP$ when the CSP is given in terms of the parametrizing structure.
 In our case, the size of $\struct{T}_{\Phi}$ is 2-exponential in the size of the input GMSNP sentence $\Phi$, which places the problem into $\TWONEXPTIME$.
\end{proof}

\section{A deep dive into recolourings} \label{section:recolouring_ready}
In this section, we introduce recolouring-readiness for GMSNP and give a full proof of Theorem~\ref{thm:recolouring_readiness}.
We start with weak recolouring-readiness (Section~\ref{sect:weak_recolour}), which provides a GMSNP-counterpart for the ``normal form'' for MMSNP sentences from~\cite{bodirsky2018_article}.
In Section~\ref{sect:weak_edge}, we explain how to obtain weakly recolouring-ready GMSNP sentences and elaborate on our claim that weak recolouring-readiness can be paired with a notion of a recolouring, called ``weak edge-recolouring'', generalising the corresponding notion for MMSNP in a fashion that is: (i) less abstract than the notion introduced in Section~\ref{section:recolourings} and (ii) closer to the idea of edge-recolourings as in eq.~\eqref{eq:edge_recolouring} of Section~\ref{sec:contributions}. 
The only essential difference between edge-recolourings and weak edge-recolourings is that the latter may depend on an external linear ordering of the edges.
In Section~\ref{subsec:order_important}, we provide an example showing that the dependence on an external linear ordering of the edges cannot be avoided unless we strengthen weak recolouring-readiness. 
Finally, in Section~\ref{sect:recolouring_readiness}, we formulate the full notion of recolouring-readiness and give a proof of Theorem~\ref{thm:recolouring_readiness}.

\subsection{ Weak recolouring-readiness} \label{sect:weak_recolour}
A relational structure $\struct{B}$ whose signature contains $\tau$ is \emph{$\tau$-edge-homogeneous} if it satisfies the homogeneity condition restricted to tuples $\bar{t}_1,\bar{t}_2$ contained in  $R^{\struct{B}}$ for some $R\in \tau$. 
\begin{remark}
    We remark that $\tau$-edge-homogeneity can be viewed as a natural generalisation of \emph{1-homogeneity} of Bodirsky, Madelaine, and Mottet~\cite{bodirsky2018_article} from vertices to relational $\tau$-tuples (up to a small syntactic modification as in Section~\ref{sec:intro_one}).
\end{remark} 
\begin{remark} Note that the $\tau$-reduct of a $\tau$-edge-homogeneous structure is not necessarily homogeneous.
To see this, consider the disjoint union $\struct{I}+\struct{R}$ of the countably infinite independent set of vertices $\struct{I}$ and the \emph{random graph} $\struct{R}$ (the signature is $\{E\}$). 
For $\tau=\{E\}$, the $\tau$-reduct of $\struct{I}+\struct{R}$ is $\tau$-edge-homogeneous but not homogeneous.
Indeed, all $E$-edges of $\struct{I}+\struct{R}$ are contained in $R$, and hence every partial homomorphism between $E$-edges extends to an automorphism of $\struct{R}$, which naturally extends to an automorphism of $\struct{I}+\struct{R}$ that fixes $I$.
On the other hand, every vertex from $R$ is connected to some other vertex in $\struct{I}+\struct{R}$ via an $E$-edge (due to the homogeneity of $\struct{R}$), while vertices from $I$ do not have any neighbours in $\struct{I}+\struct{R}$.
Hence, the homogeneity condition is not satisfied in $\struct{I}+\struct{R}$ already for tuples of arity $1$ (vertices). 
\end{remark}

Recall that, for a GMSNP sentence $\Phi$, the classes $\fm(\Phi)$ and $\efm(\Phi)$ consist of the finite models of $\Phi$ and the first-order part of $\Phi$, respectively.
 We say that a GMSNP $\tau$-sentence $\Phi$ is \emph{weakly recolouring-ready} if there exists a $\tau$-edge-homogeneous $\omega$-categorical Ramsey structure $(\struct{C}_{\Phi},<)$  such that:
\begin{enumerate}[label={\roman*.}] 
\item \label{item:1} $\age(\struct{C}_{\Phi},<)=\age(\struct{C}_{\Phi})^{<}$; 
    \item \label{item:2} $\age(\struct{C}_{\Phi})=\efm(\Phi)$; 
    \item \label{item:3}  $\age(\struct{C}_{\Phi}^{\tau})=\CSP(\struct{C}_{\Phi}^{\tau})=\fm(\Phi)$.  
\end{enumerate} 

\begin{theorem}    \label{thm:recolouring_readiness2}   
   For every connected GMSNP $\tau$-sentence $\Phi$,  one can construct a connected weakly recolouring-ready   
 GMSNP $\tau$-sentence $\Omega(\Phi)$ such that: 
\begin{enumerate} 
 \item \label{item:ready1} $\fm(\Omega(\Phi))=\fm(\Phi)$;
 \item \label{item:ready2} $\wh(\Omega(\Phi))\in\mathcal{O}(\wh(\Phi))$;
 \item \label{item:ready3} $\ar(\Omega(\Phi))\in\mathcal{O}(\ar(\Phi))$;
 \item \label{item:ready4} $\hh(\Omega(\Phi))\in\mathcal{O}\bigl(2^{(\wh(\Phi)+\ar(\Phi))^{\ar(\Phi)}\hh(\Phi)}\bigr)$;
 \item \label{item:ready5} $\lh(\Omega(\Phi))\in\mathcal{O}\bigl(2^{2^{(\wh(\Phi)+\ar(\Phi))\smash{^{\ar(\Phi)}}\cdot\hh(\Phi)}}\bigr)$.
\end{enumerate}   
 Moreover, $\Omega(\Phi)$ can be computed from $\Phi$ in nondeterministic 3-exponential time.

\end{theorem}  

\begin{proof} We divide the proof into six clearly labelled paragraphs (Steps~1--6).

\smallskip{\emph{Step 1: Colour complements.}} We first obtain the signature $\cplmt{\sigma} \supseteq \sigma$ by introducing, for every $X\in \sigma$, a new symbol $\cplmt{X}$ of the same arity.  
 Then we replace, for every $X\in \sigma$, each positive atom $X(\bar{x})$ in each clause of $\Phi$ by the negative atom $\neg \cplmt{X}(\bar{x})$. 
 The GMSNP sentence $\cplmt{\Phi}$ is then obtained by moreover including clauses of the form 
 \begin{equation}
     \neg \big(  R(\bar{x}) \wedge  \neg X(\bar{y}) \wedge \neg  \cplmt{X}(\bar{y})\big) \qquad \text{and} \qquad  \neg \big(   R(\bar{x}) \wedge X(\bar{y}) \wedge \cplmt{X}(\bar{y})\big) \label{eq:correctly_labeledly_labeled}
 \end{equation}  
 for all symbols $R\in \tau\cup \overline{\sigma}$ and $X\in \sigma$, where $\bar{y}\subseteq \bar{x}$ are tuples of fresh variables matching the arities of $R$ and $X$.  
 Since $\Phi$ satisfies the guarding axiom, we have $\fm(\cplmt{\Phi})=\fm(\Phi)$. 
 Note that $\lh(\cplmt{\Phi})\in \mathcal{O}(\lh(\Phi))$, $\ar(\cplmt{\Phi})=\ar(\Phi)$, $\wh(\cplmt{\Phi})=\wh(\Phi)$, and $\hh(\cplmt{\Phi})=2\cdot\hh(\Phi)$.

\smallskip{\emph{Step 2: Correct labellings.}}
 Let $\struct B$ be a structure over a relational signature containing $\tau\cup \cplmt{\sigma}$.
 We call a tuple $\bar{s}$ over $B$ $(\tau\cup \cplmt{\sigma})$-\emph{guarded} in $\struct{B}$ if there exists $R\in \tau\cup \cplmt{\sigma}$ and $\bar{t}\in R^{\struct{B}}$ such that $\bar{s}\subseteq \bar{t}$.
 We call a structure $\struct B$  \emph{correctly labelled} if every tuple $\bar{s}$ over $B$ $(\tau\cup \cplmt{\sigma})$-\emph{guarded} in $\struct{B}$ and matching the arity of some $X\in \sigma$ is contained either in $X^{\struct{B}}$ or in $\cplmt{X}^{\struct{B}}$, but not both. 
Note that,  before the addition of clauses of the form~\eqref{eq:correctly_labeledly_labeled} 
to $\cplmt{\Phi}$, the forbidden patterns were conjunctions of positive $(\tau\cup \cplmt{\sigma})$-atoms.
Let $\cplmt{\mathcal{F}}$ be the set of all correctly labelled $(\tau\cup\cplmt{\sigma})$-structures that can be obtained from the canonical databases of such  forbidden patterns by adding tuples to $\cplmt{\sigma}$-predicates. 

\smallskip{\emph{Step 3: Guarded pieces.}} Before applying  Theorem~\ref{thm:hubicka_nesetril} essentially to $\cplmt{\mathcal{F}}$, we will view the structures therein formally in an extended signature as follows.
 For every piece $(\struct P,\bar t)$ of a structure $\struct S$ in $\cplmt{\mathcal{F}}$, consider all correctly labelled extensions $\struct S'$ of $\struct S$ by the elements of a tuple $\bar s\supseteq \bar t$ such that all elements of $\bar s\setminus \bar t$ are fresh and  such that $R(\bar s)$ holds for some $R\in \tau$. 
 For every such extension $\struct S'$, we add its piece $(\struct P\cup\bar s,\bar s)$ to the  signature  $\sigma_+$, view the structures in $\cplmt{\mathcal{F}}$ as ($\tau\cup\cplmt{\sigma}\cup\sigma_+$)-structures,  and say that the original piece $(\struct P,\bar t)$ is \emph{guarded} by $(\struct P\cup\bar s,\bar s)$ in $\sigma_+$.  
 Note that we do not add any structures to $\cplmt{\mathcal{F}}$. 
 The size $|\sigma_+|$ is less than the number of $(\tau\cup\cplmt{\sigma})$-structures on at most $M+k$ elements, where $M\coloneqq\wh(\Phi)$ and $k\leq\ar(\Phi)$ is the maximal arity of a $\tau$-relation.
So, $|\sigma_+|$ has an upper bound $(M+k)\cdot 2^{(M+k)^k\hh(\cplmt{\Phi})}$.

\smallskip{\emph{Step 4: The amalgamation and the Ramsey property.}}
 Let $\mathcal G$ be the set of all structures $\struct{S}$ with domain $\bar s$ for which there exists a unique $R\in\tau$ and $(\struct P\cup\bar s,\bar s)\in \sigma_+$ as above such that $R(\bar s)$ and $(\struct P\cup\bar s,\bar s)(\bar s)$ are the only atoms holding in $\struct{S}$. 
  Let $<$, $\rho$, and  $\mathcal{K}^{<}_{\mathrm{HN}}$ be as in~Theorem~\ref{thm:hubicka_nesetril}  for the set  $\cplmt{\mathcal{F}}\cup\mathcal G$.  
Define $\mathcal K$ as the subclass of all correctly labelled elements of $\mathcal{K}^{<}_{\mathrm{HN}}$. We remark that the predicates in $\rho$ are disjoint from those in $\sigma_+$, since they denote pieces of structures in the signature $\tau\cup\cplmt{\sigma}\cup\sigma_+$ (even if the structures of  $\cplmt{\mathcal{F}}$ have empty $\sigma_+$-relations).
The following auxiliary statement is a straightforward consequence of the two facts that $\mathcal{K}^{<}_{\mathrm{HN}}$ has the AP and the RP, and that the structures in $\mathcal{K}$ and $\cplmt{\mathcal{F}}$ are correctly labelled. 
 
\begin{claim}
  \label{lemma:AP}  
  $\mathcal{K}$ has the AP and the RP.
\end{claim}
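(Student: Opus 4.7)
\begin{claimproof}[Plan]
The plan is to transfer both AP and RP from the ambient class $\mathcal{K}^{<}_{\mathrm{HN}}$ to $\mathcal{K}$, exploiting the locality of the correct-labelling condition and its preservation under the free amalgam construction.

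For AP, given $\struct{A}, \struct{B} \in \mathcal{K}$ with common substructure $\struct{C}$, I would take the \emph{free amalgam} $\struct{W}$ of $\struct{A}$ and $\struct{B}$ over $\struct{C}$: the domain is $A \cup_C B$; the $(\tau\cup\cplmt{\sigma}\cup\sigma_+)$-relations of $\struct{W}$ are the unions of the corresponding relations of $\struct{A}$ and $\struct{B}$; the linear order is any common extension; and the $\rho$-predicates are derived minimally from equation~\eqref{eq:HN}. Every structure in $\cplmt{\mathcal{F}} \cup \mathcal{G}$ is connected (the former by the connectedness of $\Phi$ inherited through $\cplmt{\Phi}$, the latter by having single-tuple domains), so any homomorphism from $\cplmt{\mathcal{F}} \cup \mathcal{G}$ into $\struct{W}$ must factor through $\struct{A}$ or $\struct{B}$, neither of which admits one; hence $\struct{W} \in \mathcal{K}^{<}_{\mathrm{HN}}$. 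Moreover, since no new $(\tau\cup\cplmt{\sigma})$-tuple is introduced by the free amalgam, every $(\tau\cup\cplmt{\sigma})$-guarded tuple of $\struct{W}$ lies entirely within $A$ or within $B$, and inherits its correct labelling from $\struct{A}$ or $\struct{B}$; thus $\struct{W} \in \mathcal{K}$.

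For RP, given $\struct{A}, \struct{B} \in \mathcal{K}$ and $k \in \mathbb{N}$, I would first apply RP in $\mathcal{K}^{<}_{\mathrm{HN}}$ (from Theorem~\ref{thm:hubicka_nesetril}) to obtain a Ramsey witness $\struct{W}_0 \in \mathcal{K}^{<}_{\mathrm{HN}}$, and then construct a correctly labelled $\struct{W} \in \mathcal{K}$ containing $\struct{W}_0$ as a substructure in such a way that the set of $\struct{A}$-embeddings is preserved up to the embedding $\struct{W}_0 \hookrightarrow \struct{W}$. Concretely, I would iterate the AP for $\mathcal{K}$ just established to amalgamate $\struct{W}_0$ with a family of small correctly labelled ``labelling patches'', one per guarded-but-unlabelled tuple of $\struct{W}_0$, each patch carrying fresh elements outside of $\struct{W}_0$ and a uniform choice of label. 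Since $\struct{A}$ is itself correctly labelled and the patches are pairwise connectivity-disjoint and disjoint from $\struct{W}_0$ outside the identified guarded tuple, every embedding $\struct{A} \hookrightarrow \struct{W}$ must factor through $\struct{W}_0 \hookrightarrow \struct{W}$. Hence any $k$-colouring of $\binom{\struct{W}}{\struct{A}}$ restricts to a $k$-colouring of $\binom{\struct{W}_0}{\struct{A}}$, and the monochromatic copy of $\struct{B}$ obtained in $\struct{W}_0$ transfers to one in $\struct{W}$.

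The main obstacle will be the careful design of the labelling patches in the RP step, so as to guarantee simultaneously that the completion $\struct{W}$ remains in $\mathcal{K}^{<}_{\mathrm{HN}}$ (so that no new forbidden homomorphisms from $\cplmt{\mathcal{F}} \cup \mathcal{G}$ arise through the newly added $\cplmt{\sigma}$-tuples) and that the $\struct{A}$-embeddings into $\struct{W}$ are exactly those factoring through $\struct{W}_0$. Both properties are to be ensured by making the patches minimal and connectivity-disjoint, so that they cannot interact with each other or with $\struct{W}_0$ beyond the identified guard, and by appealing again to the connectedness of the forbidden structures in $\cplmt{\mathcal{F}} \cup \mathcal{G}$ to show that no such structure can straddle $\struct{W}_0$ and a patch.
\end{claimproof}
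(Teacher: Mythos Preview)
Your RP argument has a genuine gap. Correct labelling is a local condition on tuples already present in the structure: a $(\tau\cup\cplmt{\sigma})$-guarded tuple $\bar{s}$ of the arity of some $X\in\sigma$ must lie in exactly one of $X$ or $\cplmt{X}$. Adjoining fresh elements cannot repair an incorrectly labelled tuple of $\struct{W}_0$. If a patch alters the $\cplmt{\sigma}$-relations on that old tuple, then $\struct{W}_0$ is no longer a substructure of $\struct{W}$, and embeddings of $\struct{A}$ into $\struct{W}_0$ need not remain embeddings into $\struct{W}$; if it does not, the tuple stays incorrectly labelled in $\struct{W}$. Your plan to ``amalgamate with patches carrying fresh elements and a uniform choice of label'' therefore cannot simultaneously keep $\struct{W}_0$ as a substructure and achieve correct labelling.

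The paper goes in the opposite direction: starting from a witness $\struct{W}'\in\mathcal{K}^{<}_{\mathrm{HN}}$ (for AP or RP), it \emph{removes} every $\tau$-tuple whose presence causes an incorrect labelling, obtaining $\struct{W}$. Because all structures in $\cplmt{\mathcal{F}}$ are correctly labelled, the witnesses for~\eqref{eq:HN} survive this deletion, so $\struct{W}$ remains in $\mathcal{K}^{<}_{\mathrm{HN}}$ and is now correctly labelled. Since $\struct{A},\struct{B}\in\mathcal{K}$ are already correctly labelled, nothing is removed from any copy of $\struct{B}$, so each old $\struct{B}$-copy persists; deletion may create \emph{new} copies of $\struct{A}$, but none inside an old $\struct{B}$-copy, so a monochromatic $\struct{B}$ found in $\struct{W}'$ remains monochromatic in $\struct{W}$. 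Your free-amalgam approach to AP is different and plausible in spirit, but the handling of $\rho$ is incomplete: ``minimally derived from~\eqref{eq:HN}'' may produce fewer $\rho$-tuples than $\struct{A}$ and $\struct{B}$ already carry (as mere substructures they can have $\rho$-tuples without internal witnesses), so $\struct{A}$ and $\struct{B}$ need not embed into your proposed amalgam.
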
  

 Since $\mathcal{K}$ trivially has the \emph{joint embedding property} (the AP restricted to pairs $\struct{A},\struct{B}\in \mathcal{K}$ with $A \cap B = \emptyset$, cf.~\cite{hodges_book}), the AP is in fact implied by the RP~\cite{nesetril2005}. 
For illustrative purposes, we choose not to rely on this fact and instead first prove the AP as a special case of the RP. We also remark that the AP follows immediately from the fact that the class provided by Theorem~\ref{thm:hubicka_nesetril} has the AP, but we chose to present the proof of the AP as otherwise we  would have to prove the RP anyway.

\medskip 
\begin{claimproof} 
Let $\struct{C}$ and $\struct{D}$ be arbitrary elements of $\mathcal{K}$ whose substructures on $C\cap D$ are identical.
Clearly, $\struct{C}$ and $\struct{D}$ belong to  $\mathcal{K}^{<}_{\mathrm{HN}}$, and,
by Theorem~\ref{thm:hubicka_nesetril}, $\mathcal{K}^{<}_{\mathrm{HN}}$ has the AP.
Hence, there exists an amalgamation witness $\struct{W'}\in \mathcal{K}^{<}_{\mathrm{HN}}$ for $(\struct{C},\struct{D})$ such that $\struct C$ and $\struct D$ are substructures of $\struct{W'}$. 
We obtain $\struct W$ from $\struct W'$ by removing all tuples from relations with a symbol in $\tau\cup \overline{\sigma}$ witnessing that $\struct W'$ is incorrectly labelled.
More specifically, we repeat the following as many times as necessary: for every $R\in \tau \cup \overline{\sigma}$ and every $\bar{t}\in R^{\struct{W}'}$ such that there exists $X\in \sigma$ and $\bar{s}\subseteq \bar{t}$ matching the arity of $X$ while $\bar{s}\notin X^{\struct{W}'}\cup {\overline{X}}^{\struct{W}\smash{'}}$, we remove $\bar{t}$ from $R^{\struct{W}'}$.
We claim that \eqref{eq:HN} of Theorem~\ref{thm:hubicka_nesetril} still holds. Indeed, note first that all predicates in $\rho$ are pieces of structures in $\cplmt{\mathcal{F}}$, since the structures of $\mathcal{G}$ do not contain any pieces. 
Now if a $\rho$-predicate holds for a tuple in $\struct W'$, then the witness (according to the equivalence in \eqref{eq:HN} of Theorem~\ref{thm:hubicka_nesetril}) of this fact remains a witness in $\struct W$  because all witnesses in $\cplmt{\mathcal{F}}$ are correctly labelled, and therefore they remain unaltered.  Conversely, if a $\rho$-predicate does not hold for a tuple in $\struct W'$, then it also does not hold in $\struct W$, as the removal only reduces witnesses.  
It is now easy to see that $\struct W$ belongs to $\mathcal{K}$.
Also, since $\struct{C}$ and $\struct{D}$ are correctly labelled, no tuple that is contained entirely in $C$ or $D$ is ever removed, and hence  the modified structure $\struct W$  is still an amalgamation witness for $(\struct{C},\struct{D})$.

The RP can be shown in the same way as the AP, by removing tuples from $\tau$-relations  from a Ramsey witness $\struct W'$ in $\mathcal{K}^{<}_{\mathrm{HN}}$ for $(\struct{C},\struct{D})$ and $k\in \mathbb{N}$.
One important detail is that the removal of tuples might introduce new copies of $\struct{C}$.
But these copies cannot be contained in any old copy of $\struct{D}$ because it is correctly labelled.
\end{claimproof} 
 
\smallskip{\emph{Step 5: The final first-order expansion.}}
Let $\struct U$ be the Fra\"{i}ss\'{e}-limit of $\mathcal K$. Then  $\age(\struct U^{\tau\cup\cplmt{\sigma}})$ consists  precisely of the models of the first-order part of $\cplmt{\Phi}$; however,  the  expansion of $\struct U^{\tau\cup\cplmt{\sigma}}$ by $<$ might not be $\tau$-edge homogeneous. 
We will resolve this by adding redefined relations with symbols in $\sigma_+$, and then modifying $\cplmt{\Phi}$ accordingly. 

As for the former, consider every $(\struct P\cup\bar s,\bar s)$ in $\sigma_+$ obtained by extending a piece $(\struct P,\bar t)$ by the tuple   $\bar s$ and imposing   $R(\bar s)$ for some $R\in \tau$. On tuples in $R^{\struct{U}}$, the corresponding predicate currently never holds, as ensured by $\mathcal G$ above; on all other tuples, the predicate is random in the sense that no restrictions have been imposed in the construction of $\struct U$. 
We now add  all tuples in $R^{\struct{U}}$ to the relation interpreting $(\struct P\cup\bar s,\bar s)$  for which the implication $(\Leftarrow)$ in  \eqref{eq:HN} of Theorem~\ref{thm:hubicka_nesetril} requires it, thereby obtaining a structure $\struct{U_{\mo}}$. Note that the $\sigma_+$-relations of $\struct{U_{\mo}}$ are first-order definable from those of $\struct U$, and vice-versa; hence, the two structures have the same automorphism group.
Denote $ \cplmt{\sigma}_+\coloneqq {\cplmt{\sigma}\cup\sigma_+}$.
All $\rho$-relations being first-order definable in $\struct U^{\tau\cup \cplmt{\sigma}_+}$, we next conclude  that $\Aut(\struct U_{\mo}^{\tau\cup \cplmt{\sigma}_+},<)=\Aut(\struct{U})$. 
It follows immediately that   $(\struct U_{\mo}^{\tau\cup \cplmt{\sigma}_+},<)$ is Ramsey and $\omega$-categorical. 
It remains to prove that it is $\tau$-edge homogeneous; this can be done using the two facts that $\struct{U}$ is homogeneous and that the pieces of structures in $\cplmt{\mathcal{F}}$ are guarded in  $\sigma_+$.

\begin{claim}    \label{lemma:tau_edge_homogeneity}  
 $(\struct U_{\mo}^{\tau\cup \cplmt{\sigma}_+},<)$ is $\tau$-edge homogeneous. 
\end{claim}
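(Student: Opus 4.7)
The plan is to establish $\tau$-edge homogeneity by reducing it to the homogeneity of $\struct U$, exploiting the identity $\Aut(\struct U)=\Aut(\struct U_{\mo}^{\tau\cup \cplmt{\sigma}_+},<)$ noted just above the claim. Let $\bar t_1,\bar t_2\in R^{\struct U_{\mo}}$ for some $R\in\tau$, and suppose the map $f\colon \bar t_1\mapsto\bar t_2$ is a partial isomorphism of $(\struct U_{\mo}^{\tau\cup \cplmt{\sigma}_+},<)$.  If I can show that $f$ is in fact a partial isomorphism of $\struct U$ in its full signature, then by the homogeneity of $\struct U$ it extends to some $\alpha\in\Aut(\struct U)=\Aut(\struct U_{\mo}^{\tau\cup \cplmt{\sigma}_+},<)$, which gives the desired $\tau$-edge homogeneity.

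Checking that $f$ is a partial isomorphism of $\struct U$ amounts to verifying preservation on each block of the signature $\tau\cup\cplmt{\sigma}\cup\sigma_+\cup\rho\cup\{<\}$. The blocks $\tau$, $\cplmt{\sigma}$, and $<$ carry identical interpretations in $\struct U$ and $\struct U_{\mo}$, so they are preserved by hypothesis. For a $\sigma_+$-predicate $P$ of arity $k$ paired with some $R'\in\tau$ and a $k$-tuple $\bar a$ over $\bar t_1$, I would split cases. If $R'(\bar a)$ holds in $\struct U$, then by preservation of $\tau$ also $R'(f(\bar a))$ holds, and both $P(\bar a)$ and $P(f(\bar a))$ are forced to be false in $\struct U$ by the corresponding member of $\mathcal G$. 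Otherwise, the definition of $\struct U_{\mo}$ leaves the truth of $P$ on $\bar a$ and $f(\bar a)$ unchanged with respect to $\struct U$, so the agreement in $\struct U$ follows from the agreement in $\struct U_{\mo}$ given by the hypothesis on $f$.

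The hard part will be preservation of $\rho$-predicates. By equation~\eqref{eq:HN} in Theorem~\ref{thm:hubicka_nesetril}, a predicate $(\struct P,\bar t)\in\rho$ holds on a sub-tuple $\bar a$ of $\bar t_1$ in $\struct U$ if and only if the existential formula $\prexists{(\struct P,\bar t)(\bar a)}$ holds in $\struct U$, whose witnesses can lie anywhere in $\struct U$, so the information is not local to $\bar t_1$. The key is that the construction of $\sigma_+$ was tailored exactly to capture these existential witnesses around $\tau$-guarded tuples. Indeed, writing $\bar t_1$ as $\bar s$ extending the chosen sub-tuple $\bar a$ (playing the role of $\bar t$) with the remaining entries of $\bar t_1$ as ``fresh'' elements that happen to carry the $\tau$-atom $R(\bar s)$ and the correctly labelled $\cplmt\sigma$-atoms inherited from $\struct U$, there is a $\sigma_+$-predicate $(\struct P\cup\bar s,\bar s)$ whose truth on $\bar t_1$ in $\struct U_{\mo}$ is, by definition of $\struct U_{\mo}$, equivalent to $\prexists{(\struct P\cup\bar s,\bar s)(\bar t_1)}$ in $\struct U$, and in turn equivalent to $\prexists{(\struct P,\bar t)(\bar a)}$ in $\struct U$ (the additional atoms on $\bar s$ impose no further constraint because they are already present in $\struct U$ on $\bar t_1$). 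The hypothesis that $f$ preserves $\sigma_+$ therefore forces the corresponding $\rho$-predicate on $\bar a$ to agree with the one on $f(\bar a)$.

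The main obstacle, then, is the bookkeeping: one must ensure that every piece $(\struct P,\bar t)$ of a structure in $\cplmt{\mathcal F}$ together with every inclusion map from the positions of $\bar t$ into the positions of $\bar t_1$ is accounted for by a predicate of $\sigma_+$ at $\bar t_1$ of arity $|\bar t_1|$, so that \emph{all} $\rho$-atoms between elements of $\bar t_1$ are captured. This is guaranteed by the step of the construction where, for each piece of a structure in $\cplmt{\mathcal F}$, one runs over all correctly labelled tuples $\bar s\supseteq\bar t$ with $R(\bar s)$ for some $R\in\tau$. Once this correspondence is spelled out, the verification outlined above completes the proof.
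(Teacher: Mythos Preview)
Your proposal is correct and follows essentially the same strategy as the paper: reduce to the homogeneity of $\struct U$ by verifying that the given partial isomorphism $f$ of $(\struct U_{\mo}^{\tau\cup\cplmt{\sigma}_+},<)$ is already a partial isomorphism of $\struct U$, checking the $\sigma_+$- and $\rho$-predicates separately and using the guarded $\sigma_+$-predicates (together with $\mathcal G$ and the definition of $\struct U_{\mo}$) to recover the $\rho$-information on $\tau$-guarded tuples. Your case split for $\sigma_+$ on whether $R'(\bar a)$ holds is slightly cleaner than the paper's phrasing, and your treatment of $\rho$ makes the chain of equivalences $(\struct P\cup\bar s,\bar s)(\bar t_1)$ in $\struct U_{\mo}\iff\prexists{(\struct P\cup\bar s,\bar s)(\bar t_1)}$ in $\struct U\iff\prexists{(\struct P,\bar t)(\bar a)}$ in $\struct U$ more explicit, but the content is the same.
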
 
\begin{claimproof} Let $i$ be a partial isomorphism on $(\struct U_{\mo}^{\tau\cup\cplmt{\sigma}_+},<)$, defined on  a tuple $\bar u$ which is contained in some $\tau$-relation. By the homogeneity of $\struct U$, the map $i$ extends to an automorphism of $\struct U$, and hence of $(\struct U_{\mo}^{\tau\cup\cplmt{\sigma}_+},<)$ which has the same automorphisms, if $i$ is also a partial isomorphism on $\struct U$. 
The only potential obstacle to this are the predicates with a symbol in $\sigma_+$ (since these have been modified passing from $\struct U$ to $\struct U_{\mo}$) and in $\rho$ (since we are considering the  reduct without the $\rho$-relations). 

As for the former, suppose that for some tuple $\bar u'\subseteq\bar u$, we have that $(\struct P\cup \bar s,\bar s)(\bar u')$ holds in $\struct U$, where $(\struct P\cup \bar s,\bar s)$ is a  $\sigma_+$-symbol   extending a $\rho$-piece $(\struct P,\bar t)$ with $R(\bar s)$ for some $R\in \tau$. Since the construction of the original $\sigma_+$-relations involved the set $\mathcal G$, this implies that $R^{\struct U}(\bar u')$ does not hold, as atomic formulas with symbols $R$ and $(\struct P\cup \bar s,\bar s)$ never hold in $\struct{U}$ simultaneously. Hence, in the construction of $\struct U_{\mo}$, the relation interpreting $(\struct P\cup \bar s,\bar s)$ is neither modified on $\bar u'$ nor on its image under $i$, and since $i$ is a partial isomorphism on $\struct U_{\mo}^{\tau\cup\cplmt{\sigma}_+}$, we have that $(\struct P\cup \bar s,\bar s)(i(\bar u'))$ holds also in $\struct U$. Together with the same argument for  the inverse of $i$ we get that $i$ is a partial isomorphism on $\struct{U}$ even with respect to the original $\sigma_+$-relations.

Now consider the  case of $\rho$-relations. 

Suppose that, for some tuple $\bar u'\subseteq\bar u$ and for some $(\struct P,\bar t)(\bar x)\in\rho$, $(\struct P,\bar t)(\bar u')$ holds in $\struct{U}$.
As $\bar u\in R^\struct{U}$ for some $R\in\tau$, we have that $(\struct P\cup \bar s,\bar s)(\bar u)$ holds in $\struct{U}_{\mo}$, where $\bar s$ extends $\bar t$ in the same way as $\bar u$ extends $\bar u'$.
As $i$ is a partial isomorphism, we also have that $(\struct P\cup \bar s,\bar s)(i(\bar u))$ holds in $\struct{U}_{\mo}$, which implies that $(\struct P,\bar t)(i(\bar u'))$ holds in $\struct{U}$.
 
Again, with the same argument for the inverse of $i$, we conclude that $i$ is indeed a partial isomorphism on $\struct U$.
\end{claimproof}

\smallskip{\emph{Step 6: Back to GMSNP.}}
As announced in the previous step, we shall now expand $\cplmt{\Phi}$ by new clauses defining the predicates with a symbol in $\sigma_+$. 
Let $n$ be the size of the largest element of $\cplmt {\mathcal{F}}$. 
We add to the first-order part of $\cplmt{\Phi}$ the following.
\begin{enumerate}[label={\roman*.}]  
    \item \label{item:gmsnpclauses1} Each clause over 
    $\tau\cup\cplmt{\sigma}_+$ of the form $
    \psi(\bar{x},\bar{y}) \wedge R'(\bar x) \Rightarrow (\struct P'\cup \bar s',\bar s')(\bar{x})$ with at most $n$ variables, where $R'\in \tau$ and $R'(\bar s')$
    holds in the piece $(\struct P'\cup \bar s',\bar s') \in \sigma_+$ and $\psi$ is a conjunction of positive atoms with the following property: 
\begin{center} \vspace{0.75em}
\framebox{\parbox{0.74\textwidth}{ 
Replacing every subconjunction $(\struct P\cup \bar s,\bar s)(\bar{u})\wedge R(\bar u)$ in $\psi$ with $\pre{(\struct P\cup \bar s,\bar s)(\bar{u})}\wedge R(\bar u)$ yields a $(\tau\cup \cplmt{\sigma}_+)$-formula $\psi^{-1}$ such that there exists a homomorphism $h$ from the canonical database of \mbox{$\pre{(\struct P'\cup \bar s',\bar s')(\bar{x})}$ to the canonical database of $\psi^{-1}$ with $h(\bar{x})=\bar{x}$.} 
}} \vspace{0.75em}
\end{center} 
    \item \label{item:gmsnpclauses2}  Each clause over 
    $\tau\cup\cplmt{\sigma}_+$ with at most $n$ variables whose forbidden pattern is a conjunction of positive atoms with the following property:
\begin{center} \vspace{0.75em}
\framebox{\parbox{0.64\textwidth}{ 
Replacing every subconjunction $(\struct P\cup \bar s,\bar s)(\bar{u})\wedge R(\bar u)$ with $\pre{(\struct P\cup \bar s,\bar s)(\bar{u})}\wedge R(\bar u)$ yields a formula whose canonical database admits a homomorphism from a member of $\cplmt{\mathcal{F}}$. 
}} \vspace{0.75em}
\end{center}  
    \end{enumerate}
\begin{claim} \label{claim:recolouring}
    $\age(\struct U_{\mo}^{\tau\cup\cplmt{\sigma}_+})=\efm(\cplmt{\Phi}_+)$.
\end{claim}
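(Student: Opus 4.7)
The overall strategy mirrors Claim~\ref{claim:age_coloured} from the proof of Proposition~\ref{prop:from_GMSNP_to_SNP_with_AP_and_RP}: we prove both inclusions separately, using the fact that $\age(\struct{U}) = \mathcal{K}$ consists precisely of the correctly labelled structures in $\mathcal{K}^{<}_{\mathrm{HN}}(\cplmt{\mathcal{F}}\cup\mathcal{G})$ and that the relations in $\sigma_+$ of $\struct{U}_{\mo}$ are first-order definable from those of $\struct{U}^{\tau\cup\cplmt{\sigma}\cup\rho}$. The main technical point is to bridge the $\rho$-predicates (which are used in the construction of $\struct{U}$ and governed by equation~\eqref{eq:HN}) and the $\sigma_+$-predicates of the reduct (which are what appear in $\cplmt{\Phi}_+$).

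For $(\subseteq)$, let $\struct{A}$ be a finite substructure of $\struct{U}_{\mo}^{\tau\cup\cplmt{\sigma}_+}$. First, its $(\tau\cup\cplmt{\sigma})$-reduct belongs to $\age(\struct{U}^{\tau\cup\cplmt{\sigma}})$, so it is correctly labelled and admits no homomorphism from any member of $\cplmt{\mathcal{F}}$; hence it satisfies the clauses of $\cplmt{\Phi}$ (both the forbidden-pattern clauses and the correct-labelling clauses of the form~\eqref{eq:correctly_labeled}). For the clauses added in item~\eqref{apitem:gmsnpclauses1} of Step~6, note that the premise  $\psi(\bar x,\bar y)\wedge R'(\bar x)$, once we replace every subconjunction $(\struct P\cup\bar s,\bar s)(\bar u)\wedge R(\bar u)$ by $\pre{(\struct P\cup\bar s,\bar s)(\bar u)}\wedge R(\bar u)$, witnesses precisely the $(\Leftarrow)$-direction trigger of~\eqref{eq:HN} for the piece $(\struct P'\cup\bar s',\bar s')$; since the $\tau$-guarded $\sigma_+$-relations of $\struct{U}_{\mo}$ were defined to satisfy exactly this implication, the conclusion $(\struct P'\cup\bar s',\bar s')(\bar x)$ holds. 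For the clauses in item~\eqref{apitem:gmsnpclauses2}, a violation would yield, after the same replacement, a homomorphism from a member of $\cplmt{\mathcal{F}}$ into (an extension of) $\struct{U}^{\tau\cup\cplmt{\sigma}}$, contradicting $\age(\struct{U}) \subseteq \mathcal{K}$.

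For $(\supseteq)$, let $\struct{A}\in\efm(\cplmt{\Phi}_+)$. The correct-labelling clauses from~\eqref{eq:correctly_labeled} force the $(\tau\cup\cplmt{\sigma})$-reduct to be correctly labelled. Build an expansion $\struct{A}^\ast$ to the signature $\tau\cup\cplmt{\sigma}\cup\sigma_+\cup\rho\cup\{<\}$ by first freely adding elements that realise the existential witnesses required by the $(\Rightarrow)$-direction of equation~\eqref{eq:HN} for every $\sigma_+$- and $\rho$-atom, and then interpreting $\rho$ so that the equivalence~\eqref{eq:HN} holds throughout, and $<$ as any linear order. Then use item~\eqref{apitem:gmsnpclauses2} to conclude that $\struct{A}^\ast$ admits no homomorphism from a member of $\cplmt{\mathcal{F}}$, and the definition of the $\mathcal{G}$-structures (one $\tau$-atom plus one $\sigma_+$-atom on the same tuple) to rule out homomorphisms from $\mathcal{G}$; hence $\struct{A}^\ast\in\mathcal{K}^{<}_{\mathrm{HN}}(\cplmt{\mathcal{F}}\cup\mathcal{G})$, and since its correctly-labelled part is inherited from $\struct{A}$, we get $\struct{A}^\ast\in\mathcal{K}=\age(\struct{U})$. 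Embed $\struct{A}^\ast$ into $\struct{U}$ and then use item~\eqref{apitem:gmsnpclauses1} of Step~6 together with the definition of $\struct{U}_{\mo}$ to verify that, on $\tau$-guarded tuples of $\struct{A}$, the $\sigma_+$-relations of $\struct{A}$ coincide with those of $\struct{U}_{\mo}$ under the embedding (the $(\Leftarrow)$ direction gives one inclusion, and the $(\Rightarrow)$ direction, which $\struct{A}^\ast$ satisfies by construction, gives the other); on non-$\tau$-guarded tuples, the $\sigma_+$-relations of $\struct{U}_{\mo}$ agree with those of $\struct{U}$, so the embedding of $\struct{A}^\ast$ already respects them.

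The main obstacle I anticipate is the careful bookkeeping in the backward direction to ensure that the embedding of $\struct{A}^\ast$ into $\struct{U}$ remains an embedding of $\struct{A}$ into $\struct{U}_{\mo}^{\tau\cup\cplmt{\sigma}_+}$. The non-trivial point is matching the $\sigma_+$-relations on $\tau$-guarded tuples, which are empty in $\struct{U}$ but reinstated in $\struct{U}_{\mo}$ according to the $(\Leftarrow)$-direction of~\eqref{eq:HN}; this must agree with the values prescribed by $\struct{A}$, and this agreement is precisely what the clauses in items~\eqref{apitem:gmsnpclauses1} and~\eqref{apitem:gmsnpclauses2} of Step~6 encode. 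Getting both inclusions $(\Leftarrow)$ and $(\Rightarrow)$ simultaneously requires careful use of the added witnesses in $\struct{A}^\ast$ together with the added clauses in $\cplmt{\Phi}_+$.
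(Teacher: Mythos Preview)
Your overall strategy mirrors the paper's, and the $(\subseteq)$ direction is essentially the same. However, there is a genuine gap in your $(\supseteq)$ direction concerning the structures in $\mathcal{G}$.

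You build $\struct{A}^\ast$ as an \emph{expansion} of $\struct{A}$ obtained by freely adding witness elements, and then claim that $\struct{A}^\ast$ avoids homomorphisms from $\mathcal{G}$. But this fails: a structure in $\mathcal{G}$ consists of a single $\tau$-atom $R(\bar s)$ together with a single $\sigma_+$-atom $(\struct P\cup\bar s,\bar s)(\bar s)$ on the same tuple, and models $\struct{A}\in\efm(\cplmt{\Phi}_+)$ typically \emph{do} contain such configurations --- indeed, the clauses in item~\eqref{apitem:gmsnpclauses1} force $\sigma_+$-atoms to hold on $\tau$-guarded tuples whenever the corresponding piece is present. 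Since your $\struct{A}^\ast$ retains all relations of $\struct{A}$, it retains these $\tau$-guarded $\sigma_+$-atoms, admits homomorphisms from $\mathcal{G}$, and therefore does not lie in $\mathcal{K}^<_{\mathrm{HN}}(\cplmt{\mathcal{F}}\cup\mathcal{G})$, let alone $\mathcal{K}$.

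The paper resolves this by a different construction of the auxiliary structure: it takes the canonical query of $\struct{A}$ and \emph{replaces} each subconjunction $(\struct P\cup\bar s,\bar s)(\bar u)\wedge R(\bar u)$ by the canonical query of the piece $\pre{(\struct P\cup\bar s,\bar s)(\bar u)}$. This simultaneously introduces the witness elements and \emph{removes} the $\sigma_+$-atom on the $\tau$-guarded tuple, so that in the resulting $(\tau\cup\cplmt{\sigma}_+)$-structure $\struct{A}'$ every surviving $\sigma_+$-atom is non-$\tau$-guarded, whence $\struct{A}'\in\Forb_h(\mathcal{G})$. Only then does the paper expand by $\rho$ via~\eqref{eq:HN} and a linear order, embed into $\struct{U}$, and verify that the restriction of this embedding to $A$ is an embedding of the original $\struct{A}$ into $\struct{U}_{\mo}^{\tau\cup\cplmt{\sigma}_+}$ --- the point being that the $\sigma_+$-relations removed in $\struct{A}'$ are precisely those reinstated in $\struct{U}_{\mo}$, and items~\eqref{apitem:gmsnpclauses1} and the definition of $\struct{U}_{\mo}$ give the two inclusions needed. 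Your final paragraph correctly anticipates this bookkeeping, but the construction you propose does not set it up correctly.
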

\begin{claimproof}  
First, we show that every finite model of the first-order part of $\cplmt{\Phi}_+$ can be embedded into $\struct U_{\mo}^{\tau\cup\cplmt{\sigma}_+}$. 
To this end, let $\struct{A}\in \efm(\cplmt{\Phi}_+)$ be arbitrary.
Let $\phi$ be its canonical query, and modify $\phi$ by replacing conjuncts of the form $(\struct P\cup \bar s,\bar s)(\bar u)\wedge R(\bar u)$ with canonical conjunctive queries of the corresponding $(\tau\cup\cplmt{\sigma})$-structures $(\struct P\cup \bar s,\bar s)$.
We denote the resulting $(\tau\cup\cplmt{\sigma})$-formula by $\phi^{-1}$.
Let $\struct A'$ be the canonical database of $\phi^{-1}$, viewed as a $(\tau\cup\cplmt{\sigma}_+)$-structure. 
Note that, for every tuple $\bar u$ of elements of $\struct{A}'$, whenever $(\struct P\cup \bar s,\bar s)(\bar u)$ holds in $\struct{A}'$, then $R(\bar u)$ does not hold in $\struct{A}'$, where $(\struct P\cup \bar s,\bar s)\in\sigma_+$ is obtained from a piece $(\struct P,\bar t)$ by imposing $R$ on $\bar s\supseteq \bar t$. Hence, $\struct A'$ is contained in $\Forb_h(\mathcal G)$.
Moreover, $\struct A'$ is in $\Forb_{h}(\cplmt{\mathcal{F}})$ because $\struct{A}$ satisfies the clauses in item~\ref{item:gmsnpclauses2} for all possible variable substitutions. 
It is also correctly labelled since  $\struct A$ is (being a model of the first-order part of $\cplmt{\Phi}_+$) and since all elements of $\cplmt{\mathcal{F}}$ are (by definition). Expanding $\struct A'$ by relations in $\rho$ according to~\eqref{eq:HN} of Theorem~\ref{thm:hubicka_nesetril}  and by an arbitrary linear order $<$, we therefore obtain an embedding $i$ into  $\struct{U}$. 

We claim that $i$ restricted to the domain $A$ of $\struct A$  is an embedding of  $\struct A$ into ${\struct U^{\tau\cup\cplmt{\sigma}_+}_{\mo}}$, showing that  $\struct A$ indeed belongs to $\age(\struct U_{\mo}^{\tau\cup\cplmt{\sigma}_+})$. 
Since only relations with a symbol in $\sigma_+$ were modified in $\struct{U}_{\mo}$, we must only verify this for such relations.
Moreover, we must only consider particular $\tau$-guarded tuples in said relations.
 To this end, choose any $(\struct P\cup\bar s,\bar s)\in\sigma_+$ obtained by extending a piece $(\struct P,\bar t)$ by the tuple  $\bar s$ and imposing   $R(\bar s)$ for some $R\in \tau$. 
Let $\bar u$ be a tuple over $A$ such that $R(\bar u)$ holds in $\struct{A}$.

First, suppose that $(\struct P\cup\bar s,\bar s)(\bar u)$ holds in $\struct{A}$.
Then, by the construction of $\struct A'$, there exists a tuple $\bar u'$ over $A$ with $\bar u'\subseteq \bar u$  and such that  $\prexists{(\struct P,\bar t)(\bar{u}')}$ holds in $\struct A'$.
Since $i$ is a homomorphism, we have that $\prexists{(\struct P,\bar t)(i(\bar{u}'))}$ holds in $\struct U$. 
By the construction of $\struct{U}_{\mo}$, it follows that $(\struct P\cup\bar s,\bar s)(i(\bar u))$ holds in $\struct{U}_{\mo}$.

Second, suppose that $(\struct P\cup\bar s,\bar s)(i(\bar u))$ holds in $\struct{U}_{\mo}$.
By the construction of $\struct{U}_{\mo}$, it follows that $\prexists{(\struct P,\bar t)(i(\bar{u}'))}$ holds in $\struct U$ for a tuple $\bar u'$ over $A$ with $\bar u'\subseteq \bar u$.
Since $\struct U$ satisfies eq.~\eqref{eq:HN} of Theorem~\ref{thm:hubicka_nesetril}, also $(\struct P,\bar t)(i(\bar{u}'))$ holds in $\struct U$.
Since $i$ is an embedding with respect to  relations with a symbol in $\rho$, we have that $(\struct P,\bar t)(\bar u')$ holds in the $\rho$-expansion of $\struct{A}'$ witnessing the existence of $i$.
But since~\eqref{eq:HN} of Theorem~\ref{thm:hubicka_nesetril} also determines the relations with a symbol in $\rho$ in the said expansion of $\struct{A}'$, it must be the case that $\prexists{(\struct P,\bar t)(\bar{u}')}$ holds in $\struct A'$.
Since $\struct{A}$ satisfies the clauses in item~\ref{item:gmsnpclauses1} for all possible variable substitutions,  it follows that $(\struct P\cup\bar s,\bar s)(\bar u)$ holds in $\struct{A}$.

Next, we show that every element of $\age(\struct U_{\mo}^{\tau\cup\cplmt{\sigma}_+})$ satisfies the first-order part of $\cplmt{\Phi}_+$.
To this end, let $\struct{A}\in \age(\struct U_{\mo}^{\tau\cup\cplmt{\sigma}_+})$ be arbitrary; without loss of generality, $\struct{A}$ is a substructure of $\struct U_{\mo}^{\tau\cup\cplmt{\sigma}_+}$.
Then there exists a finite substructure $\struct{A}'$ of $\struct U_{\mo}^{\tau\cup\cplmt{\sigma}_+}$ extending $\struct{A}$ such that the existential witnesses for the primitive positive definitions of the $\cplmt{\sigma}_+$-relations of $\struct{A}$ in $\struct{U}$ are contained in $A'$.
The way how $\cplmt{\sigma}_+$-relations are defined in $\struct U_{\mo}^{\tau\cup\cplmt{\sigma}_+}$ from $\struct{U}$ over $A'$ and how $\struct{U}$ is constructed from $\cplmt{\mathcal{F}}\cup \mathcal{G}$ ensures that $\struct{A}$ satisfies the clauses in items~\ref{item:gmsnpclauses1} and~\ref{item:gmsnpclauses2} for all possible variable assignments.
Hence $\struct{A}\in \efm(\cplmt{\Phi}_+)$.
\end{claimproof}

Note that $\wh(\cplmt{\Phi}_+) = \wh(\Phi) = M$, $\ar(\cplmt{\Phi}_+)=(\Phi)$, and  $\hh(\cplmt{\Phi}_+) = |\sigma_+|$.
The size $\lh(\cplmt{\Phi}_+)$ is bounded from above by the number of $(\tau\cup\cplmt{\sigma}_+)$-structures of size at most $M+k$, so it is bounded by $2^{M^k|\sigma_+|}$. 
The statement of the theorem now follows by defining  $\Omega(\Phi)\coloneqq\cplmt{\Phi}_+$; we have established that $\fm(\cplmt{\Phi}_+)=\fm(\Phi)$. 
The property of recolouring-readiness of $\Omega(\Phi)$ is witnessed by setting $(\struct C_\Phi,<)\coloneqq(\struct U^{\tau\cup\cplmt{\sigma}_+},<)$.  
 \end{proof}

\subsection{Weak edge-recolourings} \label{sect:weak_edge}
Let $\struct{A}$ be a structure over a relational signature containing $\tau$.
We say that $\struct{A}$ is \emph{$\tau$-guarded} if there exist $R\in \tau$ and $\bar{t}\in R^{\struct{A}}$ such that $A\subseteq \bar{t}$. 
Now, let $\Phi_1$ and $\Phi_2$ be GMSNP $\tau$-sentences; as in Section~\ref{section:recolourings}, we set $n\coloneqq \max(\ar(\Phi_1),\ar(\Phi_2))$.  
A structure $\struct{A}$ with domain $[n]$ over a signature containing $<$ is \emph{standardly ordered} if $<^{\struct{A}}$ coincides with the natural ordering $1<\cdots < n$.

 A \emph{weak edge-recolouring} from $\Phi_1$ to $\Phi_2$ is a mapping $\xi_<$ from standardly ordered  $\tau$-guarded structures in $\colours(\Phi_1)^{<}$ to standardly ordered $\tau$-guarded structures in $\colours(\Phi_2)^{<}$ with the following two properties.
 First, the $\tau$-reduct of any structure and its $\xi_<$-image are isomorphic.
 Secondly, for every $\struct A\in  \efm(\Phi_1)^{<}$, there is a structure $\xi'_<(\struct A)\in \efm(\Phi_2)^{<}$ on the same domain such that ${<}^{\struct{A}}={<}^{\xi'_<(\struct A)}$ and, for all standardly ordered $\tau$-guarded $\struct{T} \in \colours(\Phi_1)^{<}$ and for every embedding $e\in \binom{\struct{A}}{\struct{T}}$, we have $e\in \binom{\xi'_<(\struct{A})}{\xi_<(\struct{T})}$.  
 In other words, the following  extension $\xi'_<$ is a well-defined mapping from $\efm(\Phi_1)^{<}$ to $\efm(\Phi_2)^{<}$:
\begin{center} \vspace{0.75em}
\framebox{\parbox{0.775\textwidth}{ 
 For every $\struct{A}\in \efm(\Phi_1)^{<}$, the structure $\xi'_<(\struct{A})$ on the same domain as $\struct{A}$ is obtained by removing all tuples that are not $\tau$-guarded from  all relations other than ${<}^{\struct A}$ and replacing every isomorphic copy of  a standardly ordered $\tau$-guarded  $\struct T \in \colours(\Phi_1)^{<}$  in $\struct{A}$ by an isomorphic  copy of $\xi_<(\struct T)$ whilst keeping the linear order ${<}^{\struct A}$ unchanged.
}} \vspace{0.75em}
\end{center}

Modulo the fact that the ordering of the colours matters in this definition, it is what one would intuitively understand under a recolouring of relational $\tau$-tuples.

The fact that the containment between weakly recolouring-ready GMSNP sentences can be reduced to the existence of such a recolouring can be proved almost exactly as in the proof of Proposition~\ref{prop:recolouring_containment}.

\begin{restatable}{proposition}{recolouringreadycontainment}   \label{prop:recolouring_ready_containment}
    For recolouring-ready GMSNP $\tau$-sentences $\Phi_1$ and $\Phi_2$, the following are equivalent:
    \begin{enumerate}
        \item \label{item:recolouring_containment1} $\fm(\Phi_1)\subseteq \fm(\Phi_2)$;
        \item \label{item:recolouring_containment2} There exists a weak edge-recolouring from $\Phi_1$ to $\Phi_2$. 
    \end{enumerate}   
\end{restatable}   
\begin{proof} Let $n\coloneqq \max(\ar(\Phi_1),\ar(\Phi_2))$.

``(\ref{item:recolouring_containment1})$\Rightarrow$(\ref{item:recolouring_containment2})''   
Let $(\struct{C}_{\Phi_1},<)$ and $(\struct{C}_{\Phi_2},<)$ be  two structures witnessing the recolouring-readiness of $\Phi_1$ and $\Phi_2$, respectively. 
Then, $\age(\struct{C}_{\Phi_1}
^{\tau}) \subseteq  \age(\struct{C}_{\Phi_2}
^{\tau})$.
By the properties (i) and (iii) of recolouring-readiness, we moreover have $\age(\struct{C}_{\Phi_1}^{\tau},<)\subseteq \age(\struct{C}_{\Phi_2}^{\tau},<)$.  
By the $\omega$-categoricity of $(\struct{C}_{\Phi_2}^{\tau},<)$,  Lemma~\ref{lemma:compactness} implies that there exists an embedding $e$ from $(\struct{C}_{\Phi_1}^{\tau},<)$ to $(\struct{C}_{\Phi_2}^{\tau},<)$.
By Theorem~\ref{th:canonical_ramsey}, we may assume that $e$ is canonical as a function from $(\struct{C}_1,<)$ to $(\struct{C}_2,<)$ since the former is an $\omega$-categorical Ramsey structure.
Let $\xi'_<$ be the mapping from $\efm(\Phi_1)^{<}$ to $\efm(\Phi_2)^{<}$ defined as follows.
Given $\struct{A}\in \efm(\Phi_1)^{<}$, consider an arbitrary embedding $i$ from $\struct{A}$ to $(\struct{C}_1,<)$.
Then, we define $\xi'_<(\struct{A}) \in \efm(\Phi_2)^{<}$ with domain $A$ whose relations are first defined through their preimages under $e\circ i$ and then pruned by removing all tuples that are not $\tau$-guarded from all relations other than $<$. 
Since $\Phi_2$ satisfies the monotonicity and the guarding axioms, the images of $\xi'_<$ are indeed in $\efm(\Phi_2)^{<}$.
Let $\xi_<$ be the restriction of $\xi'_<$ to standardly ordered $\tau$-guarded structures in $\colours(\Phi_1)^{<}$.
We claim that $\xi'_<$ arises from $\xi_<$ as in the definition of a recolouring. 

First, the $\tau$-reduct of any standardly ordered $\tau$-guarded  $\struct{T}\in\colours(\Phi_1)^{<}$  and its $\xi_<$-image are identical since $\xi_<(\struct{T})=\xi'_<(\struct{T})$ is obtained by pulling back relations via the $\tau$-embedding $e\circ i$.  
For the second part of the definition pertaining to $\xi'_<$, let $\struct A\in\efm{(\Phi_1)}^{<}$, standardly ordered $\tau$-guarded $\struct{T}\in\colours(\Phi_1)^{<}$,  and $f\in \binom{\struct A}{\struct{T}}$
be arbitrary; we claim that $f\in  \binom{\xi'_<(\struct A)}{\xi_<(\struct{T})} $.
By definition, the relations of $\xi'_<(\struct A)$ and of $\xi_<(\struct{T})$ are obtained by first embedding $\struct A$ and $\struct{T}$  into $(\struct{C}_1,<)$, then applying $e$, and then pulling back and pruning the relations from $(\struct{C}_2,<)$. 
This definition does not depend on the choice of the embeddings into $(\struct{C}_1,<)$, by the $\tau$-edge-homogeneity of $(\struct{C}_1,<)$ and the canonicity of $e$ -- the claim follows. 

``(\ref{item:recolouring_containment2})$\Rightarrow$(\ref{item:recolouring_containment1})''
Let $\struct{A}\in \fm(\Phi_1)$ be arbitrary; then there exists an expansion $\struct{A}'\in \efm(\Phi_1)^{<}$. By the definition of a recolouring we have $\xi'_<(\struct{A}')\in  \efm(\Phi_2)^{<}$. It also follows from this definition that  for all $\struct{T} \in \colours(\Phi_1)$ and every embedding $e\in \binom{\struct{A}}{\struct{T}}$ we have $e\in \binom{\xi'_<(\struct{A})}{\xi_<(\struct{T})}$. By our choice of $n$, this implies that the $\tau$-reduct of $\xi'_<(\struct{A}')$ coincides with $\struct A$. Hence $\xi'_<(\struct{A}')$ witnesses that  $\struct{A}\in \fm(\Phi_2)$.
\end{proof}

 \subsection{An order out of nowhere}\label{subsec:order_important}
 We now provide an example showing that  the dependence on an external linear order is a necessary component of the notion of a weak edge-recolouring for weakly recolouring-ready GMSNP sentences.
 To this end, consider the following two sentences $\Phi_1$ and $\Phi_2$ in GMSNP over the binary signature $\{E\}$.
 First, we obtain $\Phi_2$  from the sentence in eq.~\eqref{ex:gmsnp_introduction} by  forcing the colours $B$ (``blue'') and $R$ (``red'') to be synchronized with the orientation of the directed edges: 
 \begin{align*} 
  \exists R,B\, \forall x,y,z &\big(  \neg E(x,y) \vee   \neg  E(y,z) \vee \neg E(z,x) \vee \neg B(x,y) \vee   \neg  B(y,z) \vee \neg B(z,x)   
    \big)   \\
     {} \wedge \,  &\big(   \neg  E(x,y) \vee  \neg E(y,z) \vee \neg E(z,x) \vee  \neg R(x,y) \vee  \neg R(y,z) \vee \neg  R(z,x)   
    \big)   \\ 
   {} \wedge \,  & \big( \neg E(x,y) \vee \neg B(x,y) \vee \neg R(x,y) \big) \wedge  \big( \neg E(x,y) \vee B(x,y) \vee R(x,y) \big) \\
   {} \wedge \,  & \big( \neg E(x,y) \vee \neg B(y,x)   \big)  \wedge \big( \neg E(x,y) \vee \neg R(y,x)   \big)
\end{align*} 

 This in particular ensures that the structures in $\fm(\Phi_2)$  \emph{never} have symmetric $E$-edges.
 Second, we obtain $\Phi_1$  from $\Phi_2$ by allowing an additional binary colour $G$ (``green'') of directed edges while not imposing any additional restrictions except for again forcing colours to be synchronized with the orientation of the directed edges:
 \begin{align*}
  \exists R, G, B\, & \forall x,y,z \big(  \neg E(x,y) \vee   \neg  E(y,z) \vee \neg E(z,x) \vee \neg B(x,y) \vee   \neg  B(y,z) \vee \neg B(z,x)   
    \big)   \\
     {} \wedge \,  &\big(   \neg  E(x,y) \vee  \neg E(y,z) \vee \neg E(z,x) \vee  \neg R(x,y) \vee  \neg R(y,z) \vee \neg  R(z,x)   
    \big)   \\ 
   {} \wedge \,  & \big( \neg E(x,y) \vee    \neg B(x,y) \vee \neg R(x,y) \big) \wedge \big( \neg E(x,y) \vee   \neg G(x,y)  \vee \neg R(x,y) \big)  \\  
   {} \wedge \,  & \big( \neg E(x,y) \vee   \neg G(x,y) \vee \neg B(x,y) \big) \wedge \big( \neg E(x,y) \vee G(x,y) \vee B(x,y) \vee R(x,y) \big)    \\
   {} \wedge \, & \big( \neg E(x,y) \vee \neg G(y,x) \big) \wedge    \big( \neg E(x,y) \vee \neg B(y,x)   \big)  \wedge \big( \neg E(x,y) \vee \neg R(y,x)   \big)
\end{align*}

By a similar argument as the one given below Theorem~\ref{thm:hubicka_nesetril}, for both $i\in [2]$, the class $\efm(\Phi_i)$ has the free amalgamation property.
It then follows from the theorem of Ne\v{s}et\v{r}il and R\"{o}dl~\cite{NESETRIL1983183} that, for both $i\in [2]$, $\efm(\Phi_i)^{<}$ has the AP and the RP. 
For $i\in [2]$, let $(\struct{C}_{\Phi_i},<)$ be the Fra\"{i}ss\'{e}-limit of $\efm(\Phi_i)^{<}$. 
Then the homogeneous Ramsey structure $(\struct{C}_{\Phi_i},<)$ directly witnesses that $\Phi_i$ is weakly recolouring-ready.

Observe that $\fm(\Phi_1)$  consists of all finite directed graphs as we can simply colour every directed edge with ``green''.
Also $\fm(\Phi_2)$  consists of all finite directed graphs.
Indeed, for any directed graph $\struct{G}$, we can consider an arbitrary linear-order expansion $(\struct{G},<)$ thereof.
The directed edges which are oriented in the direction of $<$ will be coloured blue, and the remaining directed edges will be coloured red.
This colouring witnesses that $\struct{G}\in \fm(\Phi_2)$ because the added linear order is acyclic.
Hence, $\fm(\Phi_1) \subseteq \fm(\Phi_2)$.
By Proposition~\ref{prop:recolouring_ready_containment}, there must exist a weak edge-recolouring $\xi_<$ from $\Phi_1$ to $\Phi_2$.
In fact, we can easily provide a concrete weak edge-recolouring based on the above idea of taking linear-order expansions of directed graphs.
For every $C\in\{R,G,B\}$:

\smallskip 
\begin{itemize}
    \item $E(x,y) \wedge C(x,y)\wedge (x<y)$ is mapped to $E(x,y) \wedge R(x,y)\wedge (x<y)$;
    \item $E(x,y) \wedge C(x,y)\wedge (x>y)$ is mapped to $E(x,y) \wedge B(x,y)\wedge (x>y)$.
\end{itemize}  
\smallskip 

Finally, note that there is no weak edge-recolouring from $\Phi_1$ to $\Phi_2$ that would be independent of the linear ordering, i.e., by uniformly mapping $E(x,y) \wedge G(x,y)$ to either $E(x,y) \wedge R(x,y)$ or $E(x,y) \wedge B(x,y)$.
The reason is that red and blue directed cycles are forbidden in $\efm(\Phi_2)$ while green directed cycles are allowed in $\efm(\Phi_1)$.
In other words, weak edge-recolourings for weakly recolouring-ready GMSNP sentences are fundamentally linear order-dependent.
Similar phenomena are frequent in research on infinite-domain CSPs and their understanding is of critical importance for achieving further progress in the field~\cite{mottet2024order,pinsker2022current}. 
In the next section, we explain how to circumvent this instance of an order out of nowhere by encoding a ``local linear order'' into the second-order variables of $\Phi_1$ and $\Phi_2$; this will be an essential ingredient in the notion of recolouring-readiness.

\subsection{Recolouring-readiness} \label{sect:recolouring_readiness}
Let $\Phi$ be a connected GMSNP $\tau$-sentence, and let $n\coloneqq \ar(\Phi)$.
We say that $\Phi$ is \emph{edge-partitioned} if there exists a bijection $\pi$  from the $\tau$-guarded structures in $\colours(\Phi)$ to $\sigma$ such that for every $\struct{T}\in \colours(\Phi)$ and every $\struct{A} \in \efm(\Phi)$, we have 
$\struct{A} \models \pi(\struct{T})(a_1,\dots, a_n)$ if and only if $i \mapsto a_i$ is an embedding from $\struct{T}$ to $\struct{A}$. 
Fix $k\in \mathbb{N}$.
We say that $\Phi$ is $k$-\emph{acyclic} if there exists a function $\lambda$ from the $\tau$-guarded structures in $\colours(\Phi)$ to the $\tau$-guarded structures in $\colours(\Phi)^<$ such that the following two conditions are satisfied.
First, $\lambda(\struct{T})$ is a linear-order expansion of $\struct{T}$.
Secondly, for every $(\tau\cup \sigma)$-structure $\struct{A}\in \efm(\Phi)$ with $|A| \leq k$, there exists a linear-order expansion $\lambda'(\struct A)$ of $\struct A$ such that, for every $\tau$-guarded $\struct{T} \in \colours(\Phi)$ and  every embedding $e\in \smash{\binom{\struct{A}}{\struct{T}}}$, we have $e\in \smash{\binom{\lambda'(\struct{A})}{\lambda(\struct{T})}}$.
Finally,   we say that $\Phi$ is $k$-\emph{recolouring-ready} if it is weakly recolouring-ready, edge-partitioned, and $k$-acyclic.
Below, we restate Theorem~\ref{thm:recolouring_readiness} and provide a full proof thereof.
\recolouringreadiness*
\begin{figure}[ht]
     \centering
      \includegraphics[width=0.4\textwidth]{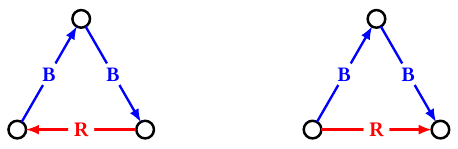}
     \caption{Two $\{E\}$-guarded structures in $\colours(\Phi_1)$ and $\colours(\Phi_2)$ for $\Phi_1$ and $\Phi_2$ as in Section~\ref{subsec:order_important} witnessing that these weakly recolouring-ready GMSNP sentences are not $3$-acyclic.}
     \label{fig:not_3_acyclic}
 \end{figure} 
Before we prove Theorem~\ref{thm:recolouring_readiness}, note that the sentences $\Phi_1$ and $\Phi_2$ in Section~\ref{subsec:order_important} are not $3$-acyclic.
Indeed, every choice of $\lambda$  induces a directed $<$-cycle in the $\lambda'$-image of the canonical database of one of the following two formulas (see Figure~\ref{fig:not_3_acyclic}): 
\begin{align*}
 &  E(x,y)\wedge E(y,z) \wedge E(z,x) \wedge B(x,y) \wedge B(y,z) \wedge R(y,z), \text{ or} \\
 &  E(x,y)\wedge E(y,z) \wedge E(z,x) \wedge B(x,y) \wedge B(y,z) \wedge R(z,y).  
\end{align*}    
\begin{proof}[Proof of Theorem~\ref{thm:recolouring_readiness}]
The proof of the first part extends the proof of Theorem~\ref{thm:recolouring_readiness2}.
For this reason, we only explain how to adjust the original proof.
The idea is to add, in Step~1, a fresh binary symbol $<$ to the signature $\overline{\sigma}$; we denote the so-obtained signature by $\overline{\sigma}_<$.
Then, we include in $\overline{\Phi}$ all clauses $\phi$ with at most $k$ variables so that the canonical database of the forbidden pattern of $\phi$ is a connected $\{<\}$-structure containing a directed $<$-cycle.
We also include the clauses 
\begin{align*} 
     \neg \big(  R(\bar{x}) \wedge  \neg (x_1<x_2) \wedge \neg  (x_2<x_1)\big) \quad \text{and} \quad  \neg \big(   R(\bar{x}) \wedge (x_1<x_2) \wedge  (x_2<x_1) \big)
\end{align*}  
for all symbols $R\in \tau\cup \overline{\sigma}$ and all $x_1,x_2\in \bar{x}$, where $\bar{x}$ is a tuple of fresh variables matching the arity of $R$.
This ensures that the so-obtained GMSNP sentence $\overline{\Phi}_{+,<}$ is $k$-acylic.
Finally, we modify $\overline{\Phi}_{+,<}$ as follows.
Let $n\coloneqq \ar(\overline{\Phi}_{+,<})$.
For every $\tau$-guarded $\struct{T}\in \colours(\overline{\Phi}_{+,<})$ of size $k\leq n$, we introduce a new $k$-ary existential symbol $X_{\struct{T}}$; the set of all such symbols is denoted by $\sigma_{+,<,p}$.
The \emph{atomic diagram} of a structure $\struct{T}$ is the conjunction  extending  its canonical query by listing not only all the atoms that hold in $\struct{T}$, but also their negations~\cite{hodges_book}.
We replace the clauses of $\overline{\Phi}_{+,<}$ by all possible clauses $\psi$ over $\tau\cup \sigma_{+,<,p}$ with at most $\big(\wh(\overline{\Phi}_{+,<})+\ar(\overline{\Phi}_{+,<})\big)$-many variables satisfying the monotonicity and the guarding axioms such that replacing every atom $X_{\struct{T}}(\bar{x})$ in the forbidden pattern of $\psi$ by the atomic diagram of $\struct{T}$ yields a formula that is not satisfiable together with the first-order part of $\overline{\Phi}_{+,<}$.  
This ensures that the so-obtained GMSNP $\tau$-sentence $\overline{\Phi}_{+,<,p}$ is edge-partitioned.
Note that a similar trick was used in the proof of Proposition~\ref{prop:from_GMSNP_to_SNP_with_AP_and_RP}.
It is not hard to see that $\overline{\Phi}_{+,<,p}$ is $k$-acyclic because $\overline{\Phi}_{+,<}$ is $k$-acyclic. 
Moreover, we clearly have $\fm(\overline{\Phi}_{+,<,p})=\fm(\overline{\Phi}_{+,<})$.
It remains to find a structure witnessing that $\overline{\Phi}_{+,<,p}$ is weakly recolouring-ready.
Such a structure can be first-order defined from $\struct{U}$ similarly as $\struct U_{\mo}^{\tau\cup\cplmt{\sigma}_+}$.

Now, we prove the second part of the theorem.
Let $k\in \mathbb{N}$ be such that $k\geq \max(\wh(\Phi_1),\wh(\Phi_2))$, and let $\Phi_1$ and $\Phi_2$ be $k$-recolouring-ready GMSNP $\tau$-sentences.

First, suppose that $\fm(\Phi_1)\subseteq \fm(\Phi_2)$.
Since $\Phi_1$ and $\Phi_2$ are $k$-recolouring-ready, they are in particular weakly recolouring-ready.
Hence, by Proposition~\ref{prop:recolouring_ready_containment}, there exists a weak edge-recolouring $\xi_{<}$ from $\Phi_1$ to $\Phi_2$.
We claim that there exists an edge-recolouring $\xi$ from $\Phi_1$ to $\Phi_2$. Recall that an edge-recolouring $\xi$ is a mapping as in eq.~\eqref{eq:edge_recolouring}.
For $i\in [2]$, let $\lambda_i$ and $\pi_i$ be functions witnessing $k$-acyclicity and edge-partitioning for $\Phi_i$, respectively.
Consider a conjunction  $R(x_1,\dots, x_n) \wedge \alpha_1(x_1,\dots, x_n)$,  
where $R\in \tau$ and $\alpha_1$ is an atomic $\sigma_1$-formula.
Then $\alpha_1$ is of the form $\pi_1(\struct{T}_1)$ for a $\tau$-guarded structure $\struct{T}_1\in \colours(\Phi_1)$ such that, for every $\struct{A} \in \efm(\Phi_1)$, we have $\struct{A} \models \alpha_1(a_1,\dots, a_n)$ if and only if $i \mapsto a_i$ is an embedding from $\struct{T}_1$ to $\struct{A}$. 
Recall that $\lambda_1(\struct{T}_1)$ is a linear-order expansion of $\struct{T}_1$.
We assume that $\lambda_1(\struct{T}_1)$ is standardly ordered, otherwise we reorder the variables $x_1,\dots, x_n$ accordingly; this assumption is of crucial importance for the rest of the proof.
By definition, $\xi_{<}$ maps the standardly ordered structure $(\struct{T}_1,<)$ to a standardly ordered structure
$(\struct{T}_2,<)$, where $\struct{T}_2\in \colours(\Phi_2)$.
We define the image of $R(x_1,\dots, x_n) \wedge \alpha_1(x_1,\dots, x_n)$ under $\xi$ as 
$\pi_2(\struct{T}_2)(x_1,\dots, x_n) \wedge \alpha_2(x_1,\dots, x_n)$.
It remains to verify that $\xi$ is an edge-recolouring from $\Phi_1$ to $\Phi_2$.
Let $\struct{A}_1\in \efm(\Phi_1)$ be arbitrary, and let $\struct{A}_2\coloneqq \xi'(\struct{A}_1)$ be the $(\tau\cup \sigma_2)$-structure obtained from $\struct{A}_1$ by applying $\xi$ to $\tau$-guarded tuples in $\sigma_1$-relations and removing non-$\tau$-guarded tuples from $\sigma_1$-relations.
Without loss of generality, we may assume that the domain size of $\struct{A}_1$ is at most $\wh(\Phi_2)\leq k$.
By the definition of $\lambda_1$, there exists a linear-order expansion $\lambda_1'(\struct{A}_1)$ of $\struct{A}_1$ such that, for every $\tau$-guarded $\struct{T}_1 \in \colours(\Phi_1)$ and  every embedding $e\in  \binom{\struct{A}_1}{\struct{T}_1}$, we have $e\in \binom{\lambda_1'(\struct{A}_1)}{\lambda_1(\struct{T}_1)}$.
Since $\lambda_1'(\struct{A}_1)\in \efm(\Phi_1)^{<}$ and $\xi_{<}$ is a weak edge-recolouring, the extension $\xi'_{<}$ is a well-defined mapping from $\efm(\Phi_1)^{<}$ to $\efm(\Phi_2)^{<}$.
Finally, since $\xi$ is defined via $\xi_{<}$, we have that $\struct{A}_2 = \xi'(\struct{A}_1)$ satisfies the first-order part of $\Phi_2$.  

 Conversely, similarly as in Proposition~\ref{prop:recolouring_ready_containment}, given an edge-recolouring from $\Phi_1$ to $\Phi_2$, we trivially get that $\fm(\Phi_1)\subseteq \fm(\Phi_2)$. 
\end{proof}

\section{Conclusion and outlook}

We proved the decidability of the containment and the FO-rewritability problems for GMSNP, thereby settling an open question posed in~\cite{bienvenu2014,bouhris_lutz2016}.
Our decision procedure runs in non-deterministic 2-exponential time, which exactly matches the lower bound on the complexity obtained in~\cite{bouhris_lutz2016}.

In the proof of Theorem~\ref{thm:2NEXPTIME_for_GMSNP}, we employ structural Ramsey theory to effectively reduce from the containment problem for GMSNP to the problem of testing the existence of a recolouring between SNP sentences.
As mentioned in the introduction, the use of structural Ramsey theory is only one of the possible ways to outsource combinatorics.
In fact, in all works on the containment problem for MMSNP except~\cite{bodirsky2018_article}, a different tool was used, 
commonly known as the \emph{sparse incomparability lemma}~\cite[Thm.~1]{kun2013} (Feder and Vardi originally used a weaker, randomised version of this result, which they attributed to Erd\H{o}s~\cite[Thm.~5]{federvardi1998}). 
It would be interesting to know whether the original approach can be used to reprove our 
\TWONEXPTIME-upper bound on the complexity of the containment problem for GMSNP.

Regarding our approach using structural Ramsey theory, one can clearly see that it has the potential to work in a broader setting than GMSNP.
More specifically, by Proposition~\ref{prop:recolouring_containment}, the $\TWONEXPTIME$ upper bound on the complexity of containment holds for all pairs of SNP sentences whose first-order part defines a class of finite structures with the AP and the RP.
The issue with this statement is that the said fragment of SNP does not have any known explicit description;
it might be difficult to judge Proposition~\ref{prop:recolouring_containment} as a stand-alone contribution 
since the complexity of recognizing which SNP sentences fall within its scope might be high, potentially even undecidable.
For a discussion of the complexity of related meta problems, see~\cite{rydval:LIPIcs.ICALP.2024.150,rydval_arxiv}.
We would be interested in locating virtually any fragment of existential second-order logic that properly generalises GMSNP, can be embedded into SNP with AP and RP similarly as GMSNP, and whose syntax is efficiently verifiable.

As our secondary contribution, we refined the construction of Bodirsky, Kn\"{a}uer, and Starke by adding a restricted form of homogeneity to the properties of these structures; the result is captured by the notion of recolouring-readiness.
In Theorem~\ref{thm:recolouring_readiness}, we showed that from every connected GMSNP sentence one can compute a logically equivalent connected recolouring-ready GMSNP sentence over the same signature.
Moreover the (computable) transformation to connected recolouring-ready GMSNP translates the containment between the finite models of connected recolouring-ready GMSNP sentences to the existence of an edge-recolouring between connected recolouring-ready GMSNP sentences.

\bibliographystyle{plain}
\bibliography{references}
\end{document}